\def\lcm{{\rm lcm}}
\newtheorem{theorem}{Theorem}
\newtheorem{lemma}[theorem]{Lemma}
\newtheorem{corollary}[theorem]{Corollary}
\newtheorem{proposition}[theorem]{Proposition}
\newcommand{\be}{\begin{enumerate}}
\newcommand{\ee}{\end{enumerate}}
\newcommand{\ol}{\overline}
\newcommand{\txt}[1]{\mbox{ #1 }}
\author{Janusz Brzozowski, Jeffrey Shallit, and Zhi Xu\\
School of Computer Science\\
University of Waterloo\\
Waterloo, ON  N2L 3G1 \\
Canada\\
{\small \tt \{brzozo,shallit,z5xu\}@uwaterloo.ca}
}
\title{Decision Problems For Convex Languages}
\begin{document}

\maketitle

\begin{abstract}
In this paper we examine decision problems associated with various
classes of convex languages, studied by Ang and
Brzozowski (under the name ``continuous languages'').
We show that we can decide whether a given language $L$
is {prefix-,} suffix-, \hbox{factor-,} or subword-convex in polynomial
time if $L$ is represented by a DFA, but that the problem is 
PSPACE-hard if $L$ is represented by an NFA.  In the case that a regular
language
is not convex, we prove tight upper bounds on the length of the shortest
words demonstrating this fact, in terms of the number of states of an
accepting DFA.  Similar results are proved for some
subclasses of convex languages:  the
prefix-, suffix-, factor-, and subword-closed languages, and
the prefix-, suffix-, factor-, and subword-free languages.
\end{abstract}

\section{Introduction}

Thierrin \cite{Thierrin:1973} introduced convex languages with
respect to the subword relation.
Ang and Brzozowski \cite{Ang&Brzozowski:2008} generalized this concept to
arbitrary relations.
For example, a language $L$ is
said to be {\it prefix-convex\/} if, whenever $u, w \in L$ with
$u$ a prefix of $w$, then any word $v$ must also be in $L$ if
$u$ is a prefix of $v$ and $v$ is a prefix of $w$.
Similar definitions hold for suffix-, factor-, and
subword-convex languages.  (In this paper,
a ``factor'' is a contiguous block inside another
word, while a ``subword'' need not be contiguous. 
In the literature, these concepts are
sometimes called ``subword'' and ``subsequence'', respectively.)

A language is said to be {\it prefix-free\/} if whenever $w \in L$, then
no proper prefix of $w$ is in~$L$.  (By {\it proper\/} we mean a prefix
of $w$ other than $w$ itself.)
Prefix-free languages (prefix codes) were studied by Berstel and 
Perrin \cite{Berstel&Perrin:1985}.
Han has recently considered  $X$-free languages for various  values of $X$, such
as prefix, suffix, factor and subword~\cite{Han:2007}.

A language is said to be {\it prefix-closed\/} if whenever $w\in L$, then
every prefix of $w$ is also in~$L$.  Analogous definitions hold for
suffix-, factor-, and subword-closed languages.  A~factor-closed language
is often called {\it factorial\/}.

In this paper we consider the computational complexity of testing whether
a given language has the property of being prefix-convex, suffix-convex,
etc., prefix-closed, suffix-closed, etc., for a total of 12 different
problems.  
As we will see, the computational complexity of these decision problems
depends on how the language is represented.  If it is represented as the
language accepted by a DFA, then the decision problem is solvable in 
polynomial time.  On the other hand, if it is represented as a regular
expression or an NFA, then the decision problem is PSPACE-complete.  We also
consider the following question:  given that a language is {\it not\/}
prefix-convex, suffix-convex, etc., what is a good upper bound on the
shortest words ({\it shortest witnesses\/}) demonstrating this fact?

The remainder of the paper is structured as follows. In
Section~\ref{convexity} we study the complexity of testing for
convexity for languages represented by DFA's, and we include testing
for closure and freeness as special cases.  In Section~\ref{witnesses}
we exhibit shortest witnesses to the failure of the convexity
property.  Convex languages specified by NFA's are studied in
Section~\ref{NFAs}.  We also briefly consider convex languages
specified by context-free grammars in Section~\ref{CFGs}.
Section~\ref{conclusions} concludes the paper.

\section{Deciding convexity for DFA's}
\label{convexity}

     We will show that,
if a regular language $L$
is represented by a DFA $M$ with $n$ states,
it is possible to test the property of prefix-, suffix-, factor-, and
subword-convexity efficiently.  More precisely, we can test these properties
in $O(n^3)$ time.

Let $\unlhd$ be one of the four relations prefix, suffix,
factor, or subword.
The basic idea is as follows:  $L$ is {\it
not\/} $\unlhd$-convex if and only if there exist words
$u, w \in L$, $v \not\in L$, such that $u \unlhd v \unlhd w$.  Given $M$, we
create an NFA-$\epsilon$ $M'$ with $O(n^3)$ states and
transitions that accepts the language
$$ \lbrace w \in L(M) \ : \ \text{there exist } u \in L(M),
v \not\in L(M) \text{ such that } u \unlhd v \unlhd w \rbrace.$$
Then $L(M') = \emptyset$ if and only if $L(M)$ is $\unlhd$-convex.
We can test the emptiness of $L(M')$ using depth-first search in time
linear in the size of $M'$.  This gives an $O(n^3)$ algorithm for
testing the $\unlhd$-convex property.

     Since the constructions for all four properties are similar,
in the next subsection we handle the hardest case (factor-convexity) in
detail.  In the following subsections we content ourselves with a brief
sketch of the necessary constructions.

\subsection{Factor-convexity}
\label{factorc}

      Suppose $M = (Q, \Sigma, \delta, q_0, F)$ is a DFA accepting
the language $L = L(M)$, and suppose $M$ has $n$ states.  
We now construct an NFA-$\epsilon$ $M'$ such that
\begin{eqnarray*}
L(M') &=& \lbrace w \in \Sigma^* \ : \ \text{there exist} \ 
u, v \in \Sigma^* \ \text{such that} \  u \ \text{is a factor of}\  v, \\
&&
v \ \text{is a factor of}\ w, \text{and}\ u, w \in L, v \not\in L \rbrace.
\end{eqnarray*}
Clearly $L(M') = \emptyset$ if and only if $L(M)$ is factor-convex.

     Here is the construction of $M'$.  States of $M'$ are
quadruples, where components $1$, $2$, and $3$ keep 
track of where $M$ is upon
processing $w$, $v$, and $u$ (respectively).  The last component is a flag
indicating the present {\em mode\/} of the simulation process.
\medskip

     Formally, $M' = (Q', \Sigma, \delta', q'_0, F')$, where
\begin{eqnarray*}
Q' &=& Q \times Q \times Q \times \lbrace 1,2,3,4,5 \rbrace; \\
q'_0 &=& [q_0, q_0, q_0, 1]; \\
F' &=& F \times (Q-F) \times F \times \lbrace 5 \rbrace;\\
\mbox{1. }  \delta'([p, q_0, q_0, 1], a) &=& \lbrace [\delta(p,a), q_0, q_0, 1 ] \rbrace,
	\mbox{ for all } p \in Q, a \in \Sigma;\\
\mbox{2. }\hspace{.051cm} \delta'([p, q_0, q_0, 1], \epsilon) &=& \lbrace [p, q_0, q_0, 2] \rbrace,
	\mbox{ for all } p \in Q;\\
\mbox{3. }\hspace{.14cm} \delta'([p, q, q_0, 2], a) &=& \lbrace [ \delta(p,a), \delta(q,a), q_0, 2] \rbrace, 
	   \mbox{ for all } p, q \in Q, a \in \Sigma;\\
\mbox{4. }\hspace{.18cm} \delta'([p, q, q_0, 2], \epsilon) &=& \lbrace [ p, q, q_0, 3] \rbrace,
	\mbox{ for all } p, q \in Q;\\
\mbox{5. }\hspace{.28cm}\delta'([p, q, r, 3], a) &=& \lbrace [\delta(p,a), \delta(q,a),\delta(r,a), 3] \rbrace, 
	 \mbox{ for all } p, q, r \in Q, a \in \Sigma;\\
\mbox{6. }\hspace{.32cm} \delta'([p,q,r,3],\epsilon) &=& \lbrace [p,q,r,4] \rbrace,
	 \mbox{ for all } p, q, r \in Q;\\
\mbox{7. }\hspace{.28cm} \delta'([p,q,r,4], a) &=& \lbrace[\delta(p,a),\delta(q,a), r, 4] \rbrace,
	\mbox{ for all } p, q, r \in Q, a \in \Sigma;\\
\mbox{8. }\hspace{.32cm} \delta'([p,q,r,4], \epsilon) &=& \lbrace [p,q,r,5] \rbrace,
	\mbox{ for all } p, q, r \in Q;\\
\mbox{9. }\hspace{.28cm} \delta'([p,q,r,5], a) &=&  \lbrace [\delta(p,a), q, r, 5 ] \rbrace,
	\mbox{ for all } p, q, r \in Q, a \in \Sigma.
\end{eqnarray*}
One verifies that the NFA-$\epsilon$ $M'$ has $3n^3 + n^2 +n$ states and
$(3|\Sigma| + 2)n^3 + (|\Sigma|+1)(n^2 + n)$ transitions, where $|\Sigma|$ is the cardinality of $\Sigma$.

To see that the construction is correct, suppose $L$ is not factor-convex.
Then there exist words $u, v, w$ such that $u$ is a factor of $v$, $v$ is
a factor of $w$, and $u,w \in L$ while $v \not\in L$.
Then there exist words $u', u'', v', v''$ such that
such that $v=u'uu''$ and $w=v'vv''=v'u'uu''v''$.
Let $\delta(q_0,v')=q_1$,
$\delta(q_1, u')=q_2$,
$\delta(q_2,u)=q_3$,
$\delta(q_3,u'')=q_4$, and 
$\delta(q_4,v'')=q_5$. 
Moreover, let 
$\delta(q_0,u')=q_a$, $\delta(q_a,u)=q_b$, and $\delta(q_b,u'')=q_c$, and 
$\delta(q_0,u)=q_\alpha$.  Since $u, w \in L$, we know that $q_\alpha$ and
$q_5$ are accepting states.  Since $v \not\in L$, we know that
$q_c$ is not accepting.

Automaton $M'$ operates as follows.   In the initial state
$[q_0, q_0, q_0, 1]$ we process the symbols of $v'$ using Rule 1, ending in
the state $[q_1, q_0, q_0,1]$.  At this point, we use Rule 2 to move
to $[q_1, q_0, q_0,2]$ by an $\epsilon$-move.  Next, we process the
symbols of $u'$ using Rule 3, ending in the state
$[q_2, q_a, q_0, 2]$.  Then we use Rule 4 to move to
$[q_2, q_a, q_0, 3]$ by an $\epsilon$-move.  Next, we process the symbols
of $u$ using Rule 5, ending in the state $[q_3, q_b, q_\alpha, 3]$.
Then we use Rule 6 to move to $[q_3, q_b, q_\alpha, 4]$ by an
$\epsilon$-move.  Next, we process the symbols of $u''$ using Rule 7,
ending in the state $[q_4, q_c, q_\alpha, 4]$.  Then we use Rule 8 to move
to $[q_4, q_c, q_\alpha, 5]$ by an $\epsilon$-move.  Finally, we process the
symbols of $v''$ using 
Rule 9, ending in  the state $[q_5, q_c, q_\alpha, 5]$, and this state
is in $F'$.

On the other hand, suppose $M'$ accepts the input $w$.  Then 
we must have $\delta'(q'_0, w) \cap F'\not=\emptyset$.  But the only way to reach 
a state in $F'$ is, by our construction,
to apply Rules 1 through 9 in that order, where odd-numbered rules can
be used any number of times, and even-numbered rules can be used only
once.  Letting $v', u', u, u'', v''$ be the words labeling the uses of
Rules 1, 3, 5, 7, and 9, respectively, we see that
$w = v' u' u u'' v''$, where 
$\delta(q_0, w) \in L$, $\delta(q_0, u) \in L$, and
$\delta(q_0, u' u u'') \not\in L$.    It follows that
$u, w \in L$ and $v = u' u u'' \not\in L$, and so $L$ is not factor-convex.

\bigskip
We have proved

\begin{theorem}
\label{TDFAfactorcont}
     If $M$ is a DFA with $n$ states, there exists an NFA-$\epsilon$ $M'$
with $O(n^3)$ states and transitions such that $M'$ accepts the language
\begin{eqnarray*}
L(M') &=& \lbrace w \in \Sigma^* \ : \ \text{there exist} \ 
u, v \in \Sigma^* \ \text{such that} \  u \ \text{is a factor of}\  v, \\
&&
v \ \text{is a factor of}\ w, \text{and}\ u, w \in L, v \not\in L \rbrace.
\end{eqnarray*}
\end{theorem}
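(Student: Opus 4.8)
The plan is to build $M'$ as a layered product of several copies of $M$, using the mode flag to sequence the layers. Membership of $w$ in the target language is witnessed by a decomposition $w = v'u'uu''v''$ in which $v = u'uu''$, so that $u$ is a factor of $v$ and $v$ is a factor of $w$; accordingly, I would design $M'$ to guess this decomposition on the fly while simultaneously simulating $M$ on $u$, on $v = u'uu''$, and on $w$. A state is a quadruple $[p,q,r,i]$, where $p$ records the state of $M$ after the prefix of $w$ read so far, $q$ the state of $M$ after the corresponding prefix of $v$, $r$ the state of $M$ after the corresponding prefix of $u$, and $i \in \{1,2,3,4,5\}$ is the current phase. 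Phase $1$ reads $v'$ (advancing only $p$), phase $2$ reads $u'$ (advancing $p$ and $q$), phase $3$ reads $u$ (advancing $p$, $q$, and $r$), phase $4$ reads $u''$ (advancing $p$ and $q$), phase $5$ reads $v''$ (advancing only $p$); each pair of consecutive phases is joined by a single $\epsilon$-transition, and the start state is $[q_0,q_0,q_0,1]$. The accepting set is $F \times (Q-F) \times F \times \{5\}$, which encodes exactly the requirement $w \in L$, $v \notin L$, $u \in L$.

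For correctness I would prove both inclusions. For $\supseteq$: given witnesses $u,v,w$, write $v = u'uu''$ and $w = v'vv''$, then feed the five blocks $v',u',u,u'',v''$ through phases $1$ to $5$; tracing the definition of the transition function shows the run ends in $[\delta(q_0,w), \delta(q_0,v), \delta(q_0,u), 5]$, which lies in $F'$ by the hypotheses on $u,v,w$. For $\subseteq$: inspecting the transition rules shows that the phase index is nondecreasing along any run and that each $\epsilon$-rule raises it by exactly one, so any accepting run passes through phases $1,2,3,4,5$ in order; letting $v',u',u,u'',v''$ be the factors consumed during these phases, the three-coordinate bookkeeping yields $\delta(q_0,w) \in F$, $\delta(q_0,u'uu'') \notin F$, and $\delta(q_0,u) \in F$, so with $v = u'uu''$ we see that $w$ lies in the target language.

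For the size bounds I would simply count: phase $1$ needs only $|Q|$ states (the last two coordinates stay pinned to $q_0$), phase $2$ needs $|Q|^2$, and phases $3$, $4$, $5$ each need $|Q|^3$, giving $3n^3 + n^2 + n$ states; a similar tally over the nine transition rules gives $O(n^3)$ transitions. Hence both quantities are $O(n^3)$, and emptiness of $L(M')$ can then be tested by depth-first search in time linear in the size of $M'$.

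I expect the main obstacle to be the $\subseteq$ direction, and within it two points in particular: verifying that the transition rules genuinely force the phases to occur in the stated order — so that no run that skips a phase, repeats the wrong advancing pattern, or uses an even-numbered rule more than once can ever reach $F'$ — and checking that the invariant "$q = \delta(q_0, \text{the portion of } v \text{ read so far})$" is preserved exactly across the phase boundaries (the delicate case being the transition from phase $2$ into phase $3$, where $r$ begins its run at $q_0$ while $q$ continues). Once the construction and these invariants are pinned down, the remaining verification is routine bookkeeping.
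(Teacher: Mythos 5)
Your construction is exactly the one in the paper: the same five-phase quadruple-state NFA-$\epsilon$, the same accepting set $F \times (Q-F) \times F \times \lbrace 5 \rbrace$, the same two-directional correctness argument, and the same count of $3n^3+n^2+n$ states. The proof is correct and essentially identical to the paper's.
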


\begin{corollary}
\label{CDFAfactorcont}
We can decide if a given regular language $L$ accepted by a DFA with
$n$ states is factor-convex in $O(n^3)$ time.
\label{thm1}
\end{corollary}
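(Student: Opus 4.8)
The plan is to derive the corollary almost immediately from Theorem~\ref{TDFAfactorcont} by reducing the factor-convexity question to an emptiness test on the automaton $M'$ constructed there. First I would invoke the theorem together with the correctness argument preceding it: $L(M')=\emptyset$ if and only if $L$ is factor-convex. Hence it suffices to construct $M'$ and decide whether it accepts any word, and to show that both steps run in $O(n^3)$ time.

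Next I would account for the construction cost. The state set $Q'$ has $3n^3+n^2+n$ elements and the transition relation has $(3|\Sigma|+2)n^3+(|\Sigma|+1)(n^2+n)$ triples; given $\delta$ as a table, every rule (1 through 9) can be instantiated in time proportional to the number of triples it produces, so $M'$ can be written down explicitly in $O(n^3)$ time, treating $|\Sigma|$ as a constant. A small point worth making explicit is that the $\epsilon$-transitions from Rules 2, 4, 6, and 8 cause no difficulty: an emptiness test never requires computing $\epsilon$-closures, since reachability in a directed graph treats $\epsilon$-labeled and $a$-labeled edges identically.

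Finally, I would perform a single depth-first (or breadth-first) search of the underlying transition digraph of $M'$ starting from $q'_0=[q_0,q_0,q_0,1]$, and test whether any state of $F'=F\times(Q-F)\times F\times\{5\}$ is reached. This takes time linear in $|Q'|+|\delta'|=O(n^3)$. If such a state is reachable, then $L(M')\neq\emptyset$ and $L$ is not factor-convex; otherwise $L$ is factor-convex. Adding the $O(n^3)$ construction time to the $O(n^3)$ search time yields the claimed bound. There is no genuinely hard step here; the only thing requiring care is verifying that neither the construction of $M'$ nor the emptiness check (in the presence of $\epsilon$-moves) costs more than linear in the size of $M'$.
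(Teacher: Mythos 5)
Your proposal is correct and follows the paper's own argument exactly: reduce factor-convexity to emptiness of $L(M')$ via Theorem~\ref{TDFAfactorcont}, then test emptiness by depth-first search in time linear in the $O(n^3)$ size of $M'$. The extra remarks on the construction cost and on $\epsilon$-transitions being harmless for reachability are sensible elaborations of the same argument, not a different route.
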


\begin{proof}
Since $L$ is factor-convex if and only if $L(M') = \emptyset$,
it suffices to check if 
$L(M') = \emptyset$
using depth-first search of a directed graph, in time linear in the number
of vertices and edges of $M'$.  
\end{proof}

\subsubsection{Factor-closure}
\label{fac-clo-subsec}

The language
$L$ is {\it not\/} factor-closed if and only if there exist 
words $v, w$ such that $v$ is a factor of $w$,
and $w \in L$, while $v \not\in L$.    

Given a DFA $M$ accepting $L$,
we construct from $M$ an NFA-$\epsilon$ $M'$ such that
\begin{eqnarray*}
L(M') &=& \lbrace w \in \Sigma^* \ : \ \text{there exist} \ 
v, w \in \Sigma^* \ \text{such that} \ v \ \text{is a factor of}\ w,   \\
&&
 \text{and}\ w \in L, v \not\in L \rbrace.
\end{eqnarray*}
As before, $L(M') = \emptyset$ if and only if $L(M)$ is factor-closed.
The size of $M'$ is $O(n^2)$.  

States of $M'$ are
triples, where components $1$ and $2$ keep 
track of where $M$ would be upon
processing $w$, and $v$ (respectively).  The last component is a flag
as before.
\medskip

     Formally, $M' = (Q', \Sigma, \delta', q'_0, F')$, where
\begin{eqnarray*}
Q' &=& Q \times Q \times \lbrace 1,2,3 \rbrace; \\
q'_0 &=& [q_0, q_0, 1]; \\
F' &=& F \times (Q-F) \times \lbrace 3 \rbrace;
\quad\text{and}
\end{eqnarray*}
\be
\item
$\delta'([p, q_0, 1], a) = \lbrace [\delta(p,a), q_0, 1 ] \rbrace$
	for $p \in Q$, \ $a \in \Sigma$.

\item
$\delta'([p, q_0, 1], \epsilon) = \lbrace [p, q_0, 2] \rbrace$,
	for all $p \in Q$;
\item
$\delta'([p, q, 2], a) = \lbrace [ \delta(p,a), \delta(q,a),2]
	\rbrace$, for all $p, q \in Q$;
\item
$\delta'([p, q, 2], \epsilon) = \lbrace [ p, q, 3] \rbrace$,
	for all $p, q \in Q$;
\item
$\delta'([p,q,3], a) =  \lbrace [\delta(p,a), q, 3 ] \rbrace$,
	for $p, q \in Q$, $a \in \Sigma$.
\ee
$M'$ has $2n^2 +n$ states and
$(2|\Sigma| + 1)n^2 + (|\Sigma|+1)$ transitions.
Thus we have:

\begin{theorem}
\label{TDFAfactorclosure}
We can decide if a given regular language $L$ accepted by a DFA with
$n$ states is factor-closed in $O(n^2)$ time.
\label{thm11}
\end{theorem}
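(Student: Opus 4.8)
The plan is to mirror the structure of the proof of Corollary~\ref{CDFAfactorcont}, using the NFA-$\epsilon$ $M'$ just constructed for factor-closure. First I would verify that the displayed construction of $M'$ is correct, i.e. that $L(M')$ is exactly the set of $w \in L$ for which some factor $v$ of $w$ fails to lie in $L$; this is the one nontrivial step. Then, since $L$ is factor-closed precisely when no such $w$ exists, we have $L$ factor-closed if and only if $L(M') = \emptyset$, and emptiness of $L(M')$ is testable by depth-first search in time linear in the size of $M'$, which has $O(n^2)$ states and transitions; hence the $O(n^2)$ bound follows.

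For the correctness of $M'$, I would argue both directions. Suppose $L$ is not factor-closed: pick $w \in L$ and a factor $v \notin L$, say $w = xvy$. Let $q_1 = \delta(q_0, x)$, then trace $M'$: start in $[q_0, q_0, 1]$, process $x$ by Rule~1 to reach $[q_1, q_0, 1]$, take the $\epsilon$-move of Rule~2 to $[q_1, q_0, 2]$, process $v$ by Rule~3 to reach $[\delta(q_1, v), \delta(q_0, v), 2]$, take Rule~4 to $[\delta(q_1,v), \delta(q_0,v), 3]$, and process $y$ by Rule~5 to reach $[\delta(q_0, w), \delta(q_0, v), 3]$. Since $w \in L$ the first component is in $F$, and since $v \notin L$ the second component is in $Q - F$, so this is an accepting state, proving $w \in L(M')$. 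Conversely, if $M'$ accepts some input $w$, then an accepting computation must use Rules~1,\,2,\,3,\,4,\,5 in that order (the flag-component only increases, the $\epsilon$-rules~2 and~4 are used exactly once, and the letter-rules can be used arbitrarily often within their mode); letting $x$, $v$, $y$ label the applications of Rules~1,\,3,\,5 respectively gives $w = xvy$ with $\delta(q_0, w) \in F$ and $\delta(q_0, v) \in Q - F$, i.e. $w \in L$, $v \notin L$, and $v$ a factor of $w$, so $L$ is not factor-closed.

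The only real obstacle is the bookkeeping in the converse direction: one must check that every accepting path of $M'$ decomposes as claimed, which amounts to observing that the fourth component of each state is nondecreasing along any path, that it starts at $1$ and must end at $3$, that the transitions from mode~$1$ require the second component to stay at $q_0$ until the Rule~2 $\epsilon$-move, and similarly that once in mode~$3$ the second component is frozen. Once this is granted, the rest is immediate. I would close by stating the complexity bound explicitly, noting that $M'$ has $2n^2 + n$ states and $(2|\Sigma|+1)n^2 + (|\Sigma|+1)$ transitions, so both constructing $M'$ and testing its emptiness take $O(n^2)$ time.

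\begin{proof}
As argued above, the NFA-$\epsilon$ $M'$ constructed from $M$ satisfies
$$L(M') = \lbrace w \in \Sigma^* \ : \ \text{there exists a factor } v \text{ of } w \text{ with } w \in L, \ v \notin L \rbrace,$$
and $L$ is factor-closed if and only if $L(M') = \emptyset$. To see the inclusion $\subseteq$, note that any accepting computation of $M'$ has a nondecreasing fourth component which starts at $1$ and ends at $3$; Rules~2 and~4 (the only $\epsilon$-moves) are each used exactly once, so the input factors as $w = xvy$ where $x$, $v$, $y$ label the applications of Rules~1, 3, 5. Reaching an accepting state forces $\delta(q_0, w) \in F$ and $\delta(q_0, v) \in Q - F$, so $w \in L$, $v \notin L$, and $v$ is a factor of $w$. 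For $\supseteq$, given $w = xvy$ with $w \in L$ and $v \notin L$, processing $x$ by Rule~1, then an $\epsilon$-move by Rule~2, then $v$ by Rule~3, then an $\epsilon$-move by Rule~4, then $y$ by Rule~5, reaches the state $[\delta(q_0,w), \delta(q_0,v), 3] \in F'$. Since $M'$ has $2n^2 + n$ states and $(2|\Sigma|+1)n^2 + (|\Sigma|+1)$ transitions, we can construct $M'$ and test $L(M') = \emptyset$ by depth-first search in $O(n^2)$ time.
\end{proof}
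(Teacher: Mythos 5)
Your proposal is correct and follows exactly the paper's approach: build the two-component (plus flag) NFA-$\epsilon$ $M'$ of size $O(n^2)$, observe that $L$ is factor-closed iff $L(M')=\emptyset$, and test emptiness by depth-first search. The paper itself leaves the correctness of $M'$ implicit (relying on the detailed argument given for factor-convexity), whereas you spell it out; the extra detail is accurate and changes nothing essential.
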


This result was previously obtained by B\'eal et 
al.\ \cite[Prop.\ 5.1, p.\ 13]{Beal&Crochemore&Mignosi&Restivo&Sciortino:2003}
through a slightly different approach.

The converse of the relation ``$u$ is a factor of $v$" is ``$v$
contains $u$ as a factor''.  This converse relation and similar
converse relations, derived from the prefix, suffix,  and subword
relations, lead to ``converse-closed
languages''~\cite{Ang&Brzozowski:2008}.  It has been shown by de Luca
and Varricchio~\cite{deLuca&Varricchio:1990} that a language $L$ is
factor-closed (factorial, in their terminology) if and only if it is a
complement of an ideal, that is, if and only if
$L = \overline{\Sigma^* K \Sigma^*}$ for some $K \subseteq \Sigma^*$.
Ang and Brzozowski~\cite{Ang&Brzozowski:2008} 
noted that a language is an ideal if and only
if it is converse-factor-closed, that is, if, for every $u\in L$, each
word of the form $v=xuy$ is also in $L$.  Thus, to test whether $L$ is
converse-factor-closed, we must check that there is no pair $(u,v)$
such that $u\in L$, $v\not\in L$, and $u$ is a factor of $v$.  This is
equivalent to testing whether $\overline{L}$ is factor-closed.  Then
the following  is an immediate consequence of
Theorem~\ref{TDFAfactorcont}:

\begin{corollary}
\label{CDFAconversefactorclosure}
We can decide if a given regular language $L$ accepted by a DFA with
$n$ states is an ideal in $O(n^2)$ time.
\end{corollary}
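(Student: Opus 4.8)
The plan is to reduce the problem to testing factor-closure of the complement language, and then invoke Theorem~\ref{TDFAfactorcont} (or its specialization to factor-closure) together with a cheap complementation step. First I would recall, as explained in the paragraph preceding the statement, that $L$ is an ideal (equivalently, converse-factor-closed) if and only if there is no pair $(u,v)$ with $u \in L$, $v \notin L$, and $u$ a factor of $v$; equivalently, there is no pair $(v',w')$ with $v' \in \overline{L}$, $w' \notin \overline{L}$ (i.e.\ $w' \in L$), and $w'$ a factor of $v'$. Renaming, this says exactly that $\overline{L}$ is factor-closed: whenever a word lies in $\overline{L}$, every factor of it also lies in $\overline{L}$.

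Next I would handle the complementation. Since $L$ is given by a DFA $M = (Q,\Sigma,\delta,q_0,F)$ with $n$ states, the DFA $\overline{M} = (Q,\Sigma,\delta,q_0,Q-F)$ also has exactly $n$ states and accepts $\overline{L}$; this construction is immediate and takes $O(n)$ time (indeed $O(|\Sigma|\,n)$ if we count the transition table). Then I would apply the factor-closure construction of Section~\ref{fac-clo-subsec} to $\overline{M}$: this yields an NFA-$\epsilon$ $M'$ with $O(n^2)$ states and transitions such that $L(M') = \emptyset$ if and only if $\overline{L}$ is factor-closed, i.e.\ if and only if $L$ is an ideal.

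Finally, I would test emptiness of $L(M')$ by depth-first search of the underlying transition graph, in time linear in the number of states and transitions of $M'$, hence $O(n^2)$. Combining the three steps — $O(n)$ to complement $M$, $O(n^2)$ to build $M'$, and $O(n^2)$ to check emptiness — gives the claimed $O(n^2)$ bound. One small subtlety I would want to state carefully is the direction of the factor relation: ``ideal'' flips ``$u$ a factor of $v$, $u$ in, $v$ out'' to ``$w$ a factor of $v'$, $v'$ in $\overline{L}$, $w$ out of $\overline{L}$,'' which is precisely the factor-closure condition for $\overline{L}$ with the roles renamed; getting this bookkeeping right is really the only point requiring attention, and it is routine. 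There is no serious obstacle here — the result is genuinely an immediate corollary once the equivalence with factor-closure of the complement is spelled out.
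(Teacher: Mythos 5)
Your proposal is correct and follows essentially the same route as the paper: reduce ``$L$ is an ideal'' to ``$\overline{L}$ is factor-closed,'' complement the DFA for free, and apply the $O(n^2)$ factor-closure test of Section~\ref{fac-clo-subsec}. The only difference is that you spell out the complementation and the renaming of the pair explicitly, which the paper leaves implicit (and you correctly point to the factor-closure construction where the paper's citation of Theorem~\ref{TDFAfactorcont} appears to be a typo for Theorem~\ref{TDFAfactorclosure}).
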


The results above also apply to other converse-closed languages.  Similarly,
any result about the size of witness demonstrating the lack of
prefix-, suffix- and subword-closure  apply also to the witness
demonstrating the lack of converse-prefix-, converse-suffix- and
converse-subword-closure, respectively.  Subword-closed and
converse-subword-closed languages were also investigated and
characterized by Thierrin \cite{Thierrin:1973}.

\subsubsection{Factor-freeness}
\label{Sssfactorfree}

Factor-free languages (also known as infix-free)
have recently been studied
by Han et al.\ \cite{Han&Wang&Wood:2006}; they gave an efficient algorithm
for determining if the language accepted by an NFA is
prefix-free, suffix-free, or factor-free.

We can decide whether a DFA language is factor-free in $O(n^2)$ time with the automaton we used for testing factor-closure, except that the set of accepting states is now 
$$F' = F \times F \times \lbrace 3 \rbrace.$$

Similar results hold for prefix-free, suffix-free, and subword-free languages.

\subsection{Prefix-convexity}
\label{prefixc}

     Prefix convexity can be tested in an analogous fashion.
We give the construction of $M'$ without proof:
let $M' = (Q', \Sigma, \delta', q'_0, F')$, where
\begin{eqnarray*}
Q' &=& Q \times Q \times Q \times \lbrace 1,2,3 \rbrace; \\
q'_0 &=& [q_0, q_0, q_0, 1]; \\
F' &=& F \times (Q-F) \times F \times \lbrace 3 \rbrace;\\
\delta'([p,q,r,1],a) &=& \lbrace [ \delta(p,a), \delta(q,a), \delta(r,a), 1 ]
\rbrace \ \quad \text{ for } p, q, r \in Q, \ a \in \Sigma; \\
\delta'([p,q,r,1], \epsilon) &=& \lbrace [ p,q,r,2] \rbrace  
\ \quad \text{ for } p, q, r \in Q; \\
\delta'([p,q,r,2], a) &=& \lbrace [ \delta(p,a), \delta(q,a), r, 2 ]
\rbrace \ \quad \text{ for } p, q, r \in Q, \ a \in \Sigma; \\
\delta'([p,q,r,2], \epsilon) &=& \lbrace [ p,q,r,3] \rbrace  
\ \quad \text{ for } p, q, r \in Q; \\
\delta'([p,q,r,3], a) &=& \lbrace [ \delta(p,a), q, r, 3 ]
\rbrace \ \quad \text{ for } p, q, r \in Q,\ a \in \Sigma.
\end{eqnarray*}
The NFA $M'$ has $3n^3$ states and $3(|\Sigma|+1)n^3$ transitions.

\subsubsection{Prefix-closure}

By varying the construction as before, we have

\begin{theorem}
\label{pswclosure}
We can decide if a given regular language $L$ accepted by a DFA with
$n$ states is prefix-closed, suffix-closed, or subword-closed in
$O(n^2)$ time.
\end{theorem}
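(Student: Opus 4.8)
The plan is to mirror the construction already carried out for factor-closure in Section~\ref{fac-clo-subsec}, adapting it to the prefix, suffix, and subword relations in turn. In each case the language $L$ accepted by the $n$-state DFA $M$ fails to be $\unlhd$-closed (for $\unlhd$ one of prefix, suffix, subword) if and only if there exist words $v, w$ with $v \unlhd w$, $w \in L$, and $v \notin L$. So it suffices to build, from $M$, an NFA-$\epsilon$ $M'$ with $O(n^2)$ states and transitions accepting exactly those $w \in \Sigma^*$ for which such a $v$ exists, and then test $L(M') = \emptyset$ by depth-first search in time linear in the size of $M'$, as in the proof of Corollary~\ref{CDFAfactorcont}.

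First I would handle the prefix case, which is the simplest. A witness to the failure of prefix-closure is a word $w = vx$ with $w \in L$ and $v \notin L$. The NFA $M'$ guesses the split point: it runs two copies of $M$ in parallel from $q_0$ (tracking the state on the prefix read so far in both copies), at some point takes an $\epsilon$-transition into a second mode in which it freezes the ``$v$-copy'' and continues only the ``$w$-copy'' on the remaining symbols $x$; it accepts if the frozen copy is in $Q-F$ and the live copy ends in $F$. This needs state set $Q \times Q \times \{1,2\}$, i.e.\ $O(n^2)$ states, and $O(|\Sigma|n^2)$ transitions. Next I would do the suffix case symmetrically: here $w = xv$ with $w \in L$, $v \notin L$, so $M'$ first runs the $w$-copy alone on the prefix $x$, then $\epsilon$-guesses that the suffix $v$ begins, at which point it starts a second copy of $M$ from $q_0$ and runs both copies in lockstep on $v$; it accepts when the $w$-copy is in $F$ and the $v$-copy is in $Q-F$. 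Again $O(n^2)$ states. The subword case combines the two ideas in the same spirit as the factor construction of Section~\ref{factorc}, but with only two $Q$-components instead of three: $M'$ alternates, under $\epsilon$-control, between ``both copies advance'' phases (reading letters of $v$, which are also letters of $w$) and ``only the $w$-copy advances'' phases (reading letters of $w$ not in $v$); since a subword may interleave arbitrarily with the gap letters, the flag component must allow returning from the ``$w$-only'' mode back to the ``both'' mode, so the flag ranges over a small fixed set and the state count stays $O(n^2)$. One verifies, exactly as in the factor-closure argument, that a path from $q'_0$ to $F'$ reads precisely a word $w$ admitting a subword $v \notin L$ with $w \in L$, and conversely.

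The routine verifications — that each $M'$ has the claimed $O(n^2)$ size, and that $L(M') = \emptyset \iff L$ is $\unlhd$-closed — follow the template already spelled out for factor-closure, so I would only sketch them. The main obstacle, such as it is, is getting the $\epsilon$-transition ``mode'' discipline right in the subword case so that $M'$ accepts exactly the intended language: one must ensure the interleaving of ``both advance'' and ``$w$ only'' phases can realize an arbitrary embedding of $v$ as a subword of $w$, while the guessed $v$ is genuinely tracked in a single run of $M$ (not reset between phases) so that the final test $v \notin L$ is meaningful. Once the transition tables are written down, both directions of correctness are immediate, and the $O(n^2)$-time decision procedure follows by the same emptiness test via depth-first search used in Corollary~\ref{CDFAfactorcont}.
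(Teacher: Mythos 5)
Your proposal is correct and follows essentially the same route as the paper, which obtains Theorem~\ref{pswclosure} by "varying the construction" of the $O(n^2)$-state NFA-$\epsilon$ used for factor-closure in Section~\ref{fac-clo-subsec} (and, for the subword case, the two-components-instead-of-three version of the guessing construction of Section~\ref{subwordc}). Your mode-flag-with-$\epsilon$-moves formulation of the subword case is interchangeable with the paper's flagless per-letter nondeterministic choice, and all three constructions have $O(n^2)$ states and transitions, so the emptiness test by depth-first search gives the claimed $O(n^2)$ bound.
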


\subsubsection{Prefix-freeness}
See Section~\ref{Sssfactorfree}.

\subsection{Suffix-convexity}
\label{suffixc}

     Suffix-convexity can be tested in an analogous fashion.
We give the construction of $M'$ without proof. 
Let $M' = (Q', \Sigma, \delta', q'_0, F')$, where
\begin{eqnarray*}
Q' &=& Q \times Q \times Q \times \lbrace 1,2,3 \rbrace ;\\
q'_0 &=& [q_0, q_0, q_0, 1] \rbrace; \\
F' &=& F \times (Q-F) \times F \times \lbrace 3 \rbrace;\\
\delta'([p,q_0,q_0,1],a) &=& \lbrace [ \delta(p,a), q_0, q_0, 1 ]
\rbrace \ \quad \text{ for } p \in Q, \ a \in \Sigma; \\
\delta'([p,q_0,q_0,1], \epsilon) &=& \lbrace [ p,q_0,q_0,2] \rbrace  
\ \quad \text{ for } p \in Q; \\
\delta'([p,q,q_0,2], a) &=& \lbrace [ \delta(p,a), \delta(q,a), q_0, 2 ]
\rbrace \ \quad \text{ for } p, q \in Q, \ a \in \Sigma; \\
\delta'([p,q,q_0,2], \epsilon) &=& \lbrace [ p,q,q_0,3] \rbrace  
\ \quad \text{ for } p, q \in Q; \\
\delta'([p,q,r,3], a) &=& \lbrace [ \delta(p,a), \delta(q,a), \delta(r,a), 3 ]
\rbrace \ \quad \text{ for } p, q, r \in Q,\ a \in \Sigma.
\end{eqnarray*}
The NFA $M'$ has $n^3$ states and $|\Sigma|n^3+(|\Sigma|+1)(n^2+n)$ transitions.

For results on suffix-closure and suffix-freeness, see
Theorem~\ref{pswclosure} and
Section~\ref{Sssfactorfree}, respectively.

\subsection{Subword-convexity}
\label{subwordc}

     Subword-convexity can be tested in an analogous fashion.
We give the construction of $M'$ without proof.
Let $M' = (Q', \Sigma, \delta', q'_0, F')$, where
\begin{eqnarray*}
Q' &=& Q \times Q \times Q; \\
q'_0 &=& [q_0, q_0, q_0]; \\
F' &=& F \times (Q-F) \times F;\\
\delta'([p,q,r],a) &= &\lbrace [\delta(p,a), q,r],\ 
	[\delta(p,a), \delta(q,a), r], \ 
	[\delta(p,a), \delta(q,a), \delta(r,a) ] \rbrace,\\
	&&\mbox{for all }  p, q, r \in Q \mbox{ and } a \in \Sigma.
\end{eqnarray*}
The NFA $M'$ has $n^3$ states and $|\Sigma|n^3$ transitions.

The idea is that as the symbols of $w$ are read, we keep track of the state
of $M$ in the first component.  We then ``guess'' which symbols of the 
input also belong to $u$ and/or $v$, enforcing the condition that, if
a symbol belongs to $u$, then it must belong to $v$, and if it belongs to
$v$, then it must belong to $w$.  We therefore cover all possibilities
of words $u, v$ such that $u$ is a subword of $v$ and $v$ is a subword of
$w$.

For results on subword-closure and subword-freeness, see
Theorem~\ref{pswclosure} and Section~\ref{Sssfactorfree},
respectively.

\subsection{Almost convex languages}
\label{almostcon}

     As we have seen, a language $L$ is prefix-convex if and only
if there are no triples $(u,v,w)$ with $u$ a prefix of $v$, $v$
a prefix of $w$, and $u, w \in L$, $v \not\in L$.  We call such a triple
a {\it witness}.  A language could fail
to be prefix-convex because there are infinitely many witnesses
(for example, the language $({\tt aa})^*$), or it could fail because
there is at least one, but only finitely many witnesses (for example,
the language $\epsilon + {\tt aa} {\tt a}^*$).   

     We define a language $L$ to be {\it almost prefix-convex\/} if
there exists at least one, but only finitely many witnesses to the failure
of the prefix-convex property.    Analogously, we define
{\it almost suffix-\/}, {\it almost factor-\/}, and
{\it almost subword-convex\/}.

\begin{theorem}
Let $L$ be a regular language accepted by a DFA with $n$ states. Then
we can determine if $L$ is almost prefix-convex (respectively, 
almost suffix-convex, almost factor-convex, almost
subword-convex) in $O(n^3)$ time.
\label{almostfc}
\end{theorem}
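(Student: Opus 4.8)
The plan is to reuse, for each of the four relations $\unlhd \in \{$prefix, suffix, factor, subword$\}$, the NFA-$\epsilon$ $M'$ already built in Section~\ref{convexity}, and observe that $L$ fails to be $\unlhd$-convex for infinitely many witnesses precisely when $L(M')$ is infinite, while it is almost $\unlhd$-convex precisely when $L(M')$ is nonempty and finite. So the theorem reduces to two standard decidability questions about the regular language $L(M')$: is it nonempty, and is it infinite? Both can be answered by graph search on $M'$, whose size is $O(n^3)$ by Theorems~\ref{TDFAfactorcont} and the analogous constructions for the other three relations.

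First I would make precise the correspondence between witnesses and words of $L(M')$. For the factor case, the construction shows that $w \in L(M')$ iff there exist a factorization $w = v'u'uu''v''$ with $u,w \in L$ and $v = u'uu'' \notin L$; thus every $w \in L(M')$ yields a witness $(u,v,w)$, and conversely every witness triple $(u,v,w)$ gives $w \in L(M')$. In particular $L$ has infinitely many witnesses iff $L(M')$ is infinite: if $L(M')$ is infinite there are infinitely many distinct third coordinates $w$, hence infinitely many witness triples; conversely infinitely many witness triples force infinitely many distinct values of $w$ (for fixed $w$ there are only finitely many factors $v$ and, for fixed $v$, finitely many factors $u$), hence $L(M')$ is infinite. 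The same reasoning applies verbatim to prefix-, suffix-, and subword-convexity, using the respective constructions in Sections~\ref{prefixc}, \ref{suffixc}, \ref{subwordc}.

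Next I would invoke the standard automata-theoretic facts. Trim $M'$ by removing states not reachable from $q'_0$ or not co-reachable to $F'$; this is two depth-first searches, linear in the size of $M'$, hence $O(n^3)$ time. Then $L(M')$ is nonempty iff the trimmed automaton has at least one state, and $L(M')$ is infinite iff the trimmed automaton contains a directed cycle (including a self-loop), which is detected by one more depth-first search for a back edge, again in $O(n^3)$ time. (One must be slightly careful that $\epsilon$-cycles in $M'$ do not create spurious ``infinite'' verdicts; but by inspection of each construction the only $\epsilon$-transitions go strictly forward in the mode component $\{1,2,3\}$ or $\{1,\dots,5\}$, so there are no $\epsilon$-cycles, and every directed cycle in the trimmed $M'$ carries at least one input symbol.) Therefore $L$ is almost $\unlhd$-convex iff the trimmed $M'$ is nonempty and acyclic, which we decide in $O(n^3)$ time.

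The main obstacle is the bookkeeping in the third paragraph: verifying that each of the four constructions is free of $\epsilon$-cycles so that ``$L(M')$ infinite'' is equivalent to ``trimmed $M'$ has a cycle through an input-labeled edge.'' For the subword construction (Section~\ref{subwordc}) there are no $\epsilon$-transitions at all, so this is immediate; for the prefix, suffix, and factor constructions the mode flag is strictly increasing along $\epsilon$-moves, so the check is routine. Once that is in hand, the rest is an assembly of already-proved results (the $O(n^3)$ bound on $|M'|$) and textbook graph algorithms, so no further difficulty arises.
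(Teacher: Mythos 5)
Your proposal is correct and follows the same overall strategy as the paper: reduce the question to whether $L(M')$ is nonempty and finite for the NFA-$\epsilon$ $M'$ of Section~\ref{convexity}, justify the finiteness equivalence by noting that infinitely many witness triples force infinitely many distinct third components $w$, and then decide both questions by linear-time graph search on the $O(n^3)$-size automaton. The one place where you genuinely diverge is the treatment of $\epsilon$-cycles. The paper treats $M'$ as an arbitrary NFA-$\epsilon$ and, to detect a cycle containing a non-$\epsilon$ edge, uses a connected-components argument (attributed to Zhao and Chan): compute the components of the transition graph and test, for each non-$\epsilon$ edge $(p,q)$, whether $p$ and $q$ lie in the same component. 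You instead exploit the specific structure of the constructions: the mode flag never decreases on any transition and strictly increases on every $\epsilon$-move (and the subword automaton has no $\epsilon$-moves at all), so every directed cycle stays within a fixed mode and is labeled entirely by input symbols; hence ordinary cycle detection on the trimmed automaton suffices. Your observation is correct and arguably cleaner for these particular automata, at the cost of being less general than the paper's technique, which applies to any NFA-$\epsilon$. A small further point in your favor: you state the criterion as ``$L(M')$ nonempty and finite,'' which is more precise than the paper's phrasing, since a language with no witnesses at all is convex rather than almost convex.
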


\begin{proof}
We give the proof for the almost factor-convex property, leaving the
other cases to the reader.

Consider the NFA-$\epsilon$ $M'$ defined in Section~\ref{factorc}.  As we have
seen, $M'$ accepts the language
\begin{eqnarray*}
L(M') &=& \lbrace w \in \Sigma^* \ : \ \text{there exist} \ 
u, v \in \Sigma^* \ \text{such that} \  u \ \text{is a factor of}\  v, \\
&&
v \ \text{is a factor of}\ w, \text{and}\ u, w \in L, v \not\in L \rbrace.
\end{eqnarray*}
Then $M'$ accepts an infinite language if and only if
$L$ is not almost factor-convex.   For if $M'$ accepts infinitely many
distinct words, then there are infinitely many distinct witnesses, while if
there are infinitely many distinct witnesses $(u,v,w)$, then there must be 
infinitely many distinct $w$ among them, since the lengths of $|u|$ and
$|v|$ are bounded by $|w|$.

Thus it suffices to see if $M'$ accepts an infinite language.  If $M'$ were
an NFA, this would be trivial:  first, we remove all states not reachable
from the start state or from which we cannot reach a final state.  
Next, we look for the existence of a cycle.  All three goals can be 
easily accomplished in time linear in the size of $M'$, using depth-first
search.  

However, $M'$ is an NFA-$\epsilon$, so there is one additional complication:
namely, that the cycle we find might be labeled completely by 
$\epsilon$-transitions.  To solve this, we use an idea suggested to us
by Jack Zhao and Timothy Chan (personal communication):  we find all the
connected components of the transition graph of $M'$ (which can be done in
linear time) and then, for each edge $(p,q)$ labeled with something other
than $\epsilon$ (corresponding to the transition $q \in \delta(p,a)$ for
some $a \in \Sigma$), we check to see if $p$ and $q$ are in the same
connected component.  If they are, we have found a cycle labeled with
something other than $\epsilon$.  This technique runs in linear time in the
size of the NFA-$\epsilon$.
\end{proof}

\subsubsection{Almost closed languages}

   In analogy with Section~\ref{almostcon}, we can define
a language $L$ to be {\it almost prefix-closed\/} if
there exists at least one, but only finitely many witnesses to the failure
of the prefix-closed property.    Analogously, we define
{\it almost suffix-\/}, {\it almost factor-\/}, and
{\it almost subword-closed\/}.

\begin{theorem}
Let $L$ be a regular language accepted by a DFA with $n$ states. Then
we can determine if $L$ is almost prefix-closed (respectively, 
almost suffix-closed, almost factor-closed, almost
subword-convex) in $O(n^2)$ time.
\label{almostfclosed}
\end{theorem}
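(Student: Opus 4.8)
The plan is to mimic the proof of Theorem~\ref{almostfc}, but starting from the smaller automata used to test closure rather than convexity. For each of the four relations $\unlhd \in \{\text{prefix}, \text{suffix}, \text{factor}, \text{subword}\}$, Sections~\ref{fac-clo-subsec} and the analogous constructions referenced in Theorem~\ref{pswclosure} give an NFA-$\epsilon$ $M'$ with $O(n^2)$ states and transitions such that
\[
L(M') = \lbrace w \in \Sigma^* \ : \ \text{there exists } v \not\in L \text{ with } v \unlhd w \text{ and } w \in L \rbrace,
\]
i.e. the set of words $w$ that serve as the "large" element of a witness $(v,w)$ to the failure of $\unlhd$-closure. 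First I would record that $L$ is $\unlhd$-closed iff $L(M') = \emptyset$ (already stated in the excerpt) and, parallel to the argument in Theorem~\ref{almostfc}, that $L$ is almost $\unlhd$-closed iff $L(M')$ is finite and nonempty. The nonemptiness direction is immediate. For finiteness: if there are infinitely many witness pairs $(v,w)$, then since $|v| \le |w|$ for each of the four relations, there must be infinitely many distinct $w$, hence $L(M')$ is infinite; conversely each $w \in L(M')$ comes from at least one witness pair, so infinitely many $w$ yield infinitely many pairs.

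The remaining step is algorithmic: decide whether the NFA-$\epsilon$ $M'$ accepts a finite language, in time linear in its size, hence $O(n^2)$. I would invoke verbatim the technique from the proof of Theorem~\ref{almostfc}: trim $M'$ by removing states not reachable from $q'_0$ or from which no state of $F'$ is reachable (depth-first search, linear time), then detect a non-$\epsilon$-labeled cycle using the Zhao--Chan observation --- compute the strongly connected components of the transition graph in linear time, and check whether some non-$\epsilon$ edge $(p,q)$ has $p,q$ in the same component. The language $L(M')$ is infinite iff such a cycle exists in the trimmed automaton. Since $M'$ has $O(n^2)$ states and transitions, the whole procedure runs in $O(n^2)$ time, which gives the claimed bound.

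I do not expect a genuine obstacle here, since the argument is a direct specialization of Theorem~\ref{almostfc} with $M'$ replaced by the closure-testing automaton; the only things to verify are (i) the "almost closed iff $L(M')$ finite and nonempty" equivalence, which uses only $|v|\le|w|$, and (ii) that the four closure-testing automata all have size $O(n^2)$, which is stated in the excerpt for the factor case and asserted for the others via Theorem~\ref{pswclosure}. One minor point of care: the statement as printed says "almost subword-convex" in the parenthetical list where it presumably should say "almost subword-closed"; I would treat all four as closure properties. If desired, one could also note that the same automata, restricted appropriately, handle the freeness variants, but that is not required for this theorem.

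\begin{proof}
We treat the almost factor-closed case; the other three are analogous, using the corresponding closure-testing automata.

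Let $M'$ be the NFA-$\epsilon$ of Section~\ref{fac-clo-subsec}, which has $O(n^2)$ states and transitions and satisfies
\[
L(M') = \lbrace w \in \Sigma^* \ : \ \text{there exists } v \not\in L \text{ with } v \text{ a factor of } w \text{ and } w \in L \rbrace.
\]
We claim $L$ is almost factor-closed iff $L(M')$ is finite and nonempty. If $L$ is not factor-closed, each failure is witnessed by a pair $(v,w)$ with $v \not\in L$, $w \in L$, and $v$ a factor of $w$; the set of such $w$ is exactly $L(M')$. If there are infinitely many witness pairs, then, since $|v| \le |w|$, there are infinitely many distinct $w$, so $L(M')$ is infinite. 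Conversely, each $w \in L(M')$ arises from at least one witness pair, so if $L(M')$ is infinite there are infinitely many witnesses. Hence "finitely many but at least one witness" is equivalent to "$L(M')$ finite and nonempty."

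It remains to decide whether the NFA-$\epsilon$ $M'$ accepts a finite language in time linear in its size. We proceed exactly as in the proof of Theorem~\ref{almostfc}: first remove all states not reachable from $q'_0$ or from which no accepting state is reachable, by depth-first search; then compute the strongly connected components of the transition graph, and for each edge $(p,q)$ labeled by some $a \in \Sigma$ (that is, $q \in \delta'(p,a)$), check whether $p$ and $q$ lie in the same component. Such a pair exists iff the trimmed $M'$ contains a cycle with at least one non-$\epsilon$ transition, which holds iff $L(M')$ is infinite. All steps run in time linear in the size of $M'$, and that size is $O(n^2)$, giving an $O(n^2)$ algorithm.
\end{proof}
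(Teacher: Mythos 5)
Your proof is correct and is exactly what the paper intends: its own proof of this theorem is the one-line remark ``Just like the proof of Theorem~\ref{almostfc},'' and you have faithfully instantiated that argument with the $O(n^2)$ closure-testing automaton in place of the $O(n^3)$ convexity-testing one. The equivalence ``almost closed iff $L(M')$ finite and nonempty'' and the linear-time infiniteness test are handled just as the paper does.
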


\begin{proof}
Just like the proof of Theorem~\ref{almostfc}.
\end{proof}

\subsubsection{Almost free languages}

In a similar way, we can define
a language $L$ to be {\it almost prefix-free\/} if
there exists at least one, but only finitely many witnesses to the failure
of the prefix-free property.    Analogously, we define
{\it almost suffix-\/}, {\it almost factor-\/}, and
{\it almost subword-free\/}.

\begin{theorem}
Let $L$ be a regular language accepted by a DFA with $n$ states. Then
we can determine if $L$ is almost prefix-free (respectively, 
almost suffix-free, almost factor-free, almost
subword-free) in $O(n^2)$ time.
\label{almostffree}
\end{theorem}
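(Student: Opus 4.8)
The plan is to follow the proofs of Theorems~\ref{almostfc} and~\ref{almostfclosed}. I will treat almost prefix-freeness in detail; the suffix-, factor-, and subword-free cases are identical, using the corresponding automata of Section~\ref{Sssfactorfree}. By a \emph{witness} to the failure of the prefix-free property I mean a pair $(u,w)$ with $u$ a proper prefix of $w$ and $u,w\in L$. From the DFA $M$ for $L$ I build the NFA-$\epsilon$ $M'$ of Section~\ref{Sssfactorfree}, which has $O(n^2)$ states and transitions and for which $L(M')=\emptyset$ iff $L$ is prefix-free; inspecting the construction, $M'$ accepts exactly
$$ L(M') \;=\; \{\, w\in\Sigma^* \ :\ w\in L \text{ and some proper prefix of } w \text{ lies in } L \,\}, $$
that is, the set of second components of witnesses.

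The crux is the equivalence: $L$ is almost prefix-free $\iff$ $L(M')\neq\emptyset$ and $L(M')$ is finite. The map $(u,w)\mapsto w$ carries witnesses onto $L(M')$, so there is at least one witness iff $L(M')\neq\emptyset$. Moreover a fixed word $w$ has at most $|w|$ proper prefixes, hence occurs as the second component of at most $|w|$ witnesses, so there are infinitely many witnesses iff there are infinitely many distinct second components, i.e.\ iff $L(M')$ is infinite. (The analogous bound holds for the suffixes, factors, and subwords of a fixed word, so the argument transfers verbatim to the other three cases.) Hence ``at least one but only finitely many witnesses'' is equivalent to ``$L(M')$ nonempty and finite.''

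It remains to check these two conditions in $O(n^2)$ time. Constructing $M'$ costs $O(n^2)$. Emptiness of $L(M')$ is a reachability question answered by depth-first search in time linear in the size of $M'$. Finiteness of $L(M')$ is exactly the NFA-$\epsilon$ infiniteness test from the proof of Theorem~\ref{almostfc}: prune all states not reachable from the start state or unable to reach a final state, then search for a cycle that uses at least one non-$\epsilon$ edge via the connected-components device recalled there; this too is linear in the size of $M'$. Altogether the procedure runs in $O(n^2)$ time, and for the suffix-, factor-, and subword-free variants one simply substitutes the corresponding $M'$, each again of size $O(n^2)$.

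The only subtle step is the one already dealt with in Theorem~\ref{almostfc}: since $M'$ is an NFA-$\epsilon$, not an NFA, the mere existence of a cycle does not witness infiniteness of $L(M')$ --- the cycle might be labeled entirely by $\epsilon$'s --- so the connected-components refinement is needed rather than a naive cycle search. A minor point to verify along the way is that $M'$ genuinely enforces \emph{properness} of the prefix (suffix, factor, subword), so that $L(M')$ records only true witnesses and not, trivially, all of $L$; this is the role of the mode flag in the construction.
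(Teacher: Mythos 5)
Your proof is correct and follows exactly the route the paper intends: the paper's own proof of this theorem is the single line ``Just like the proof of Theorem~\ref{almostfc},'' and your argument is a faithful expansion of that --- the $O(n^2)$ automaton of Section~\ref{Sssfactorfree}, the witness-counting equivalence between ``finitely many witnesses'' and ``$L(M')$ finite,'' and the NFA-$\epsilon$ infiniteness test with the connected-components refinement. No changes needed.
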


\begin{proof}
Just like the proof of Theorem~\ref{almostfc}.
\end{proof}

\section{Minimal witnesses}
\label{witnesses}
Let $\unlhd$ represent one of the four relations: factor, prefix, suffix, or subword.
A necessary and sufficient condition that a language
$L$ be {\it not\/} $\unlhd$-convex is the existence of a triple 
$(u,v,w)$ of words, where $u, w \in L$, $v \not\in L$,
$u\unlhd v$, and $v\unlhd w$.
As before, we call such a triple a {\it witness\/}
to the lack of $\unlhd$-convexity.
A witness $(u,v,w)$ is {\it minimal\/} if every other witness $(u',v',w')$
satisfies $|w|<|w'|$, or $|w|=|w'|$ and $|v|< |v'|$,
or $|w|=|w'|$, $|v|= |v'|$, and $|u|< |u'|$.
The {\it size\/} of a witness is $|w|$. 

Similarly, if $L=L$ is not $\unlhd$-closed, then  $(v,w)$ is a {\it witness\/} if $w \in L$, $v \not\in L$, and $v\unlhd w$.
A witness $(v,w)$ is {\it minimal\/} if there exists no witness $(v',w')$  such that $|w'|<|w|$, or $|w'|=|w|$ and $|v'|< |v|$.
The {\it size\/}  is again $|w|$. 
For $\unlhd$-freeness witness, minimal witness, and size are defined as for $\unlhd$-closure, except that both words are in $L$.

Suppose we are given a regular language $L$ specified by an $n$-state DFA $M$,
and we know that $L$ is not $\unlhd$-convex (respectively,  $\unlhd$-closed or $\unlhd$-free). A natural question then is,
what is a good upper bound on the size of the
shortest witness that demonstrates the lack of this
property?

\subsection{Factor-convexity}

From Theorem~\ref{TDFAfactorcont},
we get
an $O(n^3)$ upper bound for a witness to the lack of factor-convexity.  

\begin{corollary}
Suppose $L$ is accepted by a DFA with $n$ states and $L$ is not
factor-convex.  Then there exists a witness $(u,v,w)$ such that $|w| \leq 3n^3 + n^2 + n-1$.
\label{cor1}
\end{corollary}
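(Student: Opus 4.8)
The plan is to read off the bound directly from Theorem~\ref{TDFAfactorcont}. That theorem supplies an NFA-$\epsilon$ $M'$ with exactly $N := 3n^3 + n^2 + n$ states whose language is precisely the set of words $w$ admitting a witness $(u,v,w)$ to the failure of factor-convexity. Since $L$ is not factor-convex, $L(M') \neq \emptyset$, so it suffices to bound the length of a shortest word accepted by $M'$; any such word, together with the witness extracted from an accepting computation, will be a witness of the claimed size.

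The only point needing care is the presence of $\epsilon$-transitions, so the usual ``shortest accepting path is simple'' argument has to be phrased carefully. First I would fix a word $w_0 \in L(M')$ of minimum length, and among all accepting computations of $M'$ on $w_0$, choose one using the fewest transitions. I claim this computation visits no state of $M'$ twice: otherwise the segment of the computation strictly between two visits to a repeated state could be deleted, giving an accepting computation on a word $w_0'$ with $|w_0'| \le |w_0|$; if the deleted segment carried a non-$\epsilon$ transition this contradicts minimality of $|w_0|$, and if it was labelled entirely by $\epsilon$ it contradicts minimality of the number of transitions. Hence the computation passes through at most $N$ states, uses at most $N-1$ transitions, and since each transition contributes at most one input symbol we get $|w_0| \le N - 1 = 3n^3 + n^2 + n - 1$.

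Finally I would invoke the ``on the other hand'' direction of the correctness argument in Section~\ref{factorc}: the accepting computation must apply Rules 1--9 in order, and reading off the labels of the uses of Rules 1, 3, 5, 7, 9 as $v', u', u, u'', v''$ yields $w_0 = v' u' u u'' v''$ with $u, w_0 \in L$, $v := u' u u'' \notin L$, $u$ a factor of $v$, and $v$ a factor of $w_0$. Thus $(u, v, w_0)$ is a witness with $|w_0| \le 3n^3 + n^2 + n - 1$, which is exactly the assertion. I expect the $\epsilon$-transition bookkeeping in the shortest-computation step to be the only mild obstacle; once that is settled, everything else is immediate from the theorem and the explicit construction.
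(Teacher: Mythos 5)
Your proof is correct and follows essentially the same route as the paper: invoke Theorem~\ref{TDFAfactorcont}, note that non-factor-convexity forces $L(M')\neq\emptyset$, and bound the shortest accepted word by the number of states of $M'$ minus one. The only difference is that you carefully justify the ``shortest accepted word has length at most $N-1$'' step in the presence of $\epsilon$-transitions (minimizing first length, then transition count), a detail the paper simply declares to be clear; your bookkeeping there is sound.
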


\begin{proof}
In our proof of Theorem~\ref{TDFAfactorcont}, we constructed an NFA-$\epsilon$
$M'$ with $3n^3 + n^2 + n$ states accepting
$
L(M') = \lbrace w \in \Sigma^* \ : \ \text{there exist} \ 
u, v \in \Sigma^* 
\ \text{such that} \  (u,v,w) \ \text{is a witness}\rbrace.
$
Thus, if $M$ is not factor-convex, $M'$ accepts such a word $w$, and the
length of $w$ is clearly bounded above by the number of states of
$M'$ minus $1$.
\end{proof}

    It turns out that the bound in Corollary~\ref{cor1} is best 
possible:

\begin{theorem}
There exists a class of non-factor-convex
regular languages $L_n$, accepted by DFA's with
$O(n)$ states, such the size of the minimal witness is 
$\Omega(n^3)$.
\label{non-factor-convex}
\end{theorem}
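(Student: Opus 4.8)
The plan is to construct, for each $n$, a language $L_n$ over a small alphabet such that the only way to produce a witness $(u,v,w)$ forces $|w|$ to be cubic in $n$. The natural design principle is to mimic the three-layered structure of the NFA-$\epsilon$ $M'$ from Section~\ref{factorc}: the witness word $w$ should be forced to decompose as $v' u' u u'' v''$ where each of the "free" segments must run through a length-$\Theta(n)$ gadget, and these gadgets must be traversed in a nested fashion so that the total length multiplies out to $\Theta(n^3)$. Concretely, I would take three pairwise-coprime moduli $p, q, r$, each of size $\Theta(n)$, and build $L_n$ so that membership of a word is controlled by its length modulo $p$, modulo $q$, and modulo $r$ in three successive "phases" marked by special separator letters. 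The coprimality (giving an $\lcm$ of size $\Theta(n^3)$, which is why the macro $\lcm$ is defined in the preamble) is the source of the cubic blow-up.

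The key steps, in order, would be: \textbf{(1)} Define $L_n$ explicitly. A clean choice is a language over $\{{\tt a}, {\tt b}, {\tt c}, {\tt \#}\}$ or similar, where words in $L$ have a prescribed format with a few blocks of ${\tt a}$'s, and membership depends on the residues of the block lengths modulo $p$, $q$, $r$ respectively; the DFA then has $O(p + q + r) = O(n)$ states since it just needs three modular counters used one at a time. \textbf{(2)} Exhibit a specific short-as-possible witness $(u,v,w)$ and show $|w| = \Omega(n^3)$: here $w$ must contain nested blocks whose lengths are forced to be multiples (or specified residues) of $p$, $q$, and $r$ simultaneously in the overlapping region, so the relevant block has length at least $\lcm(p,q,r) - O(1) = \Omega(n^3)$. \textbf{(3)} Prove the lower bound: show that \emph{every} witness has size $\Omega(n^3)$, by arguing that any triple $(u,v,w)$ with $u$ a factor of $v$, $v$ a factor of $w$, $u, w \in L_n$, $v \notin L_n$ forces the same modular constraints on some common sub-block, hence that sub-block — and therefore $w$ — has length at least $\lcm(p,q,r) - O(1)$. \textbf{(4)} Finally, also verify the easy direction, namely that $L_n$ really is \emph{not} factor-convex (the witness from step (2) already certifies this) and that it is accepted by a DFA with $O(n)$ states.

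The main obstacle I expect is step (3): designing $L_n$ so that the cubic lower bound is \emph{unavoidable} rather than merely achievable. It is easy to produce one long witness; the difficulty is ruling out short witnesses that exploit some unanticipated alignment — e.g., a witness where $u$ is empty, or where the factor relations are realized "trivially" by putting all the action in a short prefix or suffix. The design must pin down the structure so tightly that $u \unlhd v \unlhd w$ together with the membership/non-membership constraints leave no slack: typically one forces $u$, $v$, $w$ to share a mandatory separator pattern so that the factor relations can only be realized in one way, and then the three coprime modular conditions stack up on a single block. Getting the separators and the format conditions exactly right so that the argument is airtight — while keeping the DFA size linear — is the delicate part. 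A secondary (purely bookkeeping) concern is matching the constant: Corollary~\ref{cor1} gives $3n^3 + n^2 + n - 1$, and to say the bound is "best possible" one wants the witness size to be $(3 + o(1))n^3$ or at least $\Theta(n^3)$ with the right leading behavior, which may require tuning $p, q, r$ to be roughly $n$, $n$, $n$ and checking that the three phases of the construction really do contribute a combined factor matching the three $n^3$ terms coming from the three "mode-3/4/5"-style layers of $M'$.
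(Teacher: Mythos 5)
Your high-level plan is exactly the paper's: three moduli of size about $n$ whose $\lcm$ is $\Omega(n^3)$, marker letters selecting which modulus governs membership, and a single shared block of ${\tt a}$'s on which all three congruence conditions are forced to stack by the factor relation. But the proposal stops short of an actual proof precisely at the point you yourself identify as the main obstacle: no concrete $L_n$ is written down, and step (3) --- showing that \emph{every} witness, not just your exhibited one, has size $\Omega(n^3)$ --- is left as a design goal rather than carried out. Your step (1) as first stated (several blocks, each governed by its own modulus in a separate ``phase'') would not obviously work, since the factor relation need not force the three constraints onto a common block; you only arrive at the correct design (one block, three constraints) in the closing discussion, and even there the separator scheme that makes the factor relations rigid is not specified. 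As written, this is a credible plan with the right idea, but the load-bearing part of the argument is missing.

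For comparison, the paper closes the gap with a very economical construction: it first proves the suffix-convex case (Theorem~\ref{nsc}) using $L = {\tt bbb}({\tt a}^{n-1})^+ \cup {\tt bb}({\tt a}+\cdots+{\tt a}^{n-1})({\tt a}^n)^* \cup {\tt b}({\tt a}^{n+1})^+$, and then obtains the factor-convex case by passing to $L\,{\tt b}$. The number of leading ${\tt b}$'s ($1$, $2$, or $3$) selects the modulus, and since every word of $L\,{\tt b}$ is ${\tt b}^k{\tt a}^i{\tt b}$ with ${\tt b}$'s only at the two ends, any factor of such a word that lies in $L\,{\tt b}$ is forced to be ${\tt b}^j{\tt a}^i{\tt b}$ with $j\le k$ and the \emph{same} $i$; this is the rigidity your separators were meant to provide. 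A short case analysis then shows the only possible witness shape is $({\tt b}{\tt a}^i{\tt b}, {\tt bb}{\tt a}^i{\tt b}, {\tt bbb}{\tt a}^i{\tt b})$, where $u,w\in L\,{\tt b}$ forces $(n+1)\mid i$ and $(n-1)\mid i$, while $v\notin L\,{\tt b}$ forces $n\mid i$ (the middle component of $L$ deliberately accepts ${\tt bb}{\tt a}^i$ for all other residues, killing every shorter candidate). Hence $i\ge\lcm(n-1,n,n+1)\ge (n-1)n(n+1)/2$. Note the paper does not insist on pairwise coprimality --- $n-1$ and $n+1$ share a factor of $2$ when $n$ is odd --- it just loses a factor of $2$ in the $\lcm$, which is harmless for an $\Omega(n^3)$ bound. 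If you want to complete your own writeup, adopting this ``count the leading ${\tt b}$'s'' selection mechanism and the end-marker ${\tt b}$ is the cleanest way to make step (3) go through.
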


     The proof is postponed to Section~\ref{suff-convex-sec} below.

Results analogous to Corollary~\ref{cor1}
hold for prefix-, suffix-, and subword-convex languages.
However, in some cases we can do better, as we show below.  

\subsubsection{Factor-closure}
 \label{Sssecfactorclosure}

Theorem~\ref{TDFAfactorclosure} gives us a $O(n^2)$ upper bound on the
length of a witness to the failure of the factor-closed property:

\begin{corollary}
If $L$ is accepted by a DFA with $n$ states and $L$ is not
factor-closed, then there exists a witness $(v,w)$ such that $|w| \leq 2n^2+n-1$.
\label{cor11}
\end{corollary}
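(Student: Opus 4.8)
The plan is to derive Corollary~\ref{cor11} as an immediate consequence of Theorem~\ref{TDFAfactorclosure}, in exactly the same way that Corollary~\ref{cor1} was derived from Theorem~\ref{TDFAfactorcont}. Recall that in the proof of Theorem~\ref{TDFAfactorclosure} we built an NFA-$\epsilon$ $M'$ with $2n^2 + n$ states accepting precisely the language
$$
L(M') = \lbrace w \in \Sigma^* \ : \ \text{there exist}\ v\ \text{such that}\ v\ \text{is a factor of}\ w,\ w\in L,\ v\notin L \rbrace,
$$
so that $w$ participates in a witness $(v,w)$ to the lack of factor-closure if and only if $w \in L(M')$. Since $L$ is assumed not factor-closed, $L(M')$ is nonempty.

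First I would invoke the standard pumping fact for NFA-$\epsilon$: a nonempty language accepted by an NFA-$\epsilon$ with $m$ states contains a word of length at most $m-1$. Here the subtlety compared to an ordinary NFA is that an accepting path may traverse $\epsilon$-transitions, but since every state of $M'$ can be repeated at most once on a shortest accepting path (otherwise we could excise the loop, including loops labeled purely by $\epsilon$, and obtain a shorter or equal-length accepted word), the number of non-$\epsilon$ transitions on a shortest accepting path is at most the number of states minus one. With $m = 2n^2 + n$ states, this gives a word $w \in L(M')$ with $|w| \le 2n^2 + n - 1$. Fixing such a shortest $w$ and taking any corresponding factor $v \notin L$ of $w$ yields a witness $(v,w)$ of size at most $2n^2 + n - 1$, as claimed.

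There is essentially no obstacle here; the only point requiring a moment of care is the same one flagged in Theorem~\ref{almostfc}, namely that $M'$ is an NFA-$\epsilon$ rather than an NFA, so one must confirm that the crude ``states minus one'' length bound still applies once $\epsilon$-moves are allowed — and it does, because deleting a repeated state from an accepting walk never increases the length of the spelled word. I would therefore write the proof in one short paragraph mirroring the proof of Corollary~\ref{cor1}: recall the $2n^2 + n$ state count of $M'$ from Theorem~\ref{TDFAfactorclosure}, observe that non-factor-closure forces $L(M') \ne \emptyset$, and conclude that the shortest accepted $w$ has length at most the number of states of $M'$ minus one, i.e.\ $2n^2 + n - 1$.

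\begin{proof}
In the proof of Theorem~\ref{TDFAfactorclosure} we constructed an NFA-$\epsilon$ $M'$ with $2n^2 + n$ states accepting
$$
L(M') = \lbrace w \in \Sigma^* \ : \ \text{there exists}\ v\ \text{such that}\ v\ \text{is a factor of}\ w,\ w\in L,\ v\notin L \rbrace,
$$
so that $w$ occurs in some witness $(v,w)$ to the failure of factor-closure if and only if $w\in L(M')$. Since $L$ is not factor-closed, $L(M')\ne\emptyset$, so $M'$ has an accepting walk; taking one of minimal length and deleting any repeated state (possibly along a loop labeled only by $\epsilon$) does not increase the length of the spelled word, so the shortest word in $L(M')$ has length at most the number of states of $M'$ minus one, namely $2n^2+n-1$. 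For such a $w$, any factor $v\notin L$ gives a witness $(v,w)$ with $|w|\le 2n^2+n-1$.
\end{proof}
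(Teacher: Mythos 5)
Your proof is correct and is exactly the argument the paper intends: the corollary is derived from the $2n^2+n$-state NFA-$\epsilon$ $M'$ of Theorem~\ref{TDFAfactorclosure} in the same way Corollary~\ref{cor1} is derived from Theorem~\ref{TDFAfactorcont}, bounding the shortest accepted word by the number of states minus one. Your extra remark about excising $\epsilon$-labeled loops is a harmless (and valid) clarification of a point the paper leaves implicit.
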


     It turns out that this $O(n^2)$ upper bound is best possible.
Let $M=(Q,\Sigma,\delta,q_0,F)$ be a DFA , where
$Q=\{q_0,q_1,\cdots,q_n,q_{n+1},p_0,p_1,\cdots,p_n,p_{n+1}\}$,
$\Sigma=\{ {\tt 0,1}\}$, $F=Q\setminus\{q_{n+1}\}$. For $1\leq i \leq n$,
$0\leq j\leq n$, the transition function is
\begin{eqnarray*}
\delta(q_0,{\tt 0}) &=& q_0, \\
\delta(q_0,{\tt 1}) &=& q_1, \\
\delta(q_i,{\tt 0}) &=& \begin{cases}
q_{i+1}, & \text{ if } i<n; \\
q_1, & \text{ if } i=n, 
\end{cases} \\
\delta(q_i,{\tt 1}) &=& \begin{cases}
q_1, & \text{ if }  i<n-1; \\
p_0, & \text{ if }  i=n-1; \\
q_{n+1}, & \text{ if }  i=n;
\end{cases} \\
\delta(q_{n+1}, {\tt 0}) &=& q_{n+1}, \\
\delta(q_{n+1}, {\tt 1}) &=& q_{n+1}, \\
\delta(p_j,{\tt 0}) &=& \begin{cases}
p_{j+1}, & \text{ if }  j<n; \\
q_0,  & \text{ if }  j=n;
\end{cases}\\
\delta(p_j,{\tt 1}) &=& \begin{cases}
q_{n+1}, & \text{ if }  j<n;\\
p_{n+1}, & \text{ if }  j=n;
\end{cases}\\
\delta(p_{n+1}, {\tt 0}) &=& q_{n+1}, \\
\delta(p_{n+1}, {\tt 1}) &=& q_{n+1}. \\
\end{eqnarray*}

    The DFA $M$ has $2n+4$ states. For $n=5$, $M$ is illustrated in Figure~\ref{cont3}. 

\begin{figure}[H]
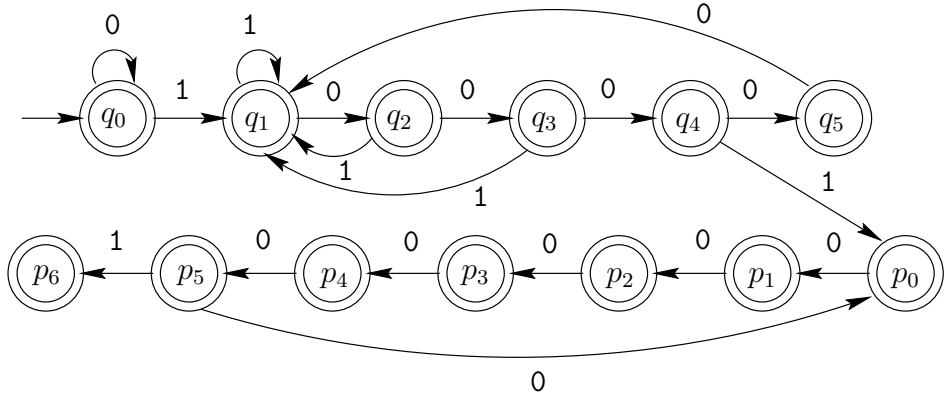

\begin{center}
\input cont3.pstex_t
\end{center}
\caption{Example of the construction in Theorem~\ref{xu} for
$n = 5$.  All unspecified transitions go to a rejecting
``dead state'' $q_6$ (not shown)
that cycles on all inputs.}
\label{cont3}
\end{figure}

Then we have the following theorem:

\begin{theorem}
For the  DFA $M$ above, let $L=L(M)$. For any
witness $(u,v)$ to the lack of factor-closure we have 
$|v|\geq (n+1)^2-1$, and this bound is achievable.
\label{xu}
\end{theorem}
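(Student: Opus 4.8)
The plan is to analyze the DFA $M$ directly and pin down exactly which words are rejected, then argue that any witness $(u,v)$ to the failure of factor-closure — meaning $u \notin L$, $v \in L$, and $u$ a factor of $v$ — forces $v$ to be long. First I would identify the structure: the only rejecting state among the ``live'' states is $q_{n+1}$, together with the hidden dead state; so $u \notin L$ means processing $u$ from $q_0$ drives $M$ into $q_{n+1}$ (or the dead state, but that case should be handled separately or ruled out by noting which transitions reach it). The key is to understand the shortest path from $q_0$ to $q_{n+1}$: reading $\tt 0$ loops at $q_0$, reading $\tt 1$ enters $q_1$, and then one must traverse the $q$-cycle and the $p$-cycle in a controlled way — from $q_{n-1}$ a $\tt 1$ goes to $p_0$, and only from $p_n$ does a $\tt 1$ reach $p_{n+1}$, while from $q_n$ a $\tt 1$ reaches $q_{n+1}$ directly. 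So I would compute the length of the shortest word $u$ with $\delta(q_0,u)=q_{n+1}$; the design of the transitions (the $q_i$ chain of length $n$ feeding into the $p_j$ chain of length $n$, with ``$\tt 1$ resets to $q_1$'' edges everywhere to prevent shortcuts) should make this roughly $(n+1)^2$, and more precisely the claimed $(n+1)^2-1$ after accounting for where the witness word $v$ must embed $u$.

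Next I would handle the embedding requirement. Since $u$ is a factor of $v$ and $v \in L$, I need $v = xuy$ with $\delta(q_0, v) \ne q_{n+1}$ (and not dead). The crucial point is that entering $q_{n+1}$ is a trap for the rejecting condition but we need $v$ accepted, so the prefix $x$ must put $M$ in some state from which reading $u$ ends at $q_{n+1}$ (that is what makes $u$ rejected from $q_0$, but we need $u$ rejected, i.e. $u$ itself read from $q_0$ reaches $q_{n+1}$) — wait, more carefully: $u \notin L$ means $\delta(q_0,u) = q_{n+1}$, and we need some $x,y$ with $\delta(q_0,xuy)$ accepting. Reading $x$ lands at some state $s$; reading $u$ from $s$ must land somewhere from which $y$ can recover to an accepting state. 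Since $q_{n+1}$ is absorbing, we need $\delta(s,u)\ne q_{n+1}$, so $s \ne q_0$ in general — the role of the extra structure (the $p$-states, the resets) is to guarantee that making $u$ short enough to be rejected from $q_0$ but ``harmless'' from some other start state costs length. I would then show the minimal such $u$ has length exactly $(n+1)^2-1$, hence $|v| \ge |u| \ge (n+1)^2-1$.

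For achievability I would exhibit the explicit witness: take $u$ to be the concrete shortest rejected word of length $(n+1)^2-1$ obtained by spelling out the path $q_0 \to q_1 \to \cdots \to q_{n-1} \to p_0 \to p_1 \to \cdots \to p_n \to \cdots \to q_{n+1}$ through the appropriate number of loop traversals, and then take $v = u$ itself if $u \in $ ... no — since $u \notin L$ we instead need $v = xuy \in L$; the cleanest choice is to prepend/append short words so that $M$ on $v$ stays in the accepting region, e.g. choosing $x$ so that $\delta(q_0,x)$ is a state from which $u$ leads back harmlessly, with $|x|,|y|$ constant. This realizes $|v| = (n+1)^2 - 1 + O(1)$, but the theorem claims the tight value $|v| \ge (n+1)^2 - 1$ is \emph{achievable}, so I would instead choose $v$ of length exactly $(n+1)^2-1$ by being careful — likely taking $v$ to be a rejected-by-$q_0$-but-accepted word where $u$ is a proper factor obtained by deleting a prefix of length equal to the few ``reset'' symbols.

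\textbf{Main obstacle.} The hard part will be the lower bound: proving that \emph{every} rejected word $u$ that can sit as a factor inside an accepted word $v$ must have length at least $(n+1)^2-1$. This requires a careful case analysis of all paths into $q_{n+1}$ in $M$, using the ``$\tt 1$ resets to $q_1$'' edges to show no path from $q_0$ (or from any recoverable state) to $q_{n+1}$ can be short, and in particular that one is forced to traverse both the $q$-chain and the $p$-chain essentially fully — contributing the two factors of roughly $n+1$. Bounding the short prefix $x$ and showing it cannot help shorten $u$ is the delicate combinatorial step; the rest is bookkeeping.
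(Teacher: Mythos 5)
There is a genuine gap, and it is at the heart of your plan. You propose to derive the lower bound from the length of the shortest word driving $M$ from $q_0$ into the rejecting state $q_{n+1}$, asserting that the design of the transitions ``should make this roughly $(n+1)^2$.'' That is false: the word ${\tt 1}{\tt 0}^{n-1}{\tt 1}$ already reaches $q_{n+1}$ via $q_0\to q_1\to\cdots\to q_n\to q_{n+1}$, so the shortest rejected word has length only $n+1$. The quadratic bound cannot come from shortest-path considerations on rejected words alone; it comes from the \emph{interaction} between the rejecting path of $u$ and the accepting path of $v$. The argument the paper uses (and that your plan never identifies) is a congruence/lcm argument on the two cycles of coprime lengths: one first shows, via minimality, that $u$ must be a suffix of $v$, that the rejecting path of $u$ stays entirely within the $q$-states, and that the accepting path of $v$ must pass through the $p$-states. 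Rejection of $u$ through the $q$-cycle (of length $n$) forces the final block of ${\tt 0}$'s to satisfy a condition modulo $n$ (it lies in ${\tt 1}({\tt 0}^{n})^*{\tt 0}^{n-1}{\tt 1}$), while acceptance of $v$ through the $p$-cycle (of length $n+1$) forces the same block to satisfy a condition modulo $n+1$ (it lies in ${\tt 1}({\tt 0}^{n+1})^*{\tt 0}^{n}{\tt 1}$). The smallest length meeting both congruences is $n(n+1)-1$, whence $|u|\ge n^2+n+1$ and, adding the shortest route to $q_{n-1}$, $|v|\ge n^2+2n=(n+1)^2-1$. Your alternative intuition --- that one is ``forced to traverse both the $q$-chain and the $p$-chain essentially fully'' --- yields only an $O(n)$ bound, not $\Omega(n^2)$.

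The achievability half is also left unresolved in your write-up (you end with ``I would instead choose $v$ \ldots by being careful''). The concrete witness is $u={\tt 1}{\tt 0}^{n^2+n-1}{\tt 1}$ and $v={\tt 1}{\tt 0}^{n-2}u$, which has $|v|=(n+1)^2-1$ exactly. To repair the proof you would need to supply the structural lemmas your plan skips: that a minimal witness has $u$ a suffix of $v$, that the accepting path of $v$ must enter the $p$-states (a short case analysis on $\delta(q_0,w{\tt 0}^i)$ where $u={\tt 0}^i{\tt 1}u'$), and that the rejecting path of $u$ cannot enter the $p$-states (because the languages ${\tt 1}({\tt 0}^{n+1})^*(\epsilon+{\tt 0}+\cdots+{\tt 0}^{n-1}){\tt 1}$ and ${\tt 1}({\tt 0}^{n+1})^*{\tt 0}^n{\tt 1}$ that would then describe the tails of $u$ and $v$ are incompatible with $u$ being a suffix of $v$). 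Only after these facts are in place does the two-modulus argument go through.
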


\begin{proof}
Let $(u,v)$ be a minimal witness. Since  the
only rejecting state $q_{n+1}$ in $M$ leads only to itself, all the states
along the accepting path of $v$ are final. We claim that $u$ is a suffix of
$v$, that is,
$v=wu$ for some $w$. Otherwise, if the last letter of $v$ is not the last letter of $u$, we can just omit it and get a shorter $v$, which
contradicts the minimality of $v$. Similarly, all the states along the
rejecting path of $u$ except the last one are final; otherwise, we
get a shorter~$u$.

First, we prove that the set of states along the accepting path of
$v$ includes both $q$ states and $p$ states. Let $u={\tt 0}^i {\tt 1} u'$ for
$i\geq 0$. Then $\delta(q_1,u')=q_{n+1}$. If $\delta(q_0,w{\tt 0}^i)$ is a
$p$ state, we are done. Otherwise, let $\delta(q_0,
w {\tt 0}^i)=q_k$ for some $0\leq k\leq n$. If $k=n$, then
$\delta(q_0,v)=\delta(q_0, w {\tt 0}^i {\tt 1} u')
=\delta(q_k,{\tt 1} u')=\delta(q_n,{\tt 1} u')=\delta(q_{n+1},u')=q_{n+1}$,
a contradiction. If $k=n-1$, then
$\delta(q_0,w {\tt 0}^i {\tt 1})=\delta(q_k, {\tt 1})=p_0$,
which is a $p$ state.
Otherwise, $\delta(q_0,v)=\delta(q_k,{\tt 1} u')=\delta(q_1,u')=q_{n+1}$, a
contradiction. Hence, the set of states along the accepting path of
$v$ includes both $q$ states and $p$ states.

Now, consider the set of states along the rejecting path of $u$. We
prove that the set of states along the rejecting path of $u$ includes
only $q$ states. Suppose it includes both $q$ states and $p$ states.
Since there is only one transition from a $q$ state to a $p$ state
and all transitions from a $p$ state to a $q$ state are to the
rejecting state $q_{n+1}$, we have
$u=u_1u_2$, where $\delta(q_0,u_1)=q_{n-1}$, and
$$u_2\in L_1= {\tt 1}({\tt 0}^{n+1})^*(\epsilon+ {\tt 0} + {\tt 00} +
\cdots+{\tt 0}^{n-1}){\tt 1}.$$
Since $u$ is a suffix of $v$, the last letter of $v$ is also $\tt 1$. So,
by the construction of $M$, we have that $v=v_1v_2$, where
$\delta(q_0,v_1)=q_{n-1}$, and
$$v_2\in L_2= {\tt 1}({\tt 0}^{n+1})^* {\tt 0}^{n} {\tt 1}.$$
It is obvious that $(\Sigma^*L_1)\cap(\Sigma^*L_2)=\emptyset$, which
contradicts the equality $v_1v_2=v=wu=wu_1u_2$. Therefore, the set of states
along the rejecting path of $u$ includes only $q$ states.

Consider the last block of $\tt 0$'s in the words $u$ and $v$. By the structure
of $M$, we have
$$u\in\Sigma^* {\tt 1}({\tt 0}^n)^* {\tt 0}^{n-1} {\tt 1},$$
and
$$v\in\Sigma^* {\tt 1}({\tt 0}^{n+1})^*{\tt 0}^n {\tt 1}.$$
Therefore, the length of the last block of $\tt 0$'s is at
least $n(n+1)-1$. In other words, $|u|\geq n(n+1)-1+2=n^2+n+1$.
Since the shortest word that leads to  state $q_{n-1}$ (which is
the only state having a transition to a $p$ state on input $\tt 1$) is
${\tt 10}^{n-2}$, we also have $|v|\geq 1+n-2+n^2+n+1=n^2+2n$, and the first
part of this theorem proved.

To see that equality is achieved, let
$u={\tt 10}^{n^2+n-1} {\tt 1}$ and $v={\tt 10}^{n-2}u. $ 
\end{proof}

\subsubsection{Factor-freeness}

    From the remarks in Section~\ref{Sssfactorfree}, we get
\begin{corollary}
If $L$ is accepted by a DFA with $n$ states and $L$ is not
factor-free, then there exists a witness $(v,w)$ such that $|w| \leq 2n^2+n-1$.
\label{cor141}
\end{corollary}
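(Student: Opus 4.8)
The plan is to mirror the argument for Corollary~\ref{cor11}, replacing the factor-closure automaton by the one described at the end of Section~\ref{Sssfactorfree}. Recall that this automaton is precisely the NFA-$\epsilon$ $M'$ built in Section~\ref{fac-clo-subsec}, but with the accepting set changed to $F' = F \times F \times \lbrace 3 \rbrace$; in particular it still has $2n^2 + n$ states. In any accepting computation of $M'$, Rule~1 reads some word $x$ (advancing only component~1), Rule~3 reads some word $v$ (advancing components~1 and~2 from $q_0$), and Rule~5 reads some word $y$ (advancing only component~1), so that $w = xvy$, component~1 ends at $\delta(q_0,w)$, and component~2 ends at $\delta(q_0,v)$. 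With $F' = F \times F \times \lbrace 3 \rbrace$ this says exactly $v, w \in L$ with $v$ a factor of $w$, so every word accepted by $M'$ is the $w$-component of a witness to the lack of factor-freeness, and conversely any such witness yields an accepted word.

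Granting that, I would invoke the standard fact that a nonempty NFA-$\epsilon$ with $N$ states accepts some word of length at most $N-1$: from any accepting path, repeatedly excise the loop whenever a state is revisited; the resulting path visits at most $N$ states, hence has at most $N-1$ edges, and since $\epsilon$-edges do not contribute to the length of the word read, that word has length at most $N-1$. Applying this with $N = 2n^2+n$: if $L$ is not factor-free, then $M'$ accepts some $w$ with $|w| \le 2n^2+n-1$, and unpacking its accepting computation as above produces $v \in L$ with $(v,w)$ a witness, which is the claim. Everything here is a transcription of the reasoning behind Corollaries~\ref{cor1} and~\ref{cor11}.

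The one point that genuinely needs care — and the place where the factor-free case differs from the factor-closed case — is ensuring the extracted $v$ is a \emph{proper} factor of $w$, i.e.\ that $x$ and $y$ are not both empty (for factor-closure this is automatic, since $v \notin L \ni w$ forces $v \ne w$). Equivalently, the computation must make a nonempty excursion while in mode~1 or mode~3. This is enforced with a constant-factor modification of $M'$ — for instance, adjoin a bit to each state recording whether a non-$\epsilon$ transition has yet been taken in mode~1 or mode~3, and require that bit to be set in $F'$ — which leaves the state count $O(n^2)$, so the bound $|w| \le 2n^2+n-1$ (with the constants absorbed appropriately) still follows. I expect no other obstacle: once the automaton correctly captures "proper factor of $w$ lies in $L$", the corollary is immediate from its $O(n^2)$ size.
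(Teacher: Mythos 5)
Your route is the paper's own: Corollary~\ref{cor141} is read off from the $2n^2+n$-state NFA-$\epsilon$ obtained in Section~\ref{Sssfactorfree} by re-targeting the factor-closure machine of Section~\ref{fac-clo-subsec} at $F\times F\times\lbrace 3\rbrace$, combined with the pigeonhole fact that a nonempty NFA-$\epsilon$ with $N$ states accepts a word of length at most $N-1$, exactly as in Corollaries~\ref{cor1} and~\ref{cor11}. You are also right --- and more careful than the paper, which passes over this silently --- that the machine as literally defined does not test factor-freeness: taking both $\epsilon$-moves with no intervening letters gives $x=y=\epsilon$ and $v=w$, so the machine accepts every word of $L$, and properness of the factor must be enforced separately.

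The gap is in your repair. Adjoining a bit recording whether a letter has been read in mode~1 or mode~3 roughly doubles the state set, so the length bound you can extract from it is on the order of $4n^2+2n-1$, not $2n^2+n-1$; the phrase ``with the constants absorbed appropriately'' concedes that you have proved only an $O(n^2)$ bound, whereas the corollary asserts the specific constant. To recover (indeed improve) the stated bound, argue directly on a minimal witness instead of on the automaton: write $w=xvy$ with $v,w\in L$ and $xy\neq\epsilon$, and excise loops separately in the three segments of the run of $M$. The middle segment visits the $|v|+1$ pairs $\bigl(\delta(q_0,xv[1,i]),\delta(q_0,v[1,i])\bigr)$, and any repeated pair can be cut out without touching $x$ or $y$ (hence without destroying properness), so $|v|\le n^2-1$; each outer segment visits states of $Q$, and a repetition there can be cut out unless doing so would make both $x$ and $y$ empty, which caps $|x|$ and $|y|$ at $n$ (at $n-1$ when the other is nonempty). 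This yields $|w|\le n^2+2n-1\le 2n^2+n-1$, establishing the corollary with the claimed constant.
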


    Up to a constant, Corollary~\ref{cor141} is best possible, as the following
theorem shows.

\begin{theorem}
There exists a class of languages accepted by DFA's with $O(n)$ states,
such that the smallest witness showing the language not factor-free
is of size $\Omega(n^2)$.
\end{theorem}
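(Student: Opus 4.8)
The plan is to mimic the ``synchronise a period-$n$ loop with a period-$(n+1)$ loop'' idea behind Theorem~\ref{xu}, but to arrange that the synchronisation is forced by the \emph{factor} relation between two words of the language rather than by reaching a single state. For each $n\ge 1$ I would take a language over $\{{\tt 0},{\tt 1}\}$ of the form $L_n=A_n\cup B_n$, a union of two sublanguages each of which is itself factor-free, with
$$A_n=\{\,{\tt 1}\,{\tt 0}^{an}\,{\tt 1}\;:\;a\ge 1\,\},\qquad B_n=\{\,{\tt 1}\,{\tt 1}\,{\tt 0}^{b(n+1)}\,{\tt 1}\,{\tt 0}^{c(n+1)}\,{\tt 1}\,{\tt 1}\;:\;b,c\ge 1\,\}.$$
The three things to establish are: (i) $L_n$ is accepted by a DFA with $O(n)$ states; (ii) $L_n$ is not factor-free; and (iii) every witness $(v,w)$ forces $w$ to contain a block of ${\tt 0}$s whose length is a common multiple of $n$ and $n+1$, hence $\ge n(n+1)$, so $|w|=\Omega(n^2)$.

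\emph{Step 1 (the automaton).} Every word of $A_n$ begins with ${\tt 1}{\tt 0}$ and every word of $B_n$ begins with ${\tt 1}{\tt 1}$, so after reading two symbols the automaton can commit to one sublanguage; it then needs only a counter modulo $n$ (for $A_n$) or two counters modulo $n+1$ (for $B_n$), together with an ``at least one full cycle'' bit and a dead state --- $O(n)$ states in all. It is essential that the two sublanguages share the block shape ${\tt 1}{\tt 0}^*{\tt 1}$, so that an $A_n$-word can genuinely be a factor of a $B_n$-word; putting them over disjoint alphabets would make $L_n$ factor-free, while a single ``length divisible by both $n$ and $n+1$'' test would cost $\Theta(n^2)$ states.

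\emph{Step 2 (a witness and its minimality).} Taking $a=n+1$, $b=n$, $c=1$ gives $v={\tt 1}\,{\tt 0}^{n(n+1)}\,{\tt 1}\in A_n$ and $w={\tt 1}\,{\tt 1}\,{\tt 0}^{n(n+1)}\,{\tt 1}\,{\tt 0}^{n+1}\,{\tt 1}\,{\tt 1}\in B_n$, with $v$ occurring in $w$ as the factor between its second and third ${\tt 1}$; note $|w|=(n+1)^2+5$. For minimality let $(v,w)$ be any witness. All words of $A_n$ (resp.\ $B_n$) contain the same number of ${\tt 1}$s and begin and end with ${\tt 1}$ (resp.\ ${\tt 1}{\tt 1}$); hence a factor occurrence of one word of a sublanguage inside another word of the \emph{same} sublanguage must map the extreme ${\tt 1}$s onto the extreme ${\tt 1}$s, forcing $v=w$; and a word of $B_n$ has strictly more ${\tt 1}$s than any word of $A_n$, so it cannot be a factor of one. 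Therefore $v={\tt 1}{\tt 0}^{an}{\tt 1}\in A_n$ and $w={\tt 1}{\tt 1}{\tt 0}^{b(n+1)}{\tt 1}{\tt 0}^{c(n+1)}{\tt 1}{\tt 1}\in B_n$. The only factors of $w$ of the form ${\tt 1}{\tt 0}^m{\tt 1}$ with $m\ge 1$ are ${\tt 1}{\tt 0}^{b(n+1)}{\tt 1}$ and ${\tt 1}{\tt 0}^{c(n+1)}{\tt 1}$, so $an=b(n+1)$ or $an=c(n+1)$; either way $(n+1)\mid an$, and since $\gcd(n,n+1)=1$ we get $(n+1)\mid a$, hence $a\ge n+1$ and the matching ${\tt 0}$-block of $w$ has length $\ge n(n+1)$. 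With $b,c\ge 1$ this yields $|w|\ge n(n+1)+(n+1)+5=(n+1)^2+5=\Omega(n^2)$, matching Corollary~\ref{cor141} up to a constant factor.

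The one delicate point --- and the thing I would pin down first --- is Step~1: reconciling the $O(n)$-state bound with the need for a real cross-sublanguage factor relation. The ${\tt 1}{\tt 0}$-versus-${\tt 1}{\tt 1}$ header is exactly what lets the automaton avoid tracking a residue modulo $n(n+1)$. Once that is in place, the internal factor-freeness of $A_n$ and $B_n$ (the endpoint-alignment argument) and the exact description of the ${\tt 1}{\tt 0}^*{\tt 1}$ factors of a $B_n$-word are routine.
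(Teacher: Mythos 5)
Your construction is correct and is essentially the paper's own: the paper uses $L = {\tt bb}({\tt a}^n)^+{\tt b} \,\cup\, {\tt b}({\tt a}^{n+1})^+{\tt b}$, i.e.\ the same idea of uniting two individually factor-free sublanguages whose block lengths are multiples of $n$ and of $n+1$ respectively, with unequal markers so that a cross-sublanguage factor occurrence forces a block of length a common multiple of $n$ and $n+1$, hence $\geq n(n+1)$. The only difference is that your $B_n$ carries a second, redundant ${\tt 0}$-block (one block, as in the paper, already suffices), and your verification of internal factor-freeness and of the $O(n)$ state bound matches the paper's (implicit) reasoning.
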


\begin{proof}
Let $L = {\tt bb}({\tt a}^n)^+{\tt b} \ \cup  \ {\tt b} ({\tt
a}^{n+1})^+{\tt b}$.  This language can be accepted by a DFA with
$2n+6$ states.  However, the shortest witness to lack of
factor-freeness is 
$({\tt b} {\tt a}^{n(n+1)}{\tt b}, {\tt bb} {\tt
a}^{n(n+1)}{\tt b})$,
which has size $n^2+n+3$.
\end{proof}

\subsection{Prefix-convexity}
For 
prefix-convexity, we have the following theorem.

\begin{theorem}
Let $M$ be a DFA with $n$ states.  Then if $L(M)$ is not prefix-convex,
there exists a witness $(u,v,w)$ with  $|w| \leq 2n-1$.  
Furthermore, this bound is best possible, as for all $n \geq 2$, there
exists a unary DFA with $n$ states that achieves this bound.
\label{npc}
\end{theorem}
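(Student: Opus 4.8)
\emph{Upper bound.} The plan is to analyze a shortest witness directly, rather than to read a bound off the automaton of Section~\ref{prefixc}. Suppose $L=L(M)$ is not prefix-convex and let $(u,v,w)$ be a witness with $|w|$ as small as possible; put $m=|w|$, $a=|u|$, $b=|v|$, so $0\le a<b<m$ (strict, since $u,w\in L$ and $v\notin L$). For $0\le t\le m$ write $\pi_t=\delta(q_0,w[1..t])$, the state reached after reading the length-$t$ prefix of $w$; thus $\pi_a\in F$, $\pi_b\notin F$, $\pi_m\in F$. The first step is a loop-removal observation: the three blocks $\pi_0\cdots\pi_a$, $\pi_a\cdots\pi_b$, $\pi_b\cdots\pi_m$ each consist of pairwise distinct states. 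Indeed, if $\pi_i=\pi_j$ with both indices in one block, we could delete the factor $w[i+1..j]$ from $w$—and also from $v$ (resp.\ from $u$) if the deleted positions lie within $v$ (resp.\ within $u$)—and since this only removes a loop, the endpoints $\pi_a,\pi_b,\pi_m$ survive, so the shortened triple is still a witness, contradicting minimality of $|w|$. This already gives the weak bound $m\le a+(b-a)+(m-b)\le 3(n-1)$.

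To sharpen this to $2n-1$, I will show that no state of $M$ occurs three times among $\pi_0,\dots,\pi_m$; as there are $n$ states, this forces $m+1\le 2n$. Suppose instead $\pi_i=\pi_j=\pi_k=:\sigma$ with $i<j<k$. A short case check, using the block-distinctness just established, shows that these three indices must land in the three blocks respectively, i.e.\ $i<a<j<b<k$. Write $w=ABCD$ with $A=w[1..i]$, $B=w[i+1..j]$, $C=w[j+1..k]$, $D=w[k+1..m]$, so that $B$ and $C$ are loops at $\sigma$. If $\sigma\in F$, then $(A,\ A\cdot w[j+1..b],\ ACD)$ is a witness: $A\in L$ since $\delta(q_0,A)=\sigma\in F$, the middle word ends at $\pi_b\notin F$, and $ACD$ ends at $\pi_m\in F$; its third component has length $i+(m-j)<m$. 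If $\sigma\notin F$, then $(u,\ AB,\ ABD)$ is a witness ($AB$ ends at $\sigma\notin F$, $ABD$ ends at $\pi_m\in F$), of length $j+(m-k)<m$. Either way we contradict minimality, so $|w|=m\le 2n-1$.

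\emph{Lower bound.} For $n\ge2$ let $M_n$ be the unary DFA with states $\{q_0,\dots,q_{n-1}\}$, letter ${\tt a}$, transitions $\delta(q_i,{\tt a})=q_{(i+1)\bmod n}$, start $q_0$, and $F=\{q_{n-1}\}$; this has $n$ states and accepts $L_n={\tt a}^{n-1}({\tt a}^n)^*$. It is not prefix-convex, witnessed by $({\tt a}^{n-1},{\tt a}^n,{\tt a}^{2n-1})$. Moreover any witness $({\tt a}^i,{\tt a}^j,{\tt a}^k)$ has $i<j<k$ with $i\equiv k\equiv n-1\pmod n$ (and ${\tt a}^j\notin L_n$); since $i\ge n-1$ we get $j\ge n$, hence $k\ge n+1$, and then $k\equiv n-1\pmod n$ forces $k\ge 2n-1$. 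So the minimal witness for $L_n$ has size exactly $2n-1$, proving tightness.

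\emph{Main obstacle.} The substantive point is passing from the easy $3n-3$ bound to $2n-1$, i.e.\ the claim that a minimal witness visits no state three times; within that, the care needed is in the tripled-state argument, where the choice of which word to excise depends on whether $\sigma\in F$, and one must verify in each branch that the excised word still satisfies all the proper-prefix and membership conditions required of a witness.
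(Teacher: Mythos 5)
Your proof is correct, and the upper-bound argument takes a genuinely different route from the paper's, even though both start by splitting the state sequence of a minimal witness at positions $|u|$ and $|v|$ and establishing loop-freeness within each of the three blocks. The paper's mechanism is a coloring argument: minimality forces every state strictly before $\delta(q_0,u)$ to be rejecting, every state from there up to (but excluding) $\delta(q_0,v)$ to be accepting, and every state from there up to (but excluding) $\delta(q_0,w)$ to be rejecting; hence the first and second blocks are disjoint, as are the second and third, giving $r_1+r_2\le n$ and $r_2+r_3\le n$ and so $|w|\le 2n-r_2\le 2n-1$. Your mechanism instead shows directly that no state of $M$ is visited three times along the path for $w$, by a surgery argument whose excision depends on whether the tripled state is accepting, giving $|w|+1\le 2n$. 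Your case analysis checks out: block-distinctness forces the three occurrences into the three distinct open segments, and in each branch the excised triple really does satisfy all prefix and membership conditions with a strictly shorter third component. The paper's version buys a cleaner structural picture (the reject/accept/reject coloring, which also yields the disjointness facts reused elsewhere) at the cost of invoking the finer lexicographic minimality of $(|w|,|v|,|u|)$; yours needs only minimality of $|w|$ and a sharper combinatorial claim (no triple visit), but pays with a more delicate case check. The lower-bound construction $\mathtt{a}^{n-1}(\mathtt{a}^n)^*$ is the same as the paper's, and your verification that every witness for it has third component of length at least $2n-1$ is a welcome detail the paper leaves implicit.
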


\begin{proof}
If $L(M)$ is not prefix-convex, then such a witness $(u,v,w)$ exists.
Without loss of generality, assume that $(u,v,w)$ is minimal.         
Now write $w = uyz$, where $v = uy$ and $w = vz$.

Let $\delta(q_0, u) = p$, $\delta(p,y) = q$, and $\delta(q,z) = r$.
Let $P$ be the path from $q_0$ to $r$ traversed by $uvw$, and
let $P_1$ be the states from $q_0$ to $p$ (not including $p$),
$P_2$ be the states from $p$ to $q$ (not including $q$), and
$P_3$ be the states from $q$ to $r$ (not including $r$).
See Figure~\ref{cont1}.
Since $(u,v,w)$ is minimal, we know that every state of $P_3$ is rejecting, since we could have found a shorter $w$ if there were
an accepting state among them.
Similarly, every
state of $P_2$  must be accepting,
for, if there were a rejecting state among them,
we could have found a shorter $y$ and hence a shorter $v$.
Finally, every state of $P_1$ must be rejecting,
since, if there were an accepting state, we could have found a shorter $u$.

\begin{figure}[H]
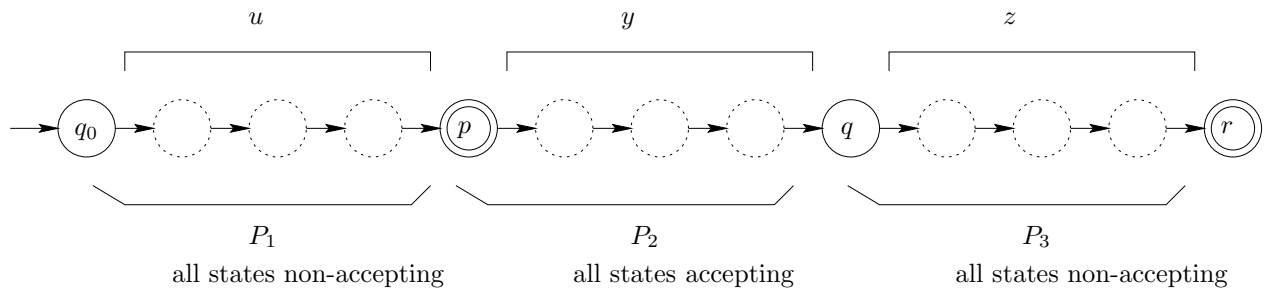

\begin{center}
\input cont1.pstex_t
\end{center}
\caption{The acceptance path for $w$}
\label{cont1}
\end{figure}

Let $r_i = |P_i|$ for $i = 1,2,3$.  
There are no repeated states in $P_3$, for if there were, we could
cut out the loop to get a shorter $w$; the same holds for $P_2$ and $P_1$.
Thus $r_i \leq n-1$ for $i = 1,2,3$.

Now $P_1$ and $P_2$ are disjoint, since all the states of $P_1$ are
rejecting, while all the states of $P_2$ are accepting.
Similarly, the states of $P_3$ are disjoint from $P_2$.
So $r_1 + r_2 \leq n$ and $r_2 + r_3 \leq n$.  
It follows that $r_1 + r_2 + r_3 \leq 2n-r_3$.  Since $r_3 \geq 1$, it
follows that $|w| \leq 2n-1$.  

To see that $2n-1$ is optimal, consider the DFA of $n$
states accepting the unary language $L = {\tt a}^{n-1}({\tt a}^n)^*$.  Then $L$
is not prefix-convex, and the shortest witness is
$({\tt a}^{n-1}, {\tt a}^n, {\tt a}^{2n-1})$.
\end{proof}

\subsubsection{Prefix-closure}

     For prefix-closed languages we can get an even better bound.

\begin{theorem}
Let $M$ be an $n$-state DFA, and suppose $L = L(M)$ is not prefix-closed.
Then the minimal witness $(v,w)$ showing $L$ is not prefix-closed
has $|w| \leq n$, and this is best possible.
\end{theorem}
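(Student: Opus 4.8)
The plan is to argue directly on the accepting path of $w$. Let $(v,w)$ be a minimal witness; then $w$ is a shortest word of $L$ having a rejecting prefix, and $v$ is a shortest rejecting prefix of $w$. Write $\ell=|w|$ and $w=a_1a_2\cdots a_\ell$, put $t=|v|$, and let $q_0=s_0,s_1,\dots,s_\ell$ be the states through which $M$ passes while reading $w$, so $s_i=\delta(q_0,a_1\cdots a_i)$. By the choice of $v$, the states $s_0,\dots,s_{t-1}$ are all accepting, $s_t$ is rejecting, and, since $w\in L$, the state $s_\ell$ is accepting. Write $w=vz$, so $z=a_{t+1}\cdots a_\ell$.

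The first step is to observe that $v$ must be a \emph{shortest} word with $\delta(q_0,v)=s_t$: if some $v'$ were shorter, then $v'z$ would reach $\delta(s_t,z)=\delta(q_0,w)=s_\ell$, so $v'z\in L$, while $v'\notin L$ because it reaches the rejecting state $s_t$; thus $(v',v'z)$ would be a witness of size below $|w|$, contradicting minimality. Consequently $s_0,\dots,s_t$ are pairwise distinct. Symmetrically, $z$ must be a shortest word taking $s_t$ to an accepting state (otherwise $(v,vz')$ with a shorter $z'$ beats $(v,w)$), so $s_t,\dots,s_\ell$ are pairwise distinct as well.

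The crux is to bound the overlap of these two segments. Since $s_0,\dots,s_{t-1}$ are accepting, $s_t$ is rejecting, and $s_t,\dots,s_\ell$ are distinct, any equality $s_i=s_j$ with $i<j$ along the whole path must be a ``crossing'' one, i.e. $i<t<j$. I would then show that every crossing equality in fact has $j=\ell$: if $j<\ell$, delete the loop to form $w''=(a_1\cdots a_i)(a_{j+1}\cdots a_\ell)$; since $\delta(s_i,a_{j+1}\cdots a_\ell)=\delta(s_j,a_{j+1}\cdots a_\ell)=s_\ell$, this word lies in $L$ and is shorter than $w$, so by minimality of $|w|$ it has no rejecting prefix. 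But the states reached by prefixes of $w''$ are exactly $\{s_0,\dots,s_i\}\cup\{s_{j+1},\dots,s_\ell\}$, hence all accepting; in particular $s_j$ is accepting, and it is reached from $s_t$ by $a_{t+1}\cdots a_j$, a word strictly shorter than $z$ --- contradicting that $z$ is a shortest word from $s_t$ to an accepting state. So the only possible crossing equality is $s_i=s_\ell$ for some $i<t$, and since $s_0,\dots,s_{t-1}$ are distinct there is at most one such $i$. Therefore at most one state is repeated along $s_0,\dots,s_\ell$ (and only twice), so this path visits at least $\ell$ distinct states of $M$; as $M$ has $n$ states, $|w|=\ell\le n$.

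For optimality, for every $n\ge 2$ I would take the $n$-state unary DFA with state set $\{0,1,\dots,n-1\}$, transitions $\delta(i,{\tt a})=(i+1)\bmod n$, start state $0$, and $F=\{0\}$; it accepts $L=({\tt a}^n)^*$, which is not prefix-closed since ${\tt a}^n\in L$ but its proper prefix ${\tt a}\notin L$. Thus $({\tt a},{\tt a}^n)$ is a witness of size $n$, and the only word of $L$ shorter than ${\tt a}^n$ is $\epsilon$, whose sole prefix lies in $L$; so no smaller witness exists and the minimal witness has size exactly $n$. The step I expect to need the most care is the crossing-equality argument: the two shortest-path segments $s_0,\dots,s_t$ and $s_t,\dots,s_\ell$ on their own give only the weaker bound $|w|\le 2n-2$, and the improvement to $n$ rests entirely on showing they share no state besides $s_t$ and, possibly, $s_\ell$.
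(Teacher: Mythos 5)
Your proof is correct and follows essentially the same approach as the paper's: decompose the accepting path of $w$ at the state $\delta(q_0,v)$, show each segment is repetition-free, show the two segments share essentially no states, and count. The one (minor) difference is in the middle step: where you run a crossing-equality/loop-deletion analysis to rule out overlaps between the two segments, the paper observes directly that minimality forces every state from $\delta(q_0,v)$ up to but not including $\delta(q_0,w)$ to be rejecting while every state before $\delta(q_0,v)$ is accepting, so the segments are disjoint for free.
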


\begin{proof}
Assume that $(v,w)$ is a minimal witness.
Consider the path $P$ from $q_0$ to $q = \delta(q_0, w)$, passing
through $p = \delta(q_0, v)$.  Let $P_1$ denote the part of the path $P$
from $q_0$ to $p$ (not including $p$) and $P_2$ denote the part of the
path from $p$ to $q$ (not including $q$).
Then all the states traversed in $P_2$
must be rejecting, because if any were
accepting we would get a shorter $w$.  Similarly, all the states traversed
in $P_1$
must be accepting, because otherwise
we could get a shorter $v$.  Neither $P_1$ nor $P_2$ contains a repeated
state, because if they did, we could ``cut out the loop'' to get a
shorter $v$ or $w$.  Furthermore, the states in $P_1$ are disjoint from
$P_2$.  So the total number of states in the path to $w$ (not counting $q$) 
is at most $n$.  Thus $|w| \leq n$.

The result is best possible, as the example of the unary language
$L = ({\tt a}^n)^*$ shows.   This language is not prefix-closed,
can be accepted by a DFA with
$n$ states, and the smallest witness is $({\tt a}, {\tt a}^n)$.
\end{proof}

\subsubsection{Prefix-freeness}
For the prefix-free property we have:

\begin{theorem}
If $L$ is accepted by a DFA with $n$ states and is not prefix-free,
then there exists a witness $(v,w)$ with $|w| \leq 2n-1$.  The bound
is best possible.
\end{theorem}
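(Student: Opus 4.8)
The plan is to mimic, almost verbatim, the argument used for prefix-convexity in Theorem~\ref{npc}, with one less path segment to manage. Suppose $L$ is not prefix-free; then there exist words $v, w \in L$ with $v$ a proper prefix of $w$, so we may write $w = vz$ with $z \neq \epsilon$. Choose a minimal such witness $(v,w)$ (minimizing $|w|$, then $|v|$). Let $p = \delta(q_0,v)$ and $q = \delta(q_0,w)$, and let $P$ be the acceptance path of $w$ from $q_0$ to $q$ passing through $p$. Write $P_1$ for the states strictly before $p$ (i.e.\ from $q_0$ up to but not including $p$) and $P_2$ for the states from $p$ up to but not including $q$.

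The key structural observations, forced by minimality, are: first, every state strictly inside $P_2$ must be rejecting, since an accepting state there would give a shorter $w$ still having $v$ as a proper prefix; and second, $P_1$ and $P_2$ each contain no repeated state, since a repeated state would let us cut out a loop and shorten $v$ or $w$ while keeping the prefix relation (here one must be slightly careful that cutting a loop inside $P_1$ keeps $v$ nonempty — but if the loop passed through $q_0$, we could instead shorten without destroying membership, and in any event $p$ itself is accepting so it is not among the cut states). Note that, unlike the prefix-convex case, we do \emph{not} get that $P_1$ is all-rejecting; indeed $p$ is accepting. So $P_1$ may intersect $P_2$. This means we only get $|P_1| \le n-1$ and $|P_2| \le n-1$, hence $|w| = |P_1| + |P_2| + 1 \le (n-1)+(n-1)+1 = 2n-1$, which is the claimed bound. (There is no disjointness gain to exploit, which is exactly why the bound is $2n-1$ rather than something smaller.)

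For optimality, I would exhibit an $n$-state unary DFA achieving the bound. The natural candidate is $L = {\tt a}^{n-1}({\tt a}^n)^* = \{{\tt a}^{n-1}, {\tt a}^{2n-1}, {\tt a}^{3n-1}, \ldots\}$, accepted by the standard $n$-state cycle DFA with the single accepting state placed so that ${\tt a}^{n-1}$ is accepted. The shortest witness is $({\tt a}^{n-1}, {\tt a}^{2n-1})$, since ${\tt a}^{n-1}$ is a proper prefix of ${\tt a}^{2n-1}$ and both lie in $L$, while there is no shorter pair of words in $L$ standing in the prefix relation (the two shortest words in $L$ are precisely ${\tt a}^{n-1}$ and ${\tt a}^{2n-1}$). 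This gives $|w| = 2n-1$, matching the upper bound.

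The only real subtlety — and the step I would write out most carefully — is the loop-cutting argument for $P_1$: one must confirm that removing a repeated-state loop from the $q_0$-to-$p$ portion of the path yields a genuinely shorter word $v'$ that is still a proper prefix of the correspondingly shortened $w'$ and still lands in $p$ and $q$ respectively, so that $v', w' \in L$. Because $p = \delta(q_0,v)$ is fixed and accepting, and cutting a loop strictly before $p$ does not change $\delta(q_0, \cdot)$ at $p$ or at $q$, this goes through; the resulting $(v',w')$ is a strictly smaller witness, contradicting minimality. Everything else is a direct transcription of the prefix-convex proof with $P_3$ deleted, so I do not anticipate further obstacles.
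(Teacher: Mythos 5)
Your overall strategy is the right one and is what the paper intends (its proof of this theorem just says ``similar to Theorem~\ref{npc}'' and gives the same unary example), but the counting in your write-up does not actually establish the bound; it reaches $2n-1$ only because two errors cancel. First, the length formula is off by one: with $P_1$ the states at positions $0,\dots,|v|-1$ of the acceptance path and $P_2$ the states at positions $|v|,\dots,|w|-1$, one has $|w|=|P_1|+|P_2|$, not $|P_1|+|P_2|+1$ (exactly as in Theorem~\ref{npc}, where $|w|=r_1+r_2+r_3$). Second, the bounds $|P_1|\le n-1$ and $|P_2|\le n-1$ are asserted but not derived: absence of repeated states only gives $\le n$ for each segment, and $|P_2|\le n-1$ is in fact false --- in your own tightness example ${\tt a}^{n-1}({\tt a}^n)^*$ with $w={\tt a}^{2n-1}$, the segment $P_2$ visits all $n$ states of the cycle. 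With the corrected length formula and only the justified bounds $|P_i|\le n$, you would get $2n$, not $2n-1$.

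The missing ingredient is precisely the fact you explicitly disclaim. Since a minimal witness minimizes $|w|$ first and then $|v|$, every state of $P_1$ \emph{is} rejecting: an accepting state at position $i<|v|$ would yield a witness $(v',w)$ with the same $w$ and a strictly shorter $v'$ (the length-$i$ prefix of $v$). Note that $p=\delta(q_0,v)$ is \emph{not} an element of $P_1$ under your own definition, so ``$p$ is accepting'' is no obstacle to this claim. Now, because $P_1$ is repeat-free, consists entirely of rejecting states, and $p$ is accepting, we get $p\notin P_1$ and hence $|P_1|=|v|\le n-1$; together with $|P_2|=|z|\le n$ (repeat-freeness of $P_2$, where cutting a loop of length at most $|z|-1$ keeps $v$ a proper prefix) this gives $|w|=|v|+|z|\le 2n-1$. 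Your discussion of the loop-cutting subtlety in $P_1$ is fine but addresses the wrong worry, and your optimality example and its analysis are correct.
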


\begin{proof}
The proof is similar to that of Theorem~\ref{npc}.  The bound is achieved by
a unary DFA accepting
${\tt a}^{n-1} ({\tt a}^n)^*$.
\end{proof}

\subsection{Suffix-convexity}
\label{suff-convex-sec}
For the suffix-convex property, the cubic upper bound implied
by Corollary~\ref{cor1} is best possible, up to a constant factor.

\begin{theorem}
There exists a class of non-suffix-convex
regular languages $L_n$, accepted by DFA's with
$O(n)$ states, such the size of the minimal witness is 
$\Omega(n^3)$.
\label{nsc}
\end{theorem}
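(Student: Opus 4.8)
The plan is to build an explicit family of DFAs $M_n$ with $\Theta(n)$ states whose languages $L_n$ fail both suffix- and factor-convexity, but only with an astronomically long witness; this single construction is intended to prove Theorem~\ref{nsc} and the postponed Theorem~\ref{non-factor-convex} at once. The key idea is to encode three congruence conditions, modulo three pairwise-coprime moduli each of size $\Theta(n)$, so that a word $w$ can be the outer word of a witness only if a long block of letters inside it simultaneously satisfies all three congruences; the Chinese Remainder Theorem then forces that block to have length at least the product of the three moduli, i.e.\ $\Omega(n^3)$.

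Concretely, for odd $n$ take the pairwise-coprime values $\ell_1 = n$, $\ell_2 = n+1$, $\ell_3 = n+2$, and over $\{{\tt a},{\tt b}\}$ set
$$ L_n \ = \ {\tt b}\,({\tt a}^{\ell_3})^+\,{\tt b} \ \cup\ \{\,{\tt b}{\tt b}\,{\tt a}^j\,{\tt b} \ :\ j\ge 1,\ \ell_2 \nmid j\,\} \ \cup\ {\tt b}{\tt b}{\tt b}\,({\tt a}^{\ell_1})^+\,{\tt b}. $$
A natural DFA for $L_n$ has a short ${\tt b}$-counting prefix $q_0\to p_1\to p_2\to p_3$, three disjoint cycles of lengths $\ell_3,\ell_2,\ell_1$ entered (on the first ${\tt a}$) after reading one, two, three ${\tt b}$'s respectively, three accepting sink states, and a dead state; this is $\ell_1+\ell_2+\ell_3+O(1)=\Theta(n)$ states. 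For even $n$ one replaces this triple by a nearby pairwise-coprime triple of size $\Theta(n)$.

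I would carry out the argument in three steps. (1) Exhibit the witness $(u,v,w)=({\tt b}{\tt a}^M{\tt b},\ {\tt b}{\tt b}{\tt a}^M{\tt b},\ {\tt b}{\tt b}{\tt b}{\tt a}^M{\tt b})$ with $M=\ell_1\ell_2\ell_3$: here $u$ is a suffix (hence a factor) of $v$, $v$ a suffix (hence a factor) of $w$, while $w\in L_n$ since $\ell_1\mid M$, $u\in L_n$ since $\ell_3\mid M$, and $v\notin L_n$ since $\ell_2\mid M$; so $L_n$ is neither suffix- nor factor-convex, and $|w|=M+4=\Theta(n^3)$. (2) For the lower bound, take an arbitrary witness $(u,v,w)$, of either kind. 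Since $w\in L_n$, $w$ has one of the three shapes ${\tt b}{\tt a}^*{\tt b}$, ${\tt b}{\tt b}{\tt a}^*{\tt b}$, ${\tt b}{\tt b}{\tt b}{\tt a}^*{\tt b}$; enumerate the proper suffixes (resp.\ factors) of $w$ that themselves lie in $L_n$. If $w$ has shape ${\tt b}{\tt a}^*{\tt b}$ there are none; if $w$ has shape ${\tt b}{\tt b}{\tt a}^*{\tt b}$ the only one is of shape ${\tt b}{\tt a}^*{\tt b}$, leaving no room for a non-member $v$ strictly between $u$ and $w$; so $w={\tt b}{\tt b}{\tt b}{\tt a}^M{\tt b}$, and the only admissible choices for $u,v$ among the relevant factors force $u={\tt b}{\tt a}^M{\tt b}$ and $v={\tt b}{\tt b}{\tt a}^M{\tt b}$ for one common $M$. (3) Then $w\in L_n$ forces $\ell_1\mid M$, $v\notin L_n$ forces $\ell_2\mid M$, and $u\in L_n$ forces $\ell_3\mid M$; by coprimality $M\ge \ell_1\ell_2\ell_3$, so $|w|=M+4=\Omega(n^3)$, matched exactly by the witness of step (1).

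The main obstacle — the part requiring genuine care rather than a slogan — is step (2): ruling out \emph{every} shorter or differently shaped witness. One must check that each shape is convex in isolation (so there is no witness confined to one shape), that an outer word of shape ${\tt b}{\tt a}^*{\tt b}$ or ${\tt b}{\tt b}{\tt a}^*{\tt b}$ admits no admissible non-member between its only in-language proper factor and itself, and — for the factor case — that allowing letters to be chopped from the left as well as the right never produces a new in-language factor, so the factor witnesses are no shorter than the suffix witnesses. A secondary, routine point is arranging the three moduli to be simultaneously pairwise coprime for infinitely many $n$, so that $\lcm(\ell_1,\ell_2,\ell_3)=\ell_1\ell_2\ell_3=\Theta(n^3)$.
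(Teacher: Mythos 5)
Your proposal is correct and is essentially the paper's own argument: the paper takes $L = {\tt bbb}({\tt a}^{n-1})^+ \ \cup\ {\tt bb}({\tt a}+\cdots+{\tt a}^{n-1})({\tt a}^n)^* \ \cup\ {\tt b}({\tt a}^{n+1})^+$, runs exactly your step-(2) case analysis on the number of leading ${\tt b}$'s to force the witness $({\tt b}{\tt a}^i,{\tt bb}{\tt a}^i,{\tt bbb}{\tt a}^i)$, and gets $i = \lcm(n-1,n,n+1) \ge (n-1)n(n+1)/2$, handling the factor case (Theorem~\ref{non-factor-convex}) by appending a trailing ${\tt b}$ just as you build in from the start. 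Your only deviations are cosmetic: you use the moduli $n,n+1,n+2$ with a parity restriction to get pairwise coprimality, where the paper avoids that by using consecutive integers and the elementary lcm lower bound.
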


\begin{proof}
Let 
$$L =
{\tt bbb } ({\tt a}^{n-1})^+  \ \cup \ 
{\tt bb}({\tt a} + {\tt aa} + \cdots + {\tt a}^{n-1})  ({\tt a}^n)^* \ \cup \ 
{\tt b} ({\tt a}^{n+1})^+ .
$$
Then $L$ can be accepted by a DFA with $3n+5$ states, as illustrated in
Figure~\ref{cont2}.

\begin{figure}[H]
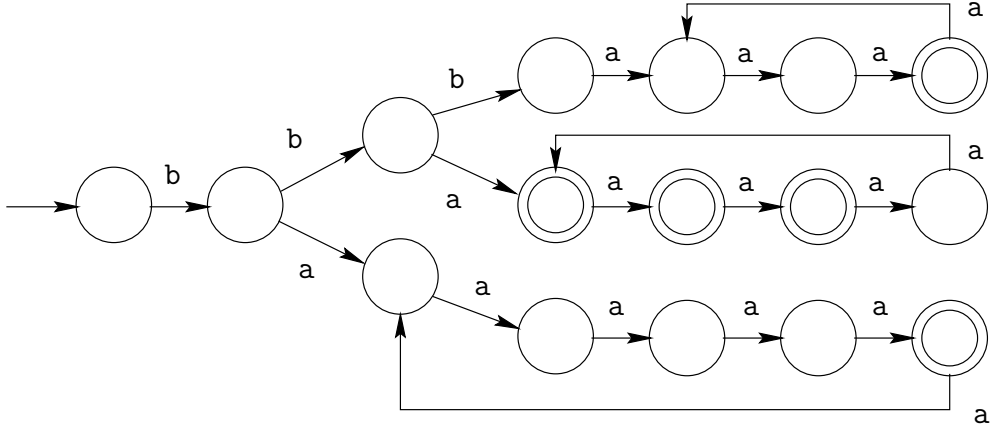

\begin{center}
\input cont2.pstex_t
\end{center}
\caption{Example of the construction in Theorem~\ref{nsc} for
$n = 4$.  All unspecified transitions go to a rejecting ``dead state'' $d$
that cycles on all inputs.}
\label{cont2}
\end{figure}

Suppose $(u, v, w)$ is a witness;  then $w$ cannot be a word of the form ${\tt b}{\tt a}^i$,
because no proper suffix of such a word is in $L$.
Also, $w$ cannot be a word of the form ${\tt bb} {\tt a}^i$, because the
only proper suffix in $L$ is $u = {\tt b} {\tt a}^i$.  But then there is
no word $v$ that lies strictly between $u$ and $w$ in the suffix order.  
So $w$ must be of the form ${\tt bbb} {\tt a}^i$.  The only proper suffixes
of $w$ in $L$ are of the form ${\tt bb} {\tt a}^i$ and
${\tt b}{\tt a}^i$.  But we cannot have $u = {\tt bb} {\tt a}^i$ because,
if we did, there would be no $v$ strictly between $u$ and $w$ in the suffix
order.  So it must be that $u = {\tt b} {\tt a}^i$.  Then the only word
in $\Sigma^*$ strictly between $u$ and $w$ in the suffix order is
$v = {\tt bb}{\tt a}^i$, and such a $v$ is not in $L$ if and only if
$i$ is a multiple of $n$.  On the other hand, for $u$ and $w$ to be in
$L$, $i$ must be a multiple of $n+1$ and $n-1$, respectively.

It follows that $L$ is not suffix-convex and the shortest witness
is $({\tt b} {\tt a}^i, {\tt bb} {\tt a}^i, {\tt bbb} {\tt a}^i)$,
where $i = \lcm (n-1,n,n+1) \geq (n-1)n(n+1)/2$.
\end{proof}

A similar technique can be used for non-factor-convex languages.
This allows us to prove Theorem~\ref{non-factor-convex}.

\begin{proof} (of Theorem~\ref{non-factor-convex})
Exactly like the proof of Theorem~\ref{nsc}, except we use the language
$L {\tt b}$ instead.
\end{proof}

\subsubsection{Suffix-closure}

Obviously, a witness to the failure of
suffix-closure is also a witness to the failure of factor-closure. 
So the proof of Theorem~\ref{xu} shows that  the bound $(n+1)^2-1$
also holds for suffix-closed languages.   

Ang and Brzozowski pointed out
\cite{Ang&Brzozowski:2008}
that a language $L$ is factor-closed
if and only if $L$ is both prefix-closed and suffix-closed. Here is
another relationship concerning the witnesses for these properties.

\begin{proposition}
Let $M$ be a DFA of $n$ states, and $L=L(M)$.
Let $v$ be the shortest word such that there is $u\not\in L, v\in L,
|v|>n$ and $u$ is a factor of $v$. Then $u$ is a suffix of $v$.
\end{proposition}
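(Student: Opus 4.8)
The plan is to argue by contradiction, exploiting the minimality of $v$ together with the hypothesis $|v|>n$, which forces the accepting path of $v$ in $M$ to repeat a state and hence to be \emph{pumpable down}. Suppose $u$ is not a suffix of $v$, and fix an occurrence $v=xuy$ with $y\neq\epsilon$. First I would remove the dependence on $x$: if $xu\notin L$, then $xu$ is itself a factor of $v$ lying outside $L$, and its occurrence reaches the right end of $u$'s occurrence while touching the left end of $v$, so after replacing $u$ by $xu$ we may assume $x=\epsilon$, i.e.\ $v=uy$ with $u$ a prefix of $v$; if instead $xu\in L$, then $(u,xu)$ is a pair with $u\notin L$, $xu\in L$, and $u$ a suffix (hence factor) of $xu$, and since $|xu|=|v|-|y|<|v|$, minimality of $v$ forces $|xu|\le n$. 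So either we are reduced to the prefix case, or we are in the residual case $|xu|\le n$ (to be treated separately, by symmetry with the analysis of the prefix case applied inside the short word $xu$).

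In the prefix case, write $v=uy$ with $y\neq\epsilon$ and take $u$ to be the \emph{longest} prefix of $v$ that lies outside $L$ (a proper prefix, since $v\in L$). Let $q_0=s_0,s_1,\dots,s_{|v|}$ be the states visited along $v$. By maximality of $u$, every state among $s_{|u|+1},\dots,s_{|v|}$ is accepting while $s_{|u|}$ is rejecting; in particular $s_{|u|}$ does not recur among $s_{|u|+1},\dots,s_{|v|}$, and any repetition that straddles position $|u|$ must have its left endpoint at an accepting state, hence strictly before $s_{|u|}$. Now invoke $|v|>n$: the path repeats a state, and excising a simple loop that lies entirely among $s_0,\dots,s_{|u|}$ yields $u'\notin L$ with $v'=u'y\in L$ and $u'$ a bad prefix of $v'$, while excising a simple loop lying entirely among $s_{|u|+1},\dots,s_{|v|}$ yields a nonempty $y'$ with $v'=uy'\in L$ and $u$ still a bad prefix; in both cases $v'$ is a strictly shorter word of $L$ having a factor outside $L$, contradicting minimality.

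The step I expect to be the main obstacle — and the place where the hypothesis $|v|>n$ must be used with care — is \emph{length control}: excising a simple loop can shorten $v$ by as much as $n$, so $v'$ need not satisfy $|v'|>n$ and hence need not contradict minimality directly, and moreover a loop that straddles position $|u|$ can destroy the bad prefix altogether. The delicate regime is therefore $n<|v|\le 2n$ (and, symmetrically, the case $|xu|\le n$ left open in the first paragraph), where the accepting path can be, for instance, a short cycle traversed a bounded number of times and every simple loop is long. To push through this regime I would try to choose the excised loop to be as short as possible, or to iterate the excision while tracking how the bad factor migrates toward an end of the word; failing that, one restricts attention to a canonically chosen minimal $v$ (e.g.\ the shortlex-least one) so that the reduction to a bad prefix — and then, by an analogous rightward argument, to a bad suffix — can be forced. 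Everything outside this short-length analysis is routine bookkeeping about which states along the path are accepting.
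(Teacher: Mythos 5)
There is a genuine gap: your argument is explicitly left unfinished in exactly the cases that matter. The straddling-loop case and the length-control problem (the word obtained by excising a loop may fail to satisfy $|v'|>n$) are not resolved; you offer only candidate strategies (``choose the shortest loop'', ``iterate the excision'', ``pick a canonical minimal $v$'') without carrying any of them out, and the residual case $|xu|\le n$ is not actually symmetric to the prefix case, since the obstruction lives in $v$, not inside $xu$. These are not routine bookkeeping: under your literal reading of the minimality (shortest $v$ subject to the side condition $|v|>n$), the ``delicate regime'' is genuinely fatal. For example, take the two-state DFA over $\{{\tt a},{\tt b}\}$ with $\delta(q_0,{\tt a})=\delta(q_0,{\tt b})=\delta(q_1,{\tt b})=q_1$, $\delta(q_1,{\tt a})=q_0$, and $F=\{q_0\}$: the word $v={\tt aba}$ is a shortest word of $L$ of length $>n$ having a factor outside $L$, and $u={\tt ab}\notin L$ is such a factor but is a prefix, not a suffix. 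So no proof can close your case analysis under that reading.

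The intended reading (see the sentence following the proposition) is that $v$ is the shortest witness to the failure of factor-closure, full stop, and $|v|>n$ is an additional hypothesis on that minimal witness; then any strictly shorter witness, of whatever length, contradicts minimality, and the delicate regime disappears. With that reading the paper's proof is three lines and sidesteps everything you struggle with, because it does not try to preserve $u$ (or a bad prefix derived from it) through the excision at all. Write $v=v'a$ with $a$ the last letter; since $u$ is not a suffix of $v$, $u$ is a factor of $v'$, so $v'\notin L$ by minimality of $v$. Since $|v'|\ge n$, pump $v'$ down as $v'=xyz$ with $\delta(q_0,xz)=\delta(q_0,v')$ and $|y|>0$; then $xz\notin L$ while $xza\in L$, so $(xz,xza)$ is a strictly shorter witness --- contradiction. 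The new bad factor $xz$ is manufactured from the pumped-down $v'$ rather than inherited from $u$, which is why no analysis of where the excised loop sits relative to $u$, and no lower bound on $|xza|$, is ever needed.
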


\begin{proof}Suppose $u$ is not a suffix of $v$. Write $v=v'a$
for $a\in\Sigma$. Then $u$ is also a factor of $v'$. So, $v'\not\in
L$. Since $|v'|\geq n$, by the pumping lemma, $v'=xyz$, such that
$xz\not\in L, |xz|<|xyz|$. But $xza\in L$ since $xyza=v'a=v\in L$.
This contradicts that $v$ is the shortest.
\end{proof}

In other words, a long minimal witness for factor-closure must
also be a witness for suffix-closure.

\subsubsection{Suffix-freeness}

\begin{theorem}
There exists a class of languages accepted by DFA's with $O(n)$ states,
such that the smallest witness showing the language not suffix-free
is of size $\Omega(n^2)$.
\label{notsuffixfree}
\end{theorem}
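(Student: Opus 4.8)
The plan is to mimic the construction already used for factor-freeness in Theorem stated just above (the language $L = {\tt bb}({\tt a}^n)^+{\tt b} \cup {\tt b}({\tt a}^{n+1})^+{\tt b}$), adapting it so that the short prefixes ${\tt bb}$ and ${\tt b}$ are replaced by short \emph{suffixes}. First I would exhibit an explicit regular language whose two ``branches'' end in different short blocks; a natural candidate is
\[
L = {\tt b}({\tt a}^n)^+{\tt bb} \ \cup \ {\tt b}({\tt a}^{n+1})^+{\tt b},
\]
which is the ``reversal-flavoured'' analogue of the factor-free example. I would then check that $L$ is accepted by a DFA with $O(n)$ states: one counts $q_0$, a loop of $n$ states for the first branch, a loop of $n+1$ states for the second branch, plus a constant number of extra states for reading the leading ${\tt b}$ and the trailing ${\tt b}$ or ${\tt bb}$, and a dead state — a total of $2n + O(1)$ states.

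Next I would identify the witness. A pair $(v,w)$ with $v,w\in L$ and $v$ a proper suffix of $w$ forces, by inspection of the two branches, that $w$ comes from one branch and $v$ from the other, with the longer ${\tt a}$-block of $w$ having $v$'s ${\tt a}$-block as a suffix. Because one branch uses blocks that are multiples of $n$ and the other uses multiples of $n+1$, and the trailing ${\tt b}$-parts must match up as suffixes, the common ${\tt a}$-block length must be a common multiple of $n$ and $n+1$, hence at least $\lcm(n,n+1) = n(n+1)$. So the shortest witness is essentially $({\tt b}{\tt a}^{n(n+1)}{\tt b}, {\tt b}{\tt a}^{n(n+1)}{\tt bb})$ or the appropriate suffix-compatible variant, of size $n^2+n+O(1) = \Omega(n^2)$.

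The main obstacle — really the only delicate point — is getting the suffix bookkeeping to work out so that the two branches genuinely \emph{can} share a suffix: since the trailing symbols are ${\tt b}$ in one branch and ${\tt bb}$ in the other, I must arrange the branch designs so that a word of one branch really is a proper suffix of a word of the other (e.g. ${\tt b}{\tt a}^k{\tt b}$ must be a proper suffix of ${\tt b}{\tt a}^k{\tt bb}$, which it is, since ${\tt b}{\tt a}^k{\tt bb}$ ends in ${\tt b}{\tt a}^k{\tt b}$ only if $k=0$ — so one must instead compare along the ${\tt a}$-block, e.g.\ let the witness be $({\tt a}^{n(n+1)}{\tt b}{\tt b},\ {\tt b}{\tt a}^{n(n+1)}{\tt b}{\tt b})$ after suitably redefining $L$). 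I would therefore finalize the exact language only after verifying this suffix-compatibility; the cleanest fix is to use
\[
L = {\tt bb}({\tt a}^n)^+{\tt b} \ \cup \ {\tt b}({\tt a}^{n+1})^+{\tt b}
\]
\emph{reversed}, i.e.\ $L = {\tt b}({\tt a}^n)^+{\tt bb} \cup {\tt b}({\tt a}^{n+1})^+{\tt b}$, and take the witness $({\tt b}{\tt a}^{n(n+1)}{\tt b},\ {\tt b}{\tt b}{\tt a}^{n(n+1)}{\tt b})$ — wait, that is not a suffix relation either, so the genuinely correct move is to reverse the \emph{factor-free} example wholesale: since $u$ is a factor of $w$ iff $u^R$ is a factor of $w^R$, and suffix corresponds to prefix under reversal, the right statement is to take $L^R$ for the factor-free $L$ but keep only the suffix-relevant pairs; concretely, $L = {\tt b}({\tt a}^n)^+{\tt bb} \cup {\tt b}({\tt a}^{n+1})^+{\tt b}$ with witness $({\tt b}{\tt a}^{n(n+1)}{\tt b},\ {\tt b}{\tt b}{\tt a}^{n(n+1)}{\tt b})$ read right-to-left works, and this is the step I would nail down carefully. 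Once the language and witness are pinned down, the lower bound $\Omega(n^2)$ on the witness size and the $O(n)$ bound on the DFA size are both immediate from the $\lcm$ estimate and a direct state count, exactly as in the factor-free case.
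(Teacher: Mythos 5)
There is a genuine gap: your final construction does not work, and the confusion stems from putting the distinguishing marker blocks in the wrong place. For the suffix relation you need $w = zv$, i.e.\ the two words must \emph{agree at their right ends} and differ only by a block prepended on the left. Your instinct to ``replace the short prefixes ${\tt bb}$ and ${\tt b}$ by short suffixes'' is exactly backwards: ${\tt b}{\tt a}^k{\tt b}$ is not a suffix of ${\tt b}{\tt a}^k{\tt bb}$ (the latter ends in ${\tt bb}$, the former in ${\tt ab}$), which is why every variant you wrote down with trailing markers fails. The reversal argument you fall back on also cannot rescue this: reversal turns suffixes into prefixes, so reversing the factor-free example produces a witness to the failure of \emph{prefix}-freeness of $L^R$, not suffix-freeness. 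Note also that the pair you dismissed with ``that is not a suffix relation either,'' namely $({\tt b}{\tt a}^{n(n+1)}{\tt b},\ {\tt bb}{\tt a}^{n(n+1)}{\tt b})$, in fact \emph{is} a suffix pair --- ${\tt bb}{\tt a}^{k}{\tt b}$ ends with ${\tt b}{\tt a}^{k}{\tt b}$; the real problem at that point was that those words do not both lie in the language you had defined.

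The correct move is to keep the markers as \emph{prefixes}, since ${\tt b}x$ is automatically a proper suffix of ${\tt bb}x$. The paper takes
$L = {\tt bb}({\tt a}^n)^+ \ \cup \ {\tt b}({\tt a}^{n+1})^+$,
accepted by a DFA with $2n+5$ states, with shortest witness $({\tt b}{\tt a}^{n(n+1)},\ {\tt bb}{\tt a}^{n(n+1)})$ of size $n^2+n+2$; the $\lcm(n,n+1)=n(n+1)$ argument you sketch then applies verbatim, because matching a word of one branch against a suffix of a word of the other forces the common ${\tt a}$-block length to be a multiple of both $n$ and $n+1$. (The unmodified factor-free language ${\tt bb}({\tt a}^n)^+{\tt b} \cup {\tt b}({\tt a}^{n+1})^+{\tt b}$ would also have worked, since its minimal factor witness is already a suffix witness.) Your $O(n)$ state count and the $\Omega(n^2)$ lower-bound idea are fine; it is only the choice of language and witness that needs to be repaired as above.
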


\begin{proof}
Let $L = {\tt bb}({\tt a}^n)^+ \ \cup  \ {\tt b} ({\tt a}^{n+1})^+$.
This language is accepted by a DFA with $2n+5$ states.
However, the shortest witness to the lack of suffix-freeness
$({\tt b} {\tt a}^{n(n+1)}, {\tt bb} {\tt a}^{n(n+1)})$ has size 
$n^2+n+2$.
\end{proof}

\subsection{Subword-convexity}
We now turn to subword properties. First, we recall some facts about the pumping lemma.
If $w = a_1  \cdots a_m$ with  $a_i\in \Sigma$ for $1 \leq i \leq m$,
we write $w[i,j]$ for the factor $a_i \cdots a_j$.
Assume that $M = (Q, \Sigma, \delta, q_0, F)$ is an $n$-state DFA, $m\ge n$, let $q\in Q$, and consider the state sequence $$S(q,w)=(\delta(q,w[1,0]),\ldots,\delta(q,w[1,m])).$$
We know that some state in $S(q,w)$ must appear more than once, because there are only $n$ distinct states in $M$. 
Let $\delta(q,w[1,i])$ be the first state that appears more than once in $S$, and let $x=w[1,i]$.
Moreover, let $\delta(q,w[1,j])$ be the first state in $S(q,w)$ equal to $\delta(q,w[1,i])$, and let $y=w[i+1,j]$. 
Finally, let $z=w[j+1,m]$.
Then $w=xyz$, where $|xy|\le n$ , $|y|>0$, and $|z|\ge m-n$, and $\delta(q,x)=\delta(q,xy)$.
By the pumping lemma,  $xy^*z\subseteq L$.
By the definition of $x$ and $y$, all the states in the sequence $S(q,w[1,j-1])$ are distinct. 
For a word $w$ with $|w|=m\ge n$, we refer to the factorization $w=xyz$  as the {\it canonical factorization  of $w$ with respect to $q$\/}.

\subsubsection{Subword-closure}
Here $v\unlhd w$ means $v$ is a subword of $w$.
If $L=L(M)$ is not subword-closed, then  $(v,w)$ is a {\it witness\/} if $w \in L$, $v \not\in L$, and $v\unlhd w$.

\begin{lemma}
\label{Lsubwordclosed}
Let $M$ be a DFA with $n\ge 2$ states such that $L(M)$ is not subword-closed.  For any  witness $(v,w)$, there exists a witness $(v',w')$ with $|w'| \leq n$ and $w'\unlhd w$.  
\end{lemma}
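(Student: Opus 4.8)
The plan is to prove, equivalently, the following by induction on $|w|$: if $(v,w)$ is a witness and $|w|\ge n+1$, then there is a witness $(v'',w'')$ with $w''$ a proper subword of $w$; since $\unlhd$ is transitive, iterating this down to $|w''|\le n$ gives the lemma. First I would replace $v$ by a non-member that is subword-minimal among the subwords of $w$: applying the canonical factorization recalled above to $v$ itself, a non-member of length $\ge n$ can be shortened to a strictly shorter word of the same membership that is still a subword of $w$, so we may assume $|v|\le n-1$. Write $a$ for the last letter of $w$ and split on whether some embedding of $v$ into $w$ avoids the last position of $w$.

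\emph{Case A: $v\unlhd w[1,|w|-1]$.} If $w[1,|w|-1]\in L$, then $(v,w[1,|w|-1])$ is the desired shorter witness. If $w[1,|w|-1]\notin L$, take its canonical factorization $w[1,|w|-1]=xyz$ with respect to $q_0$ (possible since $|w[1,|w|-1]|\ge n$), so $\delta(q_0,x)=\delta(q_0,xy)$ and $|y|\ge 1$. Then $\delta(q_0,xz)=\delta(q_0,w[1,|w|-1])$, so $xz\notin L$, while $\delta(q_0,xza)=\delta(q_0,w)\in F$, so $xza\in L$. Since $xz$ is a prefix of $xza$, the pair $(xz,xza)$ is a witness, $xza\unlhd w$, and $|xza|<|w|$.

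\emph{Case B: every embedding of $v$ into $w$ uses the last position.} Then $v=\hat v a$ with $\hat v\unlhd w[1,|w|-1]$, and subword-minimality of $v$ forces $\hat v\in L$. The construction of Case A (which used only that $w[1,|w|-1]$ was a long non-member) forces $w[1,|w|-1]\in L$, and $w[1,|w|-1]$ has no non-member subword since it is shorter than $w$ and therefore not a witness. Take the canonical factorization $w[1,|w|-1]=xyz$ with respect to $q_0$: $|xy|\le n$, $|y|\ge 1$, $|z|\ge|w|-1-n$, $\delta(q_0,x)=\delta(q_0,xy)$, and $\delta(q_0,w[1,0]),\dots,\delta(q_0,w[1,|xy|-1])$ are distinct. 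As before $xz,xza\in L$, and both are shorter than $w$, hence have no non-member subword. If $\hat v\unlhd xz$ then $v=\hat v a\unlhd xza\in L$ is a shorter witness; otherwise the embedding of $\hat v$ meets $y$, so $\hat v=\alpha\beta\gamma$ with $\alpha\unlhd x$, $\beta\unlhd y$, $|\beta|\ge 1$, $\gamma\unlhd z$, and then $\alpha\gamma\unlhd xz$, $\alpha\gamma a\unlhd xza$ force $\alpha\gamma\in L$ and $\alpha\gamma a\in L$.

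Ruling out this last configuration is the heart of the matter and the step I expect to be hardest. Here $\gamma$ is short ($|\gamma|\le|v|-2\le n-3$) while $z$ is long and carries $\gamma$ as a subword. Excising a repeated state of $M$ lying on the part of the accepting path traced by $z$ \emph{between two consecutive letters of an embedding of $\gamma$} yields $z'\unlhd z$ with $\gamma\unlhd z'$ and the same entry/exit states as $z$, so $xyz'a\in L$, $v\unlhd xyz'a\unlhd w$, and $|xyz'a|<|w|$ --- a shorter witness. Hence in the minimal situation the $|\gamma|+1$ blocks into which $\gamma$ cuts the $z$-portion of the path all traverse distinct states, which already bounds $|w|$ by a polynomial in $n$. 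Tightening this to the claimed $|w|\le n$ is the delicate point: one must account \emph{simultaneously} for the states visited on the prefix $xy$ (which the canonical factorization guarantees are distinct and at most $n$ in number) and for those visited on $z$, so that the two together cannot use more than $n$ states. Getting this bookkeeping exactly right --- rather than settling for the easy $O(n^2)$ estimate --- is where the real work lies.
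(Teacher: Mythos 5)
Your reductions are sound as far as they go: the normalization of $v$ to a subword-minimal non-member of $w$ (hence $|v|\le n-1$), the complete treatment of Case~A via the pair $(xz,xza)$, and the chain of assumptions in Case~B ($w[1,|w|-1]\in L$, $xz,xza\in L$, and all their subwords in $L$) are all correct and each step does produce a witness whose $w$-component is a proper subword of $w$. But the proof has a genuine gap exactly where you say it does, and the gap is fatal to the statement as claimed: in the residual configuration ($v=\alpha\beta\gamma a$ with $\beta$ forced into $y$), your excision argument only shows that the $z$-portion of the accepting path decomposes into $|\gamma|+1$ repetition-free blocks, which yields $|w|=O(|\gamma|\cdot n)=O(n^2)$. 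The lemma asserts the bound $n$, and Corollary~\ref{Csubwordclosed} (and the summary table) depend on that exact value, so an $O(n^2)$ estimate does not prove the statement. You have not supplied the ``simultaneous bookkeeping'' that would push the bound down to $n$, and it is not a routine tightening.

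The reason it is not routine is that your first normalization step discards the structure the tight count needs. The paper instead keeps the non-member in the form $v=v_xv_yz$, i.e., sharing the entire tail $z$ of the canonical factorization $w=xyz$ (this is forced by minimality: if $v_z\ne z$ then $(v,xyv_z)$ is a smaller witness, using that $z'\unlhd z$ implies $xyz'\in L$). It then argues that the $|xy|$ states on the prefix of $w$'s path and the $|z|-1$ interior states on $v$'s path through $z$ are all accepting, all distinct, and mutually disjoint, which together with the one rejecting state reached by $v$ gives $n\ge (|w|-1)+1=|w|$. The whole point is that the non-member's path is made \emph{long} (it inherits the long suffix $z$), so its states can be counted against $n$; by shrinking $v$ to length at most $n-1$ you make that count unavailable, and the short word $\gamma$ then only partitions $z$ into $O(n)$ blocks of length $O(n)$ each. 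To complete your argument you would essentially have to undo the normalization and redo the paper's disjointness analysis, so as it stands the proposal establishes only a weaker, quadratic version of the lemma.
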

\begin{proof}
We will show that, for any witness $(v,w)$ with $|w|\ge n+1$, we can find a witness $(v',w')$ with 
$|w'|<|w|$ and $w'\unlhd w$. The lemma then follows.

Suppose that $(v,w)$ is a minimal witness, and $|w|=m\ge n+1$. 
Then the canonical factorization of $w$ is $w=xyz$,
where $|xy|\le n$, $|y|>0$, and $|z|\ge m-n>0$.

If there is a $z'$ such that $z'\unlhd z$ and $xyz'\not\in L$, then $xz' \not\in L$, since
$xyz'$ and $xz'$ lead to the same state in $M$. Then 
$(xz',xz)$ is
a witness with $|xz|< |w|$ and $xz\unlhd w$. 
Thus we can assume that
\begin{equation}
\label{eqnz'}
z'\unlhd z \txt{implies} xyz'\in L. 
\end{equation}
Since  $v\unlhd w=xyz$, we can write $v=v_xv_yv_z$, where 
$v_x\unlhd x$, $v_y\unlhd y$, and $v_z\unlhd z$.
Clearly, $v\unlhd xyv_z$.
If $v_z\not= z$, then by (\ref{eqnz'}), we have $xyv_z\in L$, and  $(v, xyv_z)$ is a witness with $|xyv_z|<|w|$ and $xyv_z\unlhd w$.
Thus we may assume that our witness has the form $(v_xv_yz,xyz)$.

In the particular case that $z'=\epsilon$, (\ref{eqnz'}) implies that $xy\in L$. 
If $y'\unlhd y$ and $xy'\not\in L$, then $(xy',xy)$ is a witness with $|xy|<|w|$ and $xy\unlhd w$.
Thus 
\begin{equation}
\label{eqny'}
y'\unlhd y \txt{implies} xy'\in L. 
\end{equation}

Finally, if $x'\unlhd x$ and $x'\not\in L$, then $(x',x)$ is a witness with $|x|<|w|$ and $x\unlhd w$.
Thus 
\begin{equation}
\label{eqnx'}
 x'\unlhd x \txt{implies} x'\in L. 
\end{equation}
Altogether, we
may assume that all the states along the path spelling $w$ in $M$ are accepting.
We know that the states in the sequence 
$$S=(\delta(q_0,w[1,0]), \ldots,\delta(q_0,w[1,|xy|-1]))$$
are all distinct. 
Also, the states in the sequence 
$$S'=(\delta(q_0,v_xv_yz[1,1]),\ldots,\delta(q_0,v_xv_yz[1,|z|-1]))$$ are all accepting and distinct; otherwise, $v$ would not be shortest.

We now claim that no state can be in both $S$ and $S'$. 
For suppose that 
$\delta(q_0,w[1,i])=\delta(q_0,v_xv_yz[1,k])$, for some $0\le i \le |x|$, $0< k < |z|$. 
Then $(w[1,i]z[k+1,|z|],xz)$ is a witness with $|xz|<|w|$ and $xz\unlhd w$, since $w[1,i]=x[1,i]$, and 
$x[1,i]z[k   +1,|z|]\unlhd xz$.
Next, if $\delta(q_0,xy[1,j])=\delta(q_0,v_xv_yz[1,k])$, for some $0< j <|y|$, $0< k < |z|$, 
then 
$$(xy[1,j]z[k+1,|z|], xyz[k+1,|z|])$$
 is a witness with $|xyz[k+1,|z|]|<|w|$ and
$xyz[k+1,|z|] \unlhd w$, since 
$xy[1,j]z[k+1,|z|]\unlhd xyz[k+1,|z|]$, and $xyz[k+1,|z|]\in L$ by~(\ref{eqnz'}).

Under these conditions $M$ must have $|xy|+(|z|-1)=|xyz|-1$ distinct accepting states, and at least one rejecting state. Hence $|xyz|=|w|\le n$ and we have found a witness with the required properties.
\end{proof}

\begin{corollary}
\label{Csubwordclosed}
Let $M$ be a DFA with $n\ge 2$ states.  If $L(M)$ is not subword-closed,
there exists a witness $(v,w)$ with $|w| \leq n$.  
Furthermore, this  is the best possible bound, as  there
exists a unary DFA with $n$ states that achieves this bound.
\end{corollary}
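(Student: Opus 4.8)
The upper bound $|w|\le n$ is already handed to us by Lemma~\ref{Lsubwordclosed}: if $L(M)$ is not subword-closed, some witness exists, and the lemma converts any witness into one $(v',w')$ with $|w'|\le n$. So the only real content of the corollary is the lower bound, i.e.\ exhibiting, for each $n\ge 2$, a unary $n$-state DFA whose language is not subword-closed and whose shortest witness $(v,w)$ has $|w|=n$.

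\textbf{The witnessing family.} I would take exactly the language used elsewhere in the paper for such tight unary examples, namely $L=({\tt a}^n)^*$ over $\Sigma=\{{\tt a}\}$, accepted by the cyclic $n$-state DFA with states $q_0,\dots,q_{n-1}$, transition $\delta(q_i,{\tt a})=q_{(i+1)\bmod n}$, start state $q_0$, and the single final state $q_0$. In the unary case ``subword'' just means ``shorter or equal length,'' so $v$ is a subword of $w$ iff $|v|\le|w|$. Thus a witness $(v,w)$ is simply a pair with $|v|<|w|$ (or $|v|\le|w|$), $w\in L$, $v\notin L$; equivalently $|w|$ a multiple of $n$ and $|v|$ a non-multiple of $n$ with $|v|\le|w|$. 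The smallest such $|w|$ is $n$ itself (the least positive multiple of $n$), and then any $|v|\in\{1,\dots,n-1\}$ works, for instance $v={\tt a}$. So the pair $({\tt a},{\tt a}^n)$ is a witness with $|w|=n$, and no witness can have smaller $|w|$ because $w$ must be a nonempty word of $L$ and the shortest nonempty word of $L$ has length $n$. This matches the bound of Lemma~\ref{Lsubwordclosed} exactly.

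\textbf{Writing it up.} The proof is two sentences: invoke Lemma~\ref{Lsubwordclosed} (via the observation that a non-subword-closed language has at least one witness, hence one of size $\le n$) for the upper bound; then present $L=({\tt a}^n)^*$ with the cyclic DFA, note it is not subword-closed since ${\tt a}\notin L$ but ${\tt a}\unlhd {\tt a}^n\in L$, and observe that the shortest $w\in L\setminus\{\epsilon\}$ has length $n$, so the shortest witness $({\tt a},{\tt a}^n)$ has size $n$. There is essentially no obstacle here — the one mild subtlety to state explicitly is the reduction from ``subword'' to ``length comparison'' in the unary alphabet, which makes the lower-bound argument completely transparent; everything else is bookkeeping that Lemma~\ref{Lsubwordclosed} has already done.
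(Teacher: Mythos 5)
Your proof is correct and follows essentially the same route as the paper: the upper bound is immediate from Lemma~\ref{Lsubwordclosed}, and the lower bound is witnessed by an explicit unary $n$-state DFA. The only difference is cosmetic: the paper uses the language $({\tt a}^n)^*(\epsilon + {\tt a} + \cdots + {\tt a}^{n-2})$ with minimal witness $({\tt a}^{n-1},{\tt a}^n)$, whereas you use $({\tt a}^n)^*$ with witness $({\tt a},{\tt a}^n)$; both are accepted by $n$-state unary DFAs and both force $|w|=n$, since in each case no word of $L$ of positive length shorter than $n$ exists (in your example) or no non-member of length at most $|w|$ exists for $|w|<n$ (in the paper's), so either choice completes the argument.
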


\begin{proof}
If $L$ is not subword-closed then it has a witness and, by Lemma~\ref{Lsubwordclosed}, it has a witness $(v,w)$ with $|w|\le n$.
This  is the best possible bound for $n\ge 2$, since the language
$$({\tt a}^n)^* (\epsilon +
{\tt a} + \cdots + {\tt a}^{n-2}),$$
accepted by a DFA
with $n$ states, has a minimal witness $({\tt a}^{n-1}, {\tt a}^n)$. 
\end{proof}

For $n=1$, $L$ is either $\emptyset$ or $\Sigma^*$, and both of these languages are subword-closed.

\subsubsection{Subword-freeness}
 
\begin{lemma}
\label{Lsubwordfree}
Let $M$ be a DFA with $n\ge 2$ states such that $L(M)$ is not
subword-free.  For any  witness $(u,w)$, there exists a witness
$(u',w')$ with $|w'| \leq 2n-1$, and $w'\unlhd w$.
\end{lemma}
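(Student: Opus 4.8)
The statement to prove is Lemma~\ref{Lsubwordfree}: for an $n$-state DFA $M$ with $L(M)$ not subword-free, any witness $(u,w)$ (so $u,w\in L$, $u\ne w$, $u\unlhd w$) can be replaced by a witness $(u',w')$ with $|w'|\le 2n-1$ and $w'\unlhd w$. The plan is to imitate the proof of Lemma~\ref{Lsubwordclosed}, but now both $u'$ and $w'$ must land in accepting states, which is why the bound degrades from $n$ to $2n-1$ (one needs room for an accepting-to-accepting path that ``detours'' around the subword $u'$). First I would fix a minimal witness $(u,w)$ and assume $|w|=m\ge 2n$, aiming to produce a shorter witness $(u',w')$ with $w'\unlhd w$; iterating then gives the bound. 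Take the canonical factorization $w=xyz$ with respect to $q_0$, so $|xy|\le n$, $|y|>0$, $|z|\ge m-n\ge n>0$, the states on the path through $xy$ are distinct, and $xy^\ast z\subseteq L$.

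**Key reduction steps.** As in Lemma~\ref{Lsubwordclosed}, I would first argue we may assume that pumping $y$ down cannot separate things badly: concretely, if some $z'\unlhd z$ has $xyz'\notin L$ then (since $xyz'$ and $xz'$ reach the same state) $xz'\notin L$, giving the shorter witness... wait, but here we need $xz'\in L$ is false and $xz$ must still be in $L$ — actually for subword-\emph{freeness} we want a witness where the \emph{smaller} word is a subword of the larger and both are accepted, so I would instead observe $xz\in L$ (pump $y$ out), and if $xyz'\notin L$ for some $z'\unlhd z$ then we cannot directly use it; rather, I would run the same case analysis as in the subword-closed proof to reduce to the situation where all states along $w$ are accepting and the state sequences along $w$ (through the $xy$-prefix) and along the subword-copy of $u$ inside $z$ are each distinct and mutually disjoint — except that now we must also ensure the copy of $u'$ we extract is itself accepted. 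The cleanest route: write $u=u_x u_y u_z$ with $u_x\unlhd x$, $u_y\unlhd y$, $u_z\unlhd z$; use the pumping structure to show that if $u$ uses ``too much'' of $z$ we can shorten $w$ while keeping $u\unlhd w'$ and $w'\in L$ via $xy^\ast z\subseteq L$; and handle the prefix-part $x$ and the loop-part $y$ by the analogous cut-and-shorten moves. After these reductions, $M$ must simultaneously accommodate two disjoint collections of distinct states — one of size roughly $|xy|$ on the main path and one of size roughly $|z|-1$ on the $u$-image — forcing $|w|=|xyz|\le 2n-1$.

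**Main obstacle.** The delicate point, and where I expect to spend the most effort, is that unlike the subword-closed case both endpoints of the new witness must be \emph{accepting}, so I cannot freely cut loops out of the larger word $w'$ without checking that the resulting shorter word is still in $L$ and still contains a rejected-elsewhere... no, still contains $u'$ as a subword with $u'\ne w'$ both in $L$. The argument that the two state-sequences are disjoint must be re-examined: a collision $\delta(q_0,w[1,i])=\delta(q_0,(\text{image of }u)[1,k])$ now has to be turned into a genuinely shorter witness $(u',w')$ with both words in $L$, typically $w'=w[1,i]\cdot(\text{tail of the }u\text{-image})$, and one must verify $w'\in L$ (it reaches $\delta(q_0,w)\in F$ by the collision), $u'\unlhd w'$, and $u'\ne w'$ (which may require being slightly careful when the collision is at the very start or end, exactly as the trailing ``$+1$'' versus ``$2n-1$'' bookkeeping suggests). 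So the real work is the careful case analysis guaranteeing that every collision yields a strictly shorter valid witness whose large word is still accepted and is a subword of the original $w$; once that is in place, the counting ($|xy|$ distinct states disjoint from $|z|-1$ distinct states, plus at least one rejecting state somewhere, all within $n$ states, but now split across the two halves giving $\le 2n-1$) closes the lemma.
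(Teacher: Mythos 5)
There is a genuine gap here: your proposal is a plan for transplanting the machinery of Lemma~\ref{Lsubwordclosed}, but the parts of that machinery you rely on do not survive the transplant, and you flag the difficulties without resolving them. Concretely, the reduction steps (\ref{eqnz'})--(\ref{eqnx'}) in the subword-closed proof all hinge on the asymmetry that $w\in L$ while $v\notin L$ (e.g., ``if $xyz'\notin L$ then $xz'\notin L$, giving a shorter witness''); for subword-freeness both words of a witness lie in $L$, so these implications produce nothing, and you never say what replaces them. Worse, the final counting step cannot be repaired: in the closed case the bound comes from showing that the state sequence along the $xy$-part of $w$ and the state sequence along the image of the subword are \emph{disjoint} sets of \emph{distinct} states, whence their sizes sum to at most $n$. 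If any such disjointness held here you would obtain a bound of roughly $n$, not $2n-1$ --- but the language ${\tt a}^{n-1}({\tt a}^n)^*$ has an $n$-state DFA whose minimal witness has size exactly $2n-1$, so no disjointness argument of this shape can be true. Your closing sentence (``two disjoint collections of distinct states \ldots all within $n$ states \ldots giving $\le 2n-1$'') is internally inconsistent for the same reason: disjoint collections of distinct states in an $n$-state machine total at most $n$.

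The intended argument is far simpler and makes no use of the original $u$ at all. If $|w|\ge 2n$, take the canonical factorization $w=xyz$ with respect to $q_0$, so $|xy|\le n$, $|y|>0$, $|z|\ge n$, and $\delta(q_0,x)=\delta(q_0,xy)=q$. Since $|z|\ge n$, take the canonical factorization $z=x'y'z'$ with respect to $q$, with $|y'|>0$. Pumping out $y$ gives $xz\in L$, and pumping out both $y$ and $y'$ gives $xx'z'\in L$; since $|y'|>0$ these two words are distinct, $xx'z'\unlhd xz$, and $|xz|<|w|$ with $xz\unlhd w$. Thus $(xx'z',xz)$ is a strictly smaller witness, and iterating until the larger word has length at most $2n-1$ proves the lemma. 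Your intuition that ``the bound degrades from $n$ to $2n-1$ because both endpoints must be accepting'' is right in spirit, but the mechanism is simply that you need two pumpable loops --- one in the first $n$ letters and one in the next $n$ --- rather than any disjoint-state-sequence count.
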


\begin{proof}
We will show that, for any witness $(u,w)$ with $|w|\ge 2n$, we can
find a witness $(u',w')$ with $|w'|<|w|$ and $w'\unlhd w$. The lemma
then follows.

Let the canonical factorization of $w$ with respect to $q_0$ be
$w=xyz$, where $|xy|\le n$, $|y|>0$, and $|z|\ge n>0$.  Then we also
have a canonical factorization of $z=x'y'z'$ with respect to state
$q=\delta(q_0,xy)$, where $|x'y'|\le n$, $|y'|>0$, and $|z'|\ge 0$.
Now we have a witness $(xx'z',xx'y'z')=(xx'z',xz)$  with $|xz|<|w|$ and
$xz\unlhd w$.
\end{proof}

\begin{corollary}
\label{Csubwordfree}
Let $M$ be a DFA with $n\ge 2$ states.  If $L(M)$ is not subword-free,
there exists a witness $(u,w)$ with $|w| \leq 2n-1$.  
Furthermore, this  is the best possible bound, as  there
exists a unary DFA with $2n-1$ states that achieves this bound.
\end{corollary}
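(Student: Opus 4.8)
The statement splits into an upper bound, which is immediate from Lemma~\ref{Lsubwordfree}, and a matching lower bound, which I would establish with a single explicit unary family. For the upper bound: if $L(M)$ is not subword-free then, by the definition of subword-freeness, there exist $u,w\in L(M)$ with $u$ a proper subword of $w$, so a witness $(u,w)$ exists; applying Lemma~\ref{Lsubwordfree} to it produces a witness $(u',w')$ with $|w'|\le 2n-1$, which is exactly the claimed bound. Nothing more is needed in this direction.

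For optimality I would exhibit, for each $k\ge 2$, a unary DFA on $k$ states whose minimal witness has size exactly $2k-1$; specializing $k=n$ then shows the bound $2n-1$ cannot be lowered. The candidate is $L={\tt a}^{k-1}({\tt a}^{k})^{*}$, recognized by the single $k$-state cycle $q_0\to q_1\to\cdots\to q_{k-1}\to q_0$ with $q_{k-1}$ as the only accepting state. Over a one-letter alphabet the subword relation coincides with comparison of lengths, so $(u,w)$ is a witness to the failure of subword-freeness precisely when $|u|<|w|$ and both $|u|$ and $|w|$ belong to $\{\,k-1,\ 2k-1,\ 3k-1,\ \ldots\,\}$. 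Hence $({\tt a}^{k-1},{\tt a}^{2k-1})$ is a witness; and in any witness $(u,w)$ the word $w$ must have a strictly shorter member of $L$ beneath it in the subword order, which forces $|w|>k-1$ and therefore $|w|\ge 2k-1$. So the minimal witness has size exactly $2k-1$.

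I do not expect a real obstacle here: Lemma~\ref{Lsubwordfree} already carries the weight of the upper bound, and the nearest thing to a difficulty is choosing an extremal example whose minimal-witness computation is tractable — which is exactly why one restricts to a unary language, where the subword relation degenerates to a comparison of lengths and the set of accepted word-lengths of the above automaton is a transparent arithmetic progression. The only points meriting a moment's care are verifying that the single-cycle automaton really has just $k$ states and that no accepted length lies strictly between $k-1$ and $2k-1$ (both clear from the residue-class description of $L$), and recalling that the hypothesis $n\ge 2$ is inherited from Lemma~\ref{Lsubwordfree}.
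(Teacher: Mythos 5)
Your proposal is correct and follows essentially the same route as the paper: the upper bound is read off directly from Lemma~\ref{Lsubwordfree}, and optimality is witnessed by the same unary language ${\tt a}^{n-1}({\tt a}^n)^*$ on an $n$-state cycle, whose minimal witness is $({\tt a}^{n-1},{\tt a}^{2n-1})$. Your added verification that no accepted length lies strictly between $n-1$ and $2n-1$ is a detail the paper leaves implicit, but the argument is the same.
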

\begin{proof}
If $L$ is not subword-free then it has a witness and, by
Lemma~\ref{Lsubwordfree}, it has a witness $(v,w)$ with $|w|\le 2n-1$.
This  is the best possible bound for $n\ge 2$, since the language ${\tt
a}^{n-1} ({\tt a}^n)^*$, accepted by a DFA with $n$ states, has a
minimal witness $({\tt a}^{n-1},{\tt a}^{2n-1})$.  
\end{proof}

For
$n=1$, $L$ is either $\emptyset$ or $\Sigma^*$. Only $\Sigma^*$ is not
subword-free, and has a minimal witness $(\epsilon,{\tt a})$ for any
$a\in\Sigma$.

\subsubsection{Subword-convexity}

\begin{lemma}
\label{subwordconvex}
Let $M$ be a DFA with $n\ge 2$ states such that $L(M)$ is not
subword-convex.  For any  witness $(u,v,w)$, there exists a witness
$(u',v',w')$ with   $w'\unlhd w$, and $|w'| \leq 3n-2$.
\end{lemma}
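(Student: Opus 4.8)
The plan is to follow the template of Lemmas~\ref{Lsubwordclosed} and~\ref{Lsubwordfree}. Among all witnesses $(u',v',w')$ with $w'\unlhd w$, choose one --- call it again $(u,v,w)$ --- that minimizes $(|w'|,|v'|,|u'|)$ in the lexicographic order. If $|w|\le 3n-2$ we are done, so assume $|w|=m\ge 3n-1$ and aim for a contradiction. Fix embeddings of $u$ into $v$ and of $v$ into $w$; say they place $a_i$ letters of $v$ and $b_i$ letters of $u$ inside the prefix $w[1,i]$, and write $P_i=\delta(q_0,w[1,i])$, $Q_i=\delta(q_0,v[1,a_i])$, $R_i=\delta(q_0,u[1,b_i])$. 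Two preliminary reductions cost nothing. First, pumping a loop out of the run of $M$ on $u$ yields $u'\unlhd u\unlhd v$ with $u'\in L$, so by minimality the run of $M$ on $u$ has no repeated state and $|u|\le n-1$. Second, if $(P_i,Q_i,R_i)=(P_j,Q_j,R_j)$ for some $i<j$, then deleting $w[i+1,j]$ from $w$ together with the letters of $v$ and of $u$ embedded inside it produces a strictly shorter witness --- still with $u'\unlhd v'$, since the deleted letters of $u$ lie among those of $v$ --- whose last component is a subword of $w$; this contradicts minimality, so $i\mapsto(P_i,Q_i,R_i)$ is injective and $m\le n^{3}-1$. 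The point is to bring this crude bound down to $3n-2$.

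For the improvement I would use canonical factorizations, as in Lemma~\ref{Lsubwordfree}. Let $w=xyz$ be the canonical factorization with respect to $q_0$, so $|xy|\le n$, $|y|>0$, and $\delta(q_0,x)=\delta(q_0,xy)$; hence $\delta(q_0,x\alpha)=\delta(q_0,xy\alpha)$ for every $\alpha$, and in particular $xz\in L$. Since $m\ge 3n-1$ we have $|z|\ge 2n-1\ge n$, so $z=x'y'z'$ is canonical with respect to $\delta(q_0,xy)$; combining the two pumping identities shows that $xz$, $xyx'z'$ and $xx'z'$ all lie in $L$ and are all proper subwords of $w$. Split the embeddings along $w=xyz$ as $v=v_xv_yv_z$ and, compatibly, $u=u_xu_yu_z$, with $v_x,u_x\unlhd x$, $v_y,u_y\unlhd y$, $v_z,u_z\unlhd z$, and $u_\bullet\unlhd v_\bullet$ coordinatewise.

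Now I would split into cases according to whether the embedding of $v$ meets the pumped block $y$. If $v_y=\epsilon$, then $v\unlhd xz$ and $(u,v,xz)$ is already a witness of the required kind. Otherwise one replaces the middle word $v$ by $v_xv_z$ (or by $v_xyv_z$) and $u$ by the corresponding contraction, each being a proper subword of one of $xz$, $xyx'z'$, $xx'z'$; the pumping identities are exactly what is needed to check, case by case, that the new middle word is still sandwiched strictly between two members of $L$ and is itself outside $L$ (its run ends at the same state as the run of the old one). The genuinely stubborn configuration --- the analogue of the position reached at the end of the proof of Lemma~\ref{Lsubwordclosed} --- is the one in which none of these substitutions removes the middle word from $L$; there I would finish by a covering argument, decomposing the positions of $w$ into the canonical initial block $xy$ (of length $\le n$) and two further stretches along each of which the appropriate run of $M$ must visit pairwise distinct states (each therefore of length $\le n-1$, the distinctness being forced by minimality exactly as at the end of Lemma~\ref{Lsubwordclosed}), whence $|w|\le n+2(n-1)=3n-2$, contradicting $m\ge 3n-1$.

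The step I expect to be the real obstacle is this bookkeeping. As in Lemma~\ref{Lsubwordclosed}, each shortening of $w$ drags along a shortening of $v$ and of $u$, and one must re-verify in every case that the property ``$u'\unlhd v'\unlhd w'$ with $u',w'\in L$ and $v'\notin L$'' survives the surgery. And in the stubborn case one must actually locate the two extra bounded stretches and show that, together with the canonical block $xy$, they cover $w$ up to length $3n-2$ --- a careless version of the argument would yield only a weaker bound such as $4n$ or $n^{2}$.
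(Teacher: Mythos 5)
There is a genuine gap, and it sits exactly where you predicted it would. The fatal step is the claim that, after deleting $v_y$ from the middle word, the new middle word ``is itself outside $L$ (its run ends at the same state as the run of the old one).'' The pumping identity $\delta(q_0,x)=\delta(q_0,xy)$ comes from the canonical factorization of $w$, i.e., from the run of $M$ on $w$; it says nothing about the run of $M$ on $v$. There is no reason that $\delta(q_0,v_xv_z)=\delta(q_0,v)$, so no reason that $v_xv_z\notin L$, and the surgery can destroy the witness. Your preliminary reductions (no repeated state on the run of $u$, injectivity of $i\mapsto(P_i,Q_i,R_i)$) are fine but only yield $m\le n^3-1$, and your closing count $n+2(n-1)$ is asserted without identifying the ``two further stretches'' or why the relevant runs visit distinct states along them; as written the argument does not reach $3n-2$.

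The missing idea, which the paper uses, is to first bound $|v|$. Since $u\in L$, $v\notin L$, and $u\unlhd v$, the pair $(u,v)$ is a witness to the failure of subword-closure of $\overline{L}$, which is accepted by a DFA with the same $n$ states; Lemma~\ref{Lsubwordclosed} then produces a witness $(u',v')$ with $v'\unlhd v$ and $|v'|\le n$, so one may assume $|v|\le n$ from the outset. The paper then pumps loops out of the run on $w$ \emph{iteratively}, writing $w=x_1y_1x_2y_2\cdots x_ky_kz_k$ with $x_1y_1^*\cdots x_ky_k^*z_k\subseteq L$ and $|x_2\cdots x_kz_k|<n$, and shows by minimality that each loop $y_i$ with $i\ge 2$ must be \emph{entirely} consumed by the embedding of $v$ (i.e., $v_{y_i}=y_i$); hence $|y_2\cdots y_k|<|v|\le n$, and the three pieces give $|w|\le n+(n-1)+(n-1)=3n-2$. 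It is this coupling --- loops outside the canonical prefix must be covered by $v$, whose length is separately bounded via the closure lemma applied to $\overline{L}$ --- that your sketch lacks, and without it I do not see how your case analysis can be completed.
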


\begin{proof}
We will show that, for any witness $(u,v,w)$ with $|w|\ge3n-1$, we can find a witness $(u',v',w')$ with 
$|w'|<|w|$ and $w'\unlhd w$. The lemma then follows.

We may assume without loss of generality
that $v$ is a shortest possible word corresponding to the
given $w$, and $u$ is a shortest word corresponding to $v$ and $w$.

First, consider the witness $(u,v)$ for lack of subword-closure of the
language $\ol{L}$.  By Lemma~\ref{Lsubwordclosed}, there exists a
witness $(u',v')$ to the failure of the subword-closure property
of $\ol{L}$ such that
$v'\unlhd v$ and $|v'|\le n$.  Therefore we can assume that we have a
witness $(u,v,w)$ to the failure of
subword-convexity such that $|v|\le n$.

Suppose that $(u,v,w)$ is a minimal witness, and $|w|\ge 3n-1$.
Then the canonical factorization of $w$ is $w=x_1y_1z_1$, where $|x_1y_1|\le n$, $|y_1|>0$, and 
$|z_1|\ge 2n-1\ge n> 0$.
Consider the states
$$p_0=\delta(q_0,x_1y_1), \ p_1=\delta(q_0,x_1y_1z_1[1,1]), \ \cdots,\ 
p_{|z_1|}=\delta(q_0,x_1y_1z_1).$$
Since $|z_1|\ge n$, there must be at least one pair $(p_i,p_j)$ of states such that $p_i=p_j$. 
If $p_0$ is the state that is repeated, let $i$ be the greatest index such that $p_0=p_i$, and let $x_2=\epsilon$, $y_2=z_1[1,i]$, and $z_2=z_1[i+1,|z_1|]$.
If $p_i$ is the first state that is repeated, let $j$ be the greatest index such that $p_i=p_j$, and let $x_2=z_1[1,i]$, $y_2=z_1[i+1,j]$, and $z_2=z_1[j+1,|z_1|]$.
If 
$\delta(q_0,x_1y_1x_2y_2), \delta(q_0,x_1y_1x_2y_2z_2[1,1]), \ldots,\delta(q_0,x_1y_1x_2y_2z_2)$
 has no repeated states, we stop. 
Otherwise, we apply the same procedure to $z_2$, and so on.
In any case, eventually we reach a $z_k$ for which no repeated states exist.
Then we have the factorization
$$w=x_1y_1x_2y_2\cdots x_ky_kz_k, $$
where $x_1y^*_1x_2y^*_2\cdots x_ky^*_kz_k
\subseteq L$,
$|x_2\cdots x_kz_k|<n$ (otherwise, there would be repeated states),  $|y_i|>0$, for $i=1,\ldots,k$, and $k\geq 2$.

For any $y_2'\unlhd y_2, \cdots, y_k'\unlhd y_k$, we have
$x_1y_1x_2y_2'\cdots x_ky_k'z_k \in L.$
Otherwise, the triple 
$$(x_1x_2\cdots x_kz_k, \;x_1x_2y_2'\cdots
x_ky_k'z_k, \; x_1x_2y_2\cdots x_ky_kz_k)$$
 is a witness with $|x_1x_2y_2\cdots x_ky_kz_k|< |w|$, and $x_1x_2y_2\cdots x_ky_kz_k\unlhd w$.

Since $v\unlhd w$, we can now write
$v=v_{x_1}v_{y_1}v_{x_2}v_{y_2}\cdots v_{x_k}v_{y_k}v_{z_k},$
where $v_{x_1}\unlhd x_1$, etc.
If there is a $y_i$ with $i\ge 2$, such that $v_{y_i}=\epsilon$, then we can replace that $y_i$ by $\epsilon$ in $w$ and obtain a smaller witness.
Hence each $v_{y_i}$ must be nonempty.
By the same argument, if there is a letter in $y_i$, for $i\ge 2$, that is not used in $v_{y_i}$, then that letter can be removed, yielding a smaller witness.
Therefore  $y_i=v_{y_i}$ for $i=2,\ldots,k$.
We claim that  $|y_2\cdots y_k|<|v|$; otherwise $v=v_{y_2}\cdots
v_{y_k}=y_2\cdots y_k$ and $(u,v, x_1x_2y_2\cdots x_ky_kz_k)$ is a
witness with $|x_1x_2y_2\cdots x_ky_kz_k|<|w|$.  Thus
$|y_2\cdots y_k|<|v|\le n$, and $$|w|=|x_1y_1|+|x_2\cdots
x_kz_k|+|y_2\cdots y_k|\leq n+(n-1)+(n-1)=3n-2.$$
\end{proof}

\begin{corollary}
\label{subwordcontcor}
Let $M$ be a DFA with $n\ge 2$ states.  If $L(M)$ is not subword-convex,
there exists a witness $(u,v,w)$ with 
 $|w| \leq 3n-2$.  
\end{corollary}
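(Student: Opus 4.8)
The plan is to obtain the corollary as an immediate consequence of Lemma~\ref{subwordconvex}. If $L(M)$ is not subword-convex, then by the very definition of subword-convexity there exists at least one witness $(u,v,w)$, that is, words with $u\unlhd v\unlhd w$, $u,w\in L$, and $v\not\in L$. First I would invoke Lemma~\ref{subwordconvex} on this witness: it yields a witness $(u',v',w')$ with $w'\unlhd w$ and $|w'|\le 3n-2$. The hypothesis $n\ge 2$ required by the lemma is exactly the hypothesis of the corollary, so no extra case analysis is needed, and the existence of a witness of size at most $3n-2$ follows in one step. (We do not even need the refinement $w'\unlhd w$ here; it is only the length bound that matters for the statement as given.)

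There is essentially no obstacle, since all the real work has already been carried out in Lemma~\ref{subwordconvex} — the iterated canonical-factorization argument that repeatedly shortens $w$ while keeping it a subword of the original $w$, together with the disjointness bookkeeping that caps the total length at $n+(n-1)+(n-1)=3n-2$. The only point one must check is that the lemma is phrased for \emph{every} witness, not merely for a minimal one; it is, so the purely existential claim of the corollary is covered directly.

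If one wished to go further and show that the bound $3n-2$ is best possible, as was done for the prefix- and suffix-convex cases, the extra work would be to construct, for each $n$, a DFA on $n$ states whose shortest subword-convexity witness has size exactly $3n-2$. I would look for a small-alphabet (ideally unary-flavored) example that simultaneously realizes the "$2n-1$" behavior seen for subword-freeness (Corollary~\ref{Csubwordfree}) in the part of the path spelling $w$ up to where $v$ diverges, and the "$n$" behavior seen for subword-closure (Corollary~\ref{Csubwordclosed}) in the remaining tail, so that the three disjoint segments of lengths $n$, $n-1$, $n-1$ are all forced. But since the corollary as stated asserts only the upper bound, that construction is not required for the proof.
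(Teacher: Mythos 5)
Your proposal is correct and matches the paper's (implicit) argument exactly: the corollary is an immediate consequence of Lemma~\ref{subwordconvex}, since non-subword-convexity guarantees at least one witness and the lemma then produces one with $|w'|\le 3n-2$. Your side remarks about the hypothesis $n\ge 2$ and the open question of whether $3n-2$ is achievable are also consistent with the paper, which leaves tightness of this bound unresolved.
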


We do not know whether $3n-2$ is the best bound.
The unary language
${\tt a}^{n-1} ({\tt a}^n)^*$ is accepted by a DFA with $n$
states and has a minimal witness $({\tt a}^{n-1},{\tt a}^n,{\tt a}^{2n-1}$), showing that $2n-1$ is achievable.

\section{Languages specified by NFA's}
\label{NFAs}
     In this section consider some of the same problems as we have for
DFA's in previous sections.  
\subsection{Deciding convexity for NFA's}
Our  main result is that some of our decision
problems
become PSPACE-complete
if $M$ is represented by an NFA.  Our fundamental tool is the following
classical lemma \cite{Aho&Hopcroft&Ullman:1974}:

\begin{lemma}
Let $T$ be a one-tape deterministic Turing machine and $p(n)$ a polynomial
such that $T$ never uses more than $p(|x|)$ space on input $x$.  Then
there is a finite alphabet $\Delta$ and a polynomial $q(n)$ such that we
can construct a regular expression $r_x$ in $q(|x|)$ steps, such that
$L(r_x) = \Delta^*$ if $T$ doesn't accept $x$, and 
$L(r_x) = \Delta^* - \lbrace w \rbrace$ for some nonempty $w$ (depending on $x$)
otherwise.  Similarly, we can construct an NFA $M_x$ in $q(|x|)$ steps,
such that $L(M_x) = \Delta^*$ if $T$ doesn't accept $x$, and $L(M_x) =
\Delta^* - \lbrace w \rbrace$ for some nonempty $w$ (depending on $x$)
otherwise.
\end{lemma}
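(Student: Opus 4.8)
The plan is to use the classical encoding of Turing-machine computations as strings, and to exploit determinism: for a deterministic machine the set of \emph{valid accepting computation histories} of $T$ on $x$ is either empty (when $T$ does not accept $x$ — note that in space $p(|x|)$ the machine either halts or loops, so non-acceptance means no finite accepting computation exists) or contains exactly one word $w$, which is nonempty. I would then construct $r_x$ and $M_x$ so that they recognize the \emph{error language}, i.e.\ the set of all strings over $\Delta$ that are \emph{not} the valid accepting history of $T$ on $x$ — importantly, building this language \emph{directly} as a small union, not by complementing an automaton. If the history set is empty this error language is all of $\Delta^*$; if it is $\{w\}$ the error language is $\Delta^* - \{w\}$, which is exactly what the lemma demands.

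First I would fix the encoding. Put $\ell = p(|x|) + c$ for a suitable constant $c$, and let $\Delta$ consist of the tape alphabet of $T$, a set of composite symbols $\langle q, a\rangle$ marking the head cell together with the current state, and a separator $\#$. A computation history is a string $\#\,C_0\,\#\,C_1\,\#\cdots\#\,C_m\,\#$ in which each block $C_i$ is a length-$\ell$ configuration containing exactly one head-marker, $C_0$ is the (uniquely determined, blank-padded) initial configuration of $T$ on $x$, each $C_{i+1}$ is the unique $T$-successor of $C_i$, and $C_m$ is the first configuration in an accepting state, with the convention that $T$ halts immediately upon accepting. Because $T$ is deterministic and this encoding leaves no freedom anywhere, there is at most one such string, so the complement below is at most a singleton rather than merely finite.

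Next I would write the error language as a union of a bounded number of ``local'' conditions, each recognizable by a regular expression and by an NFA of size polynomial in $\ell$ (hence in $|x|$): \emph{(i)} syntactic malformation — the string is not of the shape $\#$ followed by length-$\ell$ blocks each with exactly one head-marker (a regular condition; note the expression $\Delta^{\ell}$ spelled out has size $O(\ell)$ since $|\Delta|$ is constant); \emph{(ii)} a wrong initial block $C_0$ (a finite check on a bounded prefix); \emph{(iii)} no accepting block, or an accepting state occurring before the last block (a regular condition); and \emph{(iv)} a faulty transition step. Condition \emph{(iv)} is the heart of the matter: since $T$'s transition rule is local, the symbol in cell $j$ of $C_{i+1}$ is determined by the window consisting of cells $j-1,j,j+1$ of $C_i$, and the distance in the string between corresponding cells of consecutive blocks is the fixed constant $\ell+1$. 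Hence for each of the finitely many ``illegal window'' patterns $(a,b,c,d)$ the bad strings form a set of the form $\Delta^*\, abc\, \Delta^{\ell-2}\, d\, \Delta^*$ (offsets adjusted for $\#$'s and window width), captured by a regular expression of size $O(\ell)$ and by an NFA with $O(\ell)$ states that reads $abc$, counts off $\ell-2$ arbitrary symbols, and tests for $d$. Taking the union over all pieces yields $r_x$ and $M_x$ of size polynomial in $|x|$, constructible in a polynomial number $q(|x|)$ of steps, with $L(r_x) = L(M_x)$ equal to $\Delta^*$ minus the at most one valid accepting history.

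The main obstacle is item \emph{(iv)}: one must check carefully that Turing-machine steps really are window-local — so that a single fixed-offset three-symbol check, quantified over all positions, detects \emph{every} possible deviation from the transition function — and that the block-to-block offset is the constant $\ell+1$ independent of $i$, so that an $O(\ell)$-state counter suffices rather than an unbounded one. A secondary subtlety is the normalization guaranteeing \emph{uniqueness} of the accepting history (so that the complement is a single word, not just a finite set): this is ensured by pinning down the halt-on-accept convention, the exact padding of $C_0$, and the requirement that the history terminate precisely at the first accepting configuration. Once these points are in place, both the regular-expression and the NFA constructions are routine, and the polynomial bound on the number of construction steps is immediate from $p$ being a polynomial.
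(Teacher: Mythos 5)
Your proof is correct, and it is precisely the classical construction that the paper invokes without reproducing: the lemma is stated as a known result cited to Aho, Hopcroft, and Ullman, so there is no in-paper proof to diverge from. Your argument --- encoding the (unique, by determinism and the halt-on-accept/padding normalization) accepting computation history of the polynomial-space machine and expressing its complement directly as a constant-size union of $O(p(|x|))$-size local checks (malformed blocks, wrong initial configuration, missing or premature acceptance, and the fixed-offset illegal-window test for faulty transitions) --- is the standard proof, and it correctly delivers both the regular-expression and NFA forms in polynomially many steps with $w$ nonempty.
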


\begin{theorem}
     The problem of deciding whether a given regular language $L$,
represented by an NFA or regular expression, is prefix-convex
(resp., suffix-, factor-, subword-convex), or
prefix-closed (resp., suffix-, factor-, subword-closed) is PSPACE-complete.
\end{theorem}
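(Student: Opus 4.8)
The plan is to establish PSPACE-completeness by proving membership in PSPACE and PSPACE-hardness separately.

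For membership in PSPACE, I would first observe that each convexity, closure, or freeness property can be characterized by the nonemptiness of a language $L(M')$ built from $M$ by a product construction, exactly as in Section~\ref{convexity}, but now $M$ is an NFA. The DFA constructions there kept track of the (deterministic) state reached on each of $w$, $v$, $u$; for an NFA we instead keep track of \emph{subsets} of states. More directly: whether $L$ is, say, factor-convex is equivalent to the nonexistence of a witness $(u,v,w)$ with $u$ a factor of $v$, $v$ a factor of $w$, $u,w\in L$, $v\notin L$. Since $v\notin L$ and $L$ is NFA-recognized, membership of $v$ in $L$ is tested by the subset construction, so the whole condition is expressible by a nondeterministic machine that, on input $w$, simulates $M$ on $w$ (one state), simulates $M$ on the guessed factor $v$ via a subset of states, and simulates $M$ on the guessed factor $u$ (one state), using $O(n)$ space for the subset plus $O(\log n)$ for the pointers. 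Thus the complement (``$L$ is not $\unlhd$-convex'') is in NPSPACE $=$ PSPACE, and hence so is the property itself. The same argument handles the closure and freeness variants, and by Savitch's theorem the complexity class is unchanged. This part is routine.

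For PSPACE-hardness, the strategy is to reduce from the problem solved by the classical lemma just stated: given a polynomial-space Turing machine $T$ and input $x$, construct in polynomial time an NFA $M_x$ with $L(M_x)=\Delta^*$ if $T$ rejects $x$ and $L(M_x)=\Delta^*-\{w\}$ for some nonempty $w$ otherwise. The key point is that $\Delta^*$ itself is prefix-convex, suffix-convex, factor-convex, subword-convex, and prefix-closed, suffix-closed, factor-closed, subword-closed; whereas $\Delta^*-\{w\}$ fails \emph{every} one of these eight properties whenever $w$ is nonempty. Indeed, taking any proper factor (prefix, suffix, subword) $v$ of a sufficiently long word in $\Delta^*$, and choosing $u$ and a long enclosing word $w'$ appropriately, we get $u,w'\in \Delta^*-\{w\}$ with the middle word equal to the forbidden $w$, violating convexity; similarly $w\in\Delta^*$ but it has a super-word in $\Delta^*-\{w\}$ and all its factors except itself are in $\Delta^*-\{w\}$, so closure fails too, for each of the four orderings. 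Hence ``$L(M_x)$ has property $P$'' holds if and only if $T$ rejects $x$, giving a polynomial-time many-one reduction from the complement of a PSPACE-complete problem; since PSPACE is closed under complement, each of the eight decision problems is PSPACE-hard. Combining with membership gives PSPACE-completeness for all eight.

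The main obstacle, though a mild one, is the bookkeeping in the hardness half: one must verify carefully, and uniformly across all four relations $\unlhd\in\{$prefix, suffix, factor, subword$\}$, that $\Delta^*-\{w\}$ genuinely fails each convexity and each closure property for \emph{every} nonempty $w$ — in particular handling edge cases such as $|\Delta|=1$ or $|w|=1$, and making sure that for the prefix and suffix relations one can still sandwich $w$ strictly between two words of $\Delta^*-\{w\}$ (which requires $\Delta^*-\{w\}$ to contain both a proper sub-object of $w$ and a proper super-object of $w$ in the relevant order; this is clear once $w\neq\epsilon$). One should also note that the regular-expression case follows identically using the regular-expression version $r_x$ from the lemma, and that adding a couple of states to $M_x$ to force $|\Delta|\ge 2$ or to pad short words does not affect polynomial-time constructibility.
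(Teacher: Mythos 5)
Your proposal is correct and follows essentially the same route as the paper: membership via a nondeterministic polynomial-space guess-and-check of the witness (tracking subsets of NFA states, then invoking Savitch and closure under complement), and hardness by reducing from the classical lemma, using the fact that $\Delta^*$ has all eight properties while $\Delta^*-\{w\}$ has none of them when $w\neq\epsilon$. The only cosmetic difference is that the paper explicitly bounds the witness length by $O((2^n)^3)$ via the subset construction and Corollary~\ref{cor1} and counts up to $2^{3n}$ to terminate the guessing, a detail you leave implicit but which is routine.
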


\begin{proof}
     We prove the result for factor-convexity, the other results being
proved in the same way.

     First, let's see that the problem of deciding factor-convexity
is in PSPACE.  We actually
show that we can solve it in NSPACE, and then use Savitch's theorem
that PSPACE = NSPACE.

     Suppose $L$ is accepted by an NFA $M$ with $n$ states.  Then, by
the subset construction, $L$ is accepted by a DFA with $\leq 2^n$ states.
From    
Theorem~\ref{thm1} above, we see that if $L$ is not factor-convex,
we can demonstrate this by exhibiting $u, v, w$ with $u$ a prefix of 
$v$ and $v$ is a prefix of $w$ , and $u, w \in L$, $v \not\in L$ and
then checking that these conditions are fulfilled.  Furthermore, 
from Corollary~\ref{cor1}, if such $u, v, w$ exist, then $|u|,|v|,|w|
= O((2^n)^3)$.    In polynomial space, we can count up to $2^{3n}$.  
Write $w = x_1 x_2 x_3 x_4 x_5$, and let $v = x_2 x_3 x_4$ and 
$u = x_3$.  We use boolean matrices to keep track of, for each state
of $M$, what state we would be in after reading prefixes of $w$.
We guess the appropriate words $x_1, x_2, x_3, x_4, x_5$ symbol-by-symbol,
using a counter to ensure these words are shorter than $2^{3n}$.
We then verify that $x_3 $ and $x_1 x_2 x_3 x_4 x_5$ are in $L$ and
$x_2 x_3 x_4$ is not.

     Now let's see that the problem is PSPACE-hard.  Since $\Delta^*$ is
factor-convex and
$\Delta^* - \lbrace w \rbrace$ is not if $w \not= \epsilon$, we could
use an algorithm solving the factor-convex problem 
to solve decidability for
polynomial-space bounded Turing machines.
\end{proof}

       However, the situation is different for deciding the property of
prefix-freeness, suffix-freeness, etc., for languages represented by
NFA's, as the following theorem shows.  This result was proved already
by Han et al.\ \cite{Han&Wang&Wood:2006} through a different approach.

\begin{theorem}
Let $M$ be an NFA with $n$ states and $t$ transitions.  Then we can 
decide in $O(n^2+t^2)$ time whether $L(M)$ is prefix-free (resp., suffix-free,
factor-free, subword-free).
\end{theorem}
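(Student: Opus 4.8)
The plan is to give, for each of the four freeness properties, a direct product construction that builds from the given NFA $M=(Q,\Sigma,\delta,q_0,F)$ with $n$ states and $t$ transitions an NFA-$\epsilon$ $M'$ with $O(n^2)$ states and $O(n^2+t^2)$ transitions whose language is nonempty exactly when $L(M)$ fails the property; emptiness is then tested by depth-first search in time linear in the size of $M'$. I would present the factor-free case in detail, as it is the most involved, and sketch the others. Unlike the convexity and closure tests, no determinization is needed here, because each ``copy'' of $M$ only has to certify membership of a single word along one of its own (independent) accepting paths.

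Recall $L$ is not factor-free iff there are $u,w\in L$ with $u$ a proper factor of $w$, say $w=xuy$ with $xy\neq\epsilon$. I would build $M'$ to run two tracks: track B reads all of $w=xuy$, while track A is launched from $q_0$ at the moment track B begins reading $u$, and track A is required to reach $F$ exactly when track B finishes $u$; both tracks must finish in $F$. Concretely, the states of $M'$ are $Q\times Q\times\{1,2,3\}\times\{0,1\}$: component $1$ is track A (unused in phase $1$), component $2$ is track B, the third component records the current phase (reading $x$, reading $u$, reading $y$), and the last bit records whether any symbol of $x$ or $y$ has been read, to enforce $xy\neq\epsilon$. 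In phase $1$ only track B moves (reading $x$) and the bit is set; an $\epsilon$-move resets track A to $q_0$ and enters phase $2$; in phase $2$ both tracks move on the same input symbol (reading $u$); an $\epsilon$-move enabled only when track A is in $F$ enters phase $3$; in phase $3$ only track B moves (reading $y$) and the bit is set; the accepting states are those in phase $3$ with track A in $F$, track B in $F$, and the bit equal to $1$. Correctness both ways is routine: an accepting run exhibits $w=xuy$ with $u,w\in L$ and $xy\neq\epsilon$, and conversely any such decomposition yields an accepting run.

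The size analysis is the crux, and the key observation is that the only phase in which both tracks advance simultaneously contributes $\sum_{a\in\Sigma}t_a^2\le\bigl(\sum_{a\in\Sigma}t_a\bigr)^2=t^2$ transitions, where $t_a$ is the number of $a$-transitions of $M$, while each single-track phase contributes only $O(t)$ or $O(nt)$ transitions and the $\epsilon$-moves contribute $O(n^2)$; together with $O(n^2)$ states this gives the bound $O(n^2+t^2)$, and the depth-first emptiness test runs in the same time.

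The remaining cases are analogous and in fact simpler. For prefix-free ($w=uv$, $v\neq\epsilon$) both tracks first read the common prefix $u$ (this is the ``$\le t^2$'' phase), then an $\epsilon$-move enabled only when track A is in $F$ freezes track A and track B alone reads $v$, with the properness bit set on the first symbol of $v$. For suffix-free ($w=vu$, $v\neq\epsilon$) track B alone reads $v$, then an $\epsilon$-move launches track A from $q_0$ and both read the common suffix $u$ (the ``$\le t^2$'' phase). For subword-free we make a single left-to-right pass of track B over $w$, nondeterministically deciding for each symbol whether track A (launched from $q_0$) also consumes it, with a bit forcing at least one skipped symbol; the ``both advance'' option again contributes at most $t^2$ transitions. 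In every case the accepting states require both tracks in $F$ and the nonemptiness bit set, and the same $O(n^2+t^2)$ bound and emptiness test apply. I expect the only real subtlety to be the bookkeeping needed to force the witness to be a \emph{proper} prefix/suffix/factor/subword — the placement of the enabling conditions on the $\epsilon$-moves and the use of the extra bit — rather than anything conceptually hard.
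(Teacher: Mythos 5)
Your proposal is correct and takes essentially the same approach as the paper: both build a quadratic-size product of two copies of $M$ (one tracking $w$, one tracking the proper prefix/suffix/factor/subword $u$), so that the product language is nonempty iff the freeness property fails, and then test emptiness by depth-first search in $O(n^2+t^2)$ time. The only difference is cosmetic bookkeeping --- you fuse everything into one multi-phase automaton with an explicit properness bit, whereas the paper first constructs an auxiliary NFA for $L(M)\Sigma^+$ (resp.\ $\Sigma^+L(M)$, the padded factor language, or the strict-superword language) and then intersects it with $M$ by the standard direct product.
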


\begin{proof}
We give the full details for prefix-freeness,  and sketch the result for
the other three cases.

Given $M = (Q, \Sigma, \delta, q_0, F)$,
create an NFA $M'$ accepting $L(M) \Sigma^+$.  This can be done,
for example, by adding a transition on each $a \in \Sigma$ from each old final
state of $M$ to a new state $q_f$, and having a loop on $q_f$ to itself
on each $a \in \Sigma$.    Finally, let the new set of final states for $M'$
be $\lbrace q_f \rbrace$.  Clearly that $L(M)$ is prefix-free if and only
if $L(M) \ \cap \ L(M') = \emptyset$.  We can construct an NFA $M''$ accepting
$L(M) \ \cap \ L(M')$ using the usual ``direct product'' construction.
If the original $M$ had $n$ states and $t$ transitions, the new $M'$
has $n+1$ states and at most $t + 2n |\Sigma|$ transitions.  So $M''$ has
$n(n+1)$ states and at most $t(t+2n|\Sigma|)$ transitions.  Since
without loss of generality we can assume that $t \geq n-1$ (otherwise $M$
is not connected), it costs $O(n^2+t^2)$ to check whether $L(M'') = \emptyset$
using depth-first search.

For suffix-freeness, we carry out a similar construction for
$L(M) \ \cap \ \Sigma^+ L(M)$.  For factor-freeness, we carry out a
similar construction for
$L(M) \ \cap \ (\Sigma^+ L(M) \Sigma^* \ \cup \ \Sigma^* L(M) \Sigma^+)$.

For subword-freeness, we carry out a similar but slightly more
involved construction, which is
as follows:  create $M'$ by making two copies of $M$.  Add a transition
from each state $q$ to its copy $q'$ on each letter of $\Sigma$, and
add transitions from each copy $q'$ to itself on each letter of $\Sigma$.
The final states of $M'$ are the final states in the part corresponding
to the copied states.    Formally,
$M' = (Q \cup Q', \Sigma, \delta, q_, F')$ where
$Q' = \lbrace q' \ : \ q \in Q \rbrace$,
$F' = \lbrace q' \ : \ q \in F \rbrace$, and
$\delta'(q,a) = \delta(q,a) \ \cup \ \lbrace q' \rbrace$
for all $q \in Q,\ a \in \Sigma$, and
$\delta'(q',a) = \delta(q,a)' \ \cup \ \lbrace q' \rbrace$
for all $q \in Q, a \in \Sigma$.  Then $M'$ accepts the language of
all words that are strict superwords of words accepted by $M$.
We now create the NFA for $L(M) \ \cap \ L(M')$ as before.
\end{proof}

\subsection{Minimal witnesses for NFA's}

     We have already seen that the length of the minimal witness for the failure
of the convex or closed properties is polynomial
in the size of the DFA.  For the case of NFA's, however, this
bound no longer holds.

\begin{theorem}
     There exists a class of NFA's with $O(n)$ states such that the
shortest witness to the failure of the prefix-convex (resp.,
suffix-convex, factor-convex, subword-convex) or
prefix-closed (resp., suffix-closed, factor-closed, subword-closed)
property is of length $2^{\Omega(n)}$.
\end{theorem}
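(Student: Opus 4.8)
The plan is to exhibit an explicit family of NFA's whose DFA blow-up forces witnesses of exponential length. The standard source of exponential separation here is the language $L_k = \{w \in \{{\tt a}\}^* : |w| \not\equiv 0 \pmod{m_k}\}$ where $m_k = \lcm(p_1,\ldots,p_k)$ for the first $k$ primes, or more simply a union of $k$ residue conditions modulo small coprime numbers. Since $\lcm(1,2,\ldots,n) = 2^{\Theta(n)}$ (by the prime number theorem), and each "cycle of length $p_i$" can be recognized by an NFA with $O(p_i)$ states, a disjoint-union NFA has $O(p_1+\cdots+p_k) = O(n^2/\log n)$ — or, using just the first few primes so that $\sum p_i = O(n)$, genuinely $O(n)$ states — while any witness to the failure of convexity or closure must have length proportional to $m_k$.

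Concretely, first I would fix the alphabet $\{{\tt a}\}$ and define, for the closed case, the NFA $M_n$ accepting ${\tt a}^* \setminus ({\tt a}^{m})^*$ where $m = \lcm$ of an appropriate set of integers summing to $O(n)$; this is a unary language that is not prefix-closed (indeed not suffix-, factor-, or subword-closed), since ${\tt a}^m \notin L$ but ${\tt a}^{2m-1}\in L$ with ${\tt a}^m$ a prefix of ${\tt a}^{2m-1}$, yet $\epsilon \in L$ is a prefix of ${\tt a}^m$ — wait, one must be careful: I would instead use $L = {\tt a}^{m-1}({\tt a}^m)^*$, which is recognized by an NFA of $O(n)$ states via $k$ small cycles (guess a residue class), is not prefix-/suffix-/factor-/subword-closed, and whose \emph{shortest} witness has size $\Omega(m) = 2^{\Omega(n)}$. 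The key lemma is that the shortest word $v\notin L$ that is a prefix of some $w\in L$ and has a prefix $u\in L$ already forces $|v|\ge m$, because every element of $\bar L \cap {\tt a}^*$ below length $2m-1$ has length in a union of $k$ residue classes whose only representative that can be "sandwiched" is at least $m-1$ — this is a purely arithmetic statement about $\lcm$ of the chosen moduli.

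For the convex case I would reuse the suffix-convexity construction idea from Theorem~\ref{nsc}: take a language of the form ${\tt bbb}({\tt a}^{p})^+ \cup {\tt bb}({\tt a}^q + \cdots)({\tt a}^r)^* \cup {\tt b}({\tt a}^s)^+$ where now $p,q,r,s$ are themselves \emph{not} small but are recognized by small NFA's (each ${\tt a}^{p_i}$-cycle needs $O(p_i)$ states, but a product of $k$ coprime cycles recognized nondeterministically needs only $O(\sum p_i)$ states while forcing period $\prod p_i$). The minimal witness then has ${\tt a}$-exponent equal to $\lcm$ of these periods, which is $2^{\Omega(n)}$. I would present the $p$-adic / prime-counting estimate $\lcm(1,\ldots,n)\ge 2^{n}$ for $n$ large (a standard fact, e.g.\ from Chebyshev's bound $\prod_{p\le n}p \ge 2^n$) as the engine.

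The main obstacle I expect is bookkeeping the NFA sizes so that they are genuinely $O(n)$ while the forced period is $2^{\Omega(n)}$: a naive union of cycles of lengths $2,3,5,7,\ldots$ up to the $k$th prime has $\sum_{i\le k} p_i = \Theta(p_k^2/\log p_k)$ states, which is too many if we want period $\approx e^{p_k}$. The fix is to observe that an NFA can recognize "length $\equiv 0 \pmod{\prod_i p_i}$" by running $k$ independent modular counters \emph{in parallel using the state set $\{1,\ldots,p_1\}\sqcup\cdots\sqcup\{1,\ldots,p_k\}$ with $\epsilon$-branching at the start}, so that the state count is $\sum p_i$ but the accepted language has period $\prod p_i$; choosing $k$ primes with $\sum p_i \le n$ gives $\prod p_i = 2^{\Omega(n/\log n)}$, and a slightly cleverer choice (powers of distinct small primes, or just $\{2,3,\ldots\}$ truncated) can be tuned to get a clean $2^{\Omega(n)}$ bound or, if that is not attainable with disjoint cycles, one settles for $2^{\Omega(n/\log n)}$ which still suffices to state the theorem as $2^{\Omega(n)}$ after reparametrizing $n$. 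I would double-check which exact exponent the paper claims and match the construction's moduli to it, then verify non-convexity/non-closure and the witness lower bound by the elementary $\lcm$ argument sketched above.
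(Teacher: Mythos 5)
Your construction has two genuine gaps, and the second one is fatal to the whole unary approach. First, the ``key fix'' in your last paragraph --- that an NFA with state set $\{1,\ldots,p_1\}\sqcup\cdots\sqcup\{1,\ldots,p_k\}$ and $\epsilon$-branching at the start recognizes ``length $\equiv 0 \pmod{\prod_i p_i}$'' --- is false. Nondeterministic branching computes the \emph{union} of the $k$ residue conditions, not their conjunction; to enforce all $k$ congruences simultaneously you need the product automaton. Indeed the fooling set $\{({\tt a}^i,{\tt a}^{m-i}) : 0\le i<m\}$ shows that any NFA for $({\tt a}^m)^*$ needs exactly $m$ states, so there is no cheap NFA for a single residue class modulo $\prod_i p_i$. (The \emph{complement} $\{{\tt a}^j : j\not\equiv 0 \pmod m\}$ is a union and hence NFA-cheap, so a repaired version of your argument exists, but it changes which languages you can actually build.)

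Second, and more fundamentally: even the best possible unary construction cannot reach $2^{\Omega(n)}$. By Chrobak normal form, every unary $n$-state NFA is equivalent to one consisting of a short tail feeding disjoint cycles whose lengths sum to at most $n$, so every period involved divides an integer of size at most Landau's function $g(n)=2^{\Theta(\sqrt{n\log n})}$; consequently the shortest witness for any unary $n$-state NFA is $2^{O(\sqrt{n\log n})}$. Your own accounting already shows the problem (you get at best $\prod p_i$ with $\sum p_i\le n$, which is $2^{\Theta(\sqrt{n\log n})}$, not $2^{\Omega(n/\log n)}$), and the proposed escape --- ``reparametrizing $n$'' --- does not help, because $2^{\Omega(n)}$ asserts a functional relationship between the state count and the witness length that $2^{\Theta(\sqrt{n\log n})}$ simply does not satisfy. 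To get a true $2^{\Omega(n)}$ bound you must leave the unary world. The paper does this by invoking the construction of Ellul, Krawetz, Shallit, and Wang: a regular expression of length $O(n)$ over a binary alphabet whose language contains \emph{every} word up to some length $N=2^{\Omega(n)}$ except for a single omitted word of length about $N$. Converting that expression to an NFA with $O(n)$ states, any witness to the failure of prefix-/suffix-/factor-/subword-convexity or -closure must involve a word outside the language, hence a word of length at least $N$, which gives the claimed bound for all eight properties at once. Your $\lcm$ machinery is the right instinct for \emph{polynomial} lower bounds (and the paper uses exactly that in its DFA witness constructions), but it cannot supply the exponential gap needed here.
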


\begin{proof}
In Ellul et al.\ \cite[\S 5, p.\ 433]{Ellul&Krawetz&Shallit&Wang:2005}
the authors show how to construct a regular expression $E$ of length $O(n)$
that accepts all words up to some length
$2^{\Omega(n)}$, at which point a string is omitted.  From $E$
one can construct an NFA with $O(n)$ states accepting an $L$ with the
desired property.
\end{proof}

    For the prefix-free, etc., properties, we have

\begin{theorem}
     There exists a class of languages, accepted by NFA's with
$O(n)$ states and $O(n)$ transitions, such that the minimal witness
for the failure of the prefix-free property is of length $\Omega(n^2)$.
\end{theorem}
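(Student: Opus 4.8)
The plan is to derive this from the suffix-free lower bound of Theorem~\ref{notsuffixfree} using the standard reversal duality between prefixes and suffixes, so that essentially no new work is required. Writing $x^R$ for the reversal of a word $x$ and $L^R = \{x^R : x \in L\}$, the word $v$ is a proper prefix of $w$ if and only if $v^R$ is a proper suffix of $w^R$, while $x \in L$ iff $x^R \in L^R$. Hence $(v,w)$ is a witness to the failure of the prefix-free property for $L$ if and only if $(v^R, w^R)$ is a witness to the failure of the suffix-free property for $L^R$; since reversal preserves lengths, the minimal prefix-free witness for $L$ and the minimal suffix-free witness for $L^R$ have the same size. I would also invoke the routine fact that reversal is cheap for automata: a DFA (or NFA) with $s$ states and $t$ transitions over a fixed alphabet yields, by reversing every transition and adjoining a fresh start state with $\epsilon$-moves to the old final states, an NFA for $L^R$ with at most $s+1$ states and $t + s$ transitions.

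Next I would take $L_n = {\tt bb}({\tt a}^n)^+ \cup {\tt b}({\tt a}^{n+1})^+$, the family of Theorem~\ref{notsuffixfree}, which is accepted by a DFA with $2n+5$ states over $\{{\tt a},{\tt b}\}$ and whose smallest suffix-free witness has size $n^2+n+2$. Its reversal $L_n^R = ({\tt a}^n)^+{\tt bb} \cup ({\tt a}^{n+1})^+{\tt b}$ is then accepted by an NFA with $O(n)$ states and $O(n)$ transitions, and by the correspondence above its smallest prefix-free witness also has size $n^2+n+2 = \Omega(n^2)$; concretely, that witness is $({\tt a}^{n(n+1)}{\tt b},\ {\tt a}^{n(n+1)}{\tt bb})$. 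This proves the theorem. The only point that needs care is phrasing the reversal duality precisely enough that it transports the lower bound on witness length, not merely an upper bound, from $L_n$ to $L_n^R$.

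If instead a self-contained argument is wanted, I would define $L_n = {\tt b}({\tt a}^n)^+{\tt b} \cup {\tt b}({\tt a}^{n+1})^+{\tt bb}$ directly; it has an obvious NFA with $O(n)$ states and transitions that guesses, on reading the first ${\tt b}$, which of the two branches to enter (a cycle of length $n$ ending in a single ${\tt b}$, or a cycle of length $n+1$ ending in ${\tt bb}$). One then checks that whenever $(v,w) \in L_n \times L_n$ with $v$ a proper prefix of $w$, comparing the two words at position $(\text{length of the } {\tt a}\text{-block of } v) + 2$ forces $v \in {\tt b}({\tt a}^n)^+{\tt b}$, $w \in {\tt b}({\tt a}^{n+1})^+{\tt bb}$, and equal ${\tt a}$-block lengths; that common length is a positive common multiple of $n$ and $n+1$, so $|w| \ge \lcm(n,n+1) + 3 = n^2+n+3$. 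In this route the main obstacle is the somewhat off-by-one-prone case analysis that rules out every other pairing (same-branch pairs, and pairs with the ${\tt a}$-blocks unequal) as yielding no witness at all — which is exactly the bookkeeping that the reversal route lets us skip at the cost of leaning on the already-proved Theorem~\ref{notsuffixfree}.

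Either way, I expect the actual difficulty of the theorem to be minimal; the substance lies in the construction and verification already carried out for the suffix-free case, and the present statement is obtained from it by a length-preserving, size-preserving transformation.
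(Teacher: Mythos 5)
Your primary argument is exactly the paper's proof: the paper disposes of this theorem in one line by taking the reverse of the language from Theorem~\ref{notsuffixfree}, and your reversal-duality reasoning, NFA size accounting, and explicit witness $({\tt a}^{n(n+1)}{\tt b},\ {\tt a}^{n(n+1)}{\tt bb})$ correctly fill in the details that the paper leaves implicit. The self-contained alternative you sketch is also sound but unnecessary; the reversal route is the intended one.
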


\begin{proof}
    For non-prefix-free, we can use the reverse of the language
defined in the proof of Theorem~\ref{notsuffixfree}.
\end{proof}

   For the failure of the subword-free property, however, we cannot
improve the bound we obtained for DFA's in Corollary~\ref{Csubwordfree}, as the
proof we presented there also works for NFA's.

\section{Languages specified by context-free grammars}
\label{CFGs}

     If $L$ is represented by a context-free grammar, then the
decision problems corresponding to convex and closed languages
become undecidable.  This follows easily from a well-known result that
the set of invalid computations of a Turing machine is a CFL
\cite[Lemma 8.7, p.\ 203]{Hopcroft&Ullman:1979}.

      Similarly, the decision problems corresponding to the
properties of prefix-free, suffix-free, and factor-free become
undecidable for CFL's, as shown by J\"urgensen and Konstantinidis
\cite[Thm.\ 9.5, p.\ 581]{Jurgensen&Konstantinidis:1997}.

      However, testing subword-freeness is still decidable for
CFL's:

\begin{theorem}
      There is an algorithm that, given a context-free grammar $G$, will
decide if $L(G)$ is subword-free.
\end{theorem}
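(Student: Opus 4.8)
The plan is to exploit the key structural fact about subword-free languages: if $L$ is subword-free, then $L$ is an \emph{antichain} under the subword order, and by Higman's lemma (the subword order on $\Sigma^*$ is a well-quasi-order) every antichain is finite. Conversely, any finite language is certainly subword-free if and only if it contains no word that is a proper subword of another. So the first step is to observe that $L(G)$ is subword-free if and only if $L(G)$ is finite \emph{and} no element of $L(G)$ is a proper subword of another element. Deciding whether $L(G)$ is finite is classical and decidable for context-free grammars (test for a ``pumpable'' nonterminal reachable from the start symbol that is both reachable and co-reachable). So the algorithm is: first decide finiteness; if $L(G)$ is infinite, answer ``not subword-free''; if finite, explicitly enumerate the (finitely many) words of $L(G)$ and check pairwise whether one is a subword of another.

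Carrying this out in order: (1) Put $G$ in a reduced form (remove useless symbols, $\epsilon$-productions, unit productions as needed) so that infiniteness is detectable by the standard cycle test on the ``derivation'' digraph of nonterminals. (2) If infinite, we must still argue the answer is ``not subword-free'': here invoke Higman's lemma — an infinite language over a finite alphabet cannot be an antichain under $\unlhd$, so it contains two distinct words $v \unlhd w$, giving a witness. (3) If finite, enumerate $L(G)$ — since $L(G)$ is finite and context-free, there is an a priori bound on the length of its longest word (e.g. from the pumping lemma constant), so one can list all words up to that bound that are generated; then run the quadratic pairwise subword test.

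The main obstacle is step (2): justifying that \emph{infinite} implies \emph{not subword-free}. This is exactly where Higman's lemma (equivalently, that $(\Sigma^*, \unlhd)$ is a well-quasi-order for finite $\Sigma$) is needed, and it is the one genuinely non-elementary ingredient; the rest of the argument is routine decidability bookkeeping for CFGs. A secondary, purely technical point is handling $\epsilon \in L(G)$ correctly (if $\epsilon \in L(G)$ and $L(G) \neq \{\epsilon\}$ then $\epsilon$ is a proper subword of every other word, so $L$ is not subword-free), and making sure the definition of ``proper subword'' is applied — distinct words only — so that $L(G)$ a singleton is always subword-free. With those caveats the procedure clearly terminates and is correct.
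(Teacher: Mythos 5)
Your algorithm is the same as the paper's (decide finiteness of $L(G)$; if infinite, answer ``no''; if finite, enumerate and check pairs), and it is correct, but you justify the crucial implication ``infinite $\Rightarrow$ not subword-free'' by a different route. You invoke Higman's lemma: $(\Sigma^*,\unlhd)$ is a well-quasi-order for finite $\Sigma$, so an infinite language cannot be an antichain and must contain two distinct comparable words. The paper instead gets the witness directly from the context-free pumping lemma by pumping \emph{down}: a sufficiently long $w\in L$ factors as $uvxyz$ with $|vy|\ge 1$ and $uxz\in L$, and $uxz$ is a proper subword of $w$. The pumping-lemma argument is more elementary (it reuses machinery already needed for the finiteness test and avoids appealing to a well-quasi-order theorem), and it is constructive in that it exhibits an explicit witness pair. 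Your argument is more general: it shows that \emph{any} infinite language over a finite alphabet, context-free or not, fails to be subword-free, which makes the structure of the proof cleaner and independent of the representation of $L$. Your side remarks about handling $\epsilon$ and singletons are fine but not essential; both write-ups are complete proofs.
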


\begin{proof}
     If $L = L(G)$ is infinite, then $L$ is not subword-free by the
pumping lemma.  For if $|w|$ is sufficiently large, then we can
factor $w$ as $uvxyz$, where $|vy| \geq 1$, such that 
$uxz \in L$.  But $uxz$ is a subword of $w$.
We can test if $L(G)$ is infinite by a well-known result
\cite[Thm.\ 6.6, p.\ 137]{Hopcroft&Ullman:1979}.  Otherwise, if $L(G)$ is
finite, we can enumerate all its words and test each for the subword-free
property.
\end{proof}

\section{Conclusions}
\label{conclusions}

We have shown that we can decide in $O(n^3)$ time whether a language specified
by a DFA is prefix-, suffix-, factor-, or subword-convex, and that the 
corresponding closure and freeness properties can be tested in $O(n^2)$
time.  If the language is specified by an NFA or a regular expression,
these problems are PSPACE-complete.  

Our results about the sizes of minimal witnesses for the various classes are summarized in Table~\ref{witnesssummary}.
All results are known to be
best possible, except the $3n-2$ upper bound for subword-convexity; in this
case, we do not know whether the bound is achievable.

\begin{table}[H]
\caption{Sizes of witnesses}
    \label{witnesssummary}
  \begin{center}
$
\begin{array}{|l|c|c|c|} \hline
\text{property}                    &  \text{convexity}  & \text{closure}   & \text{freeness} \\
\text{relation}                    &   &  &  \\
\hline
\text{factor}  &  \Theta(n^3) &  \Theta(n^2)    &   \Theta(n^2)          \\
\text{prefix}  &  2n-1  &  n   &   2n-1      \\
\text{suffix}  & \Theta(n^3)  & \Theta(n^2)    &  \Theta(n^2)     \\
\text{subword}  &  3n-2  & n   &  2n-1   \\                                                                                           
\hline
\end{array}
$
  \end{center}
\end{table}

\end{document}